\DeclareMathOperator*{\polylog}{polylog}
\newcommand{\LE}{\textit{LE}}
\newcommand{\VDLE}{\textit{Decomp}}
\newcommand{\DT}{\textit{DT}}
\newcommand{\EMST}{\textit{EMST}}
\newcommand{\CH}{\textit{CH}}
\newcommand{\RNG}{\textit{RNG}}
\def\RR{\ensuremath{\mathbb R}}
\newcommand{\DEG}{\textrm{deg}}
\newcommand{\RAY}[1]{\overrightarrow{#1}}
\newcommand{\Dwrong}{D}%_{\mbox{\scriptsize\rm wrong}}}
\newcommand{\dCross}{d_{\mbox{\scriptsize\rm cross}}}
\newcommand{\dVio}{d_{\mbox{\scriptsize\rm vio}}}
\newcommand{\Dflip}{D_{\mbox{\scriptsize\rm flip}}}
\newcommand{\Dcross}{D_{\mbox{\scriptsize\rm cross}}}
\newcommand{\Dvio}{D_{\mbox{\scriptsize\rm vio}}}
\newcommand{\dCrossMST}{d_{\mbox{\scriptsize\rm cross-emst}}}
\newcommand{\Dlocal}{D_{\mbox{\scriptsize\rm local}}}
\newcommand{\Dmst}{D_{\mbox{\scriptsize\rm emst}}}
\newcommand{\Drng}{D_{\mbox{\scriptsize\rm rng}}}
\newcommand{\Dtree}{D_T}
\newtheorem{theorem}{Theorem}[section]
\newtheorem{lemma}[theorem]{Lemma}
\newtheorem{corollary}[theorem]{Corollary}
\crefname{theorem}{Theorem}{Theorems}
\crefname{lemma}{Lemma}{Lemmas}
\crefname{corollary}{Corollary}{Corollaries}
\crefname{section}{Section}{Sections}
\definecolor{defblueee}{rgb}{0.1,0.4,0.6} % bit darker
\def\DEF#1{\textbf{\emph{\textcolor{defblueee}{#1}}}}
\title{Delaunay Triangulations with Predictions}
\author{Sergio Cabello\thanks{Faculty of Mathematics and Physics, University of Ljubljana, Slovenia and Institute of Mathematics, Physics and Mechanics, Slovenia, Email: \texttt{sergio.cabello@fmf.uni-lj.si}
} \and 
Timothy M. Chan\thanks{Siebel School of Computing and Data Science, University of Illinois at Urbana-Champaign, USA, Email: \texttt{tmc@illinois.edu}} \and Panos Giannopoulos\thanks{Department of Computer Science, City St George's, University of London, UK, Email: \texttt{Panos.Giannopoulos@city.ac.uk}}
}
\begin{document}

\maketitle

\thispagestyle{empty}

\begin{abstract}
We investigate \emph{algorithms with predictions} in computational geometry, 
specifically focusing on the basic problem of computing 2D Delaunay triangulations.
Given a set $P$ of $n$ points in the plane and a triangulation $G$ that serves
as a ``prediction'' of the Delaunay triangulation, we would like to use $G$ to compute
the correct Delaunay triangulation $\textit{DT}(P)$ more quickly when $G$ is ``close'' to $\textit{DT}(P)$.  We obtain a variety of results of this type, under different deterministic and probabilistic settings, including the following:
\begin{enumerate}
\item Define $D$ to be the number of edges in $G$ that are not in $\textit{DT}(P)$.
We present a deterministic algorithm to compute $\textit{DT}(P)$ from $G$ 
in $O(n + D\log^3 n)$ time, and a randomized algorithm in $O(n+D\log n)$ expected time,
the latter of which is optimal in terms of $D$.
\item Let $R$ be a random subset of the edges of $\textit{DT}(P)$, where each edge is chosen independently with probability $\rho$.
Suppose $G$ is any triangulation of $P$ that contains $R$.
We present an algorithm to compute $\textit{DT}(P)$ from $G$ 
in $O(n\log\log n + n\log(1/\rho))$ time with high probability.
\item Define $d_{\mbox{\scriptsize\rm vio}}$ to be the maximum number of points of $P$ strictly inside the circumcircle of a triangle in $G$ (the number is 0 if $G$ is equal to $\textit{DT}(P)$).
We present a deterministic algorithm to compute $\textit{DT}(P)$ from $G$  
in $O(n\log^*n + n\log d_{\mbox{\scriptsize\rm vio}})$ time. 
\end{enumerate}
We also obtain results in similar settings for related problems such as 2D Euclidean minimum spanning trees, and hope that our work will open up a fruitful line of future
research.
\end{abstract}

\newpage
\setcounter{page}{1}

\section{Introduction}

\subsection{Towards computational geometry with predictions}

\emph{Algorithms with predictions} have received considerable attention in recent years.
Given a problem instance and a prediction for the solution, we would like to devise an algorithm that solves the problem in such a way that the algorithm performs ``better'' when the prediction is ``close'' to the correct solution.  ``Better'' may be in terms of running time, or approximation factor (for optimization problems), or competitive ratio (for on-line problems).  Different measures of ``closeness''  may be used depending on the problem, or alternatively we may assume that predictions obey some probabilistic models. The paradigm has been applied to problems in diverse areas, such as scheduling, classical graph problems, resource allocation, mechanism design, facility location, and graph searching to name a few; see the survey in~\cite{MitzenmacherV_BeyondW_CAA20} and the regularly updated online list of papers in~\cite{Alg_with_pred}.

In contrast, work on algorithms with predictions in \emph{computational geometry}
have been few and far between. The goal of this paper is to promote this direction of research.  Primarily, we concentrate on the use of predictions to improve the running time of geometric algorithms.

As many fundamental problems in classical computational geometry may be viewed as two- or higher-dimensional generalizations of standard one-dimensional problems like sorting and searching, we start our discussion with the 1D sorting problem.  Here, algorithms with predictions are related to the historically well-explored topic of \emph{adaptive sorting} algorithms~\cite{Estivill-CastroW92}---if we are given a sequence that is already close to sorted, we can sort faster.
There are different ways to measure distance to sortedness.  For example:
\begin{itemize}
\item Consider $D_{\textrm{inv}}$, the number of inversions in the given sequence (i.e., the number of out-of-order pairs).  In terms of this parameter, good old 
insertion-sort runs in $O(n + D_{\textrm{inv}})$ time and already beats 
$O(n\log n)$ algorithms when $D_{\textrm{inv}}=O(n)$. Better
algorithms are known running in $\Theta(n\log(1+D_{\textrm{inv}}/n))$ time, which is optimal
in comparison-based models.
\item Or consider $D_{\textrm{rem}}$, the minimum number of elements to remove (the ``wrong'' elements) so that the remaining list is sorted.  Here, the optimal time bound is known to be $\Theta(n+D_{\textrm{rem}}\log n)$. 
\item Or consider $D_{\textrm{runs}}$, the number of runs (i.e., the number of increasing contiguous subsequence that the input can be partitioned into).
Here, the optimal time bound is known to be $\Theta(n\log D_{\textrm{runs}})$.  Unfortunately, even when $D_{\textrm{runs}}$ is as low as
$n^{0.1}$, it is not possible to beat $n\log n$ under this distance measure.
\end{itemize}
The survey in~\cite{Estivill-CastroW92} gives an extensive background and a (long) list of other parameters that have been considered in the algorithms literature.  Adaptive sorting  has continued to be studied (some random recent examples include
\cite{BarbayN13, MunroW18, AugerJNP18}), and some recent papers~\cite{BaiC23,LuRSZ21} examine newer models for sorting with
predictions.

% while more recent work can be found in~\cite{BarbayN13, MunroW18, AugerJNP18, BussKnop19, Juge24, GhasemiJKY25}.
%\note{P: Added some recent references on adaptive sorting above. Is there a danger that they get take our paper for review ? :)  \note{T: Are the models in the new papers on sorting with predictions very different? If so, it might be confusing to mention them. It is different but not totally. As mentioned in this paper https://arxiv.org/abs/2311.00749, on Sorting with Predictions, which appeared in NIPS23, the prediction is on the position of each element - and so the error measure is element-wise (displacement error) as opposed to whole array. This allows for more 'elaborate/detailed' analysis. Btw, adaptive sorting is also mentioned in the paper as related but different..}}

\subsection{The 2D Delaunay triangulation problem}

Can we obtain similar results for geometric problems in 2D?  In this paper, we pick arguably one of the most basic problems in computational geometry to start this study: namely, the computation of the \emph{Delaunay triangulation}, denoted $\DT(P)$,
of a set $P$ of $n$ points in the plane~\cite{BergCKO08}.  One way to define $\DT(P)$ is via the \emph{empty circle property}: three points $p,p',p''\in P$ form a Delaunay triangle iff
the circumcircle through $p,p',p''$ does not have any points of $P$ strictly inside.  (Throughout this paper, we assume for simplicity that the input is in general position, e.g., no 4 input points are co-circular.)

Delaunay triangulations are combinatorially equivalent to \emph{Voronoi diagrams} (under the Euclidean metric) by duality.  (It is a little more convenient to work with Delaunay triangulations than Voronoi diagrams when specifying the prediction $G$ or defining closeness, and so all our results will be stated in terms of the former.)
Delaunay triangulations and Voronoi diagrams are among the most ubiquitous geometric structures, with countless applications, and lie at the very core of computational geometry. 

In the prediction setting, we are given not just the input point set $P$, but also
a triangulation $G$ of $P$ (i.e., a decomposition of the convex hull of $P$ into nonoverlapping triangles, using all the 
points of $P$ as vertices, but no Steiner points). The goal is to devise an algorithm to compute $\DT(P)$ faster, when the ``distance'' between $G$ and $\DT(P)$ is small.
In the traditional setting, several standard textbook algorithms are known for computing the Delaunay triangulation in $O(n\log n)$ time, e.g., via divide-and-conquer, plane sweep, and randomized incremental construction~\cite{BoissonatY98, BergCKO08,AurenhammerK00,Fortune04}, which is optimal in the worst case.  Without closeness, an arbitrary triangulation $G$ of $P$ still provides some extra information about the input, but not enough to beat $n\log n$: it is not difficult to see that
$\Omega(n\log n)$ remains a lower bound by reduction from 1D sorting (see Figure~\ref{fig:reduce:sort}).  We want to do better than $n\log n$ when $G$ is close.
As we will see, designing faster adaptive algorithms for 2D Delaunay triangulations  is  technically more difficult, and so more interesting, than for 1D sorting.

\begin{figure}\centering
		\includegraphics[scale=1,page=1]{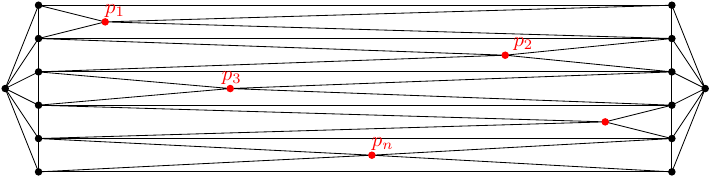}
		\caption{Sorting the $x$-coordinates of $p_1,\ldots,p_n$ reduces to computing the Delaunay triangulation of the $O(n)$ vertices of the triangulation shown (assuming a rescaling to compress the $y$-coordinates).}
		\label{fig:reduce:sort}
	\end{figure}

\subsection{Two specific questions}

\begin{table}\centering
\begin{tabular}{|l|l|}\hline
parameter & definition\\\hline\hline
$\Dwrong$ & number of edges in $G$ that are not in $\DT(P)$\\
& (or equivalently, number of edges in $\DT(P)$ that are not in $G$)\\[1ex]\hline
$\Dflip$ & minimum number of \emph{edge-flips} to transform $G$ into $\DT(P)$\\
& (in a flip, an edge $pp'$ incident to triangles $\triangle pp'p''$ and $\triangle pp'p'''$ is replaced\\
& by a new edge $p''p'''$, assuming that $pp''p'p'''$ is a convex quadrilateral)\\[1ex]\hline
$\Dcross$ & number of edge crossings between $G$ and $\DT(P)$\\[1ex]\hline
$\dCross$ & maximum number of edges of $\DT(P)$ crossed by an edge of $G$\\[1ex]\hline
$\Dvio$ & number of \emph{violations}, i.e., $(p''',\triangle pp'p'')$ with $p'''\in P$ and $\triangle pp'p''$ in $\DT(P)$\\
& such that $p'''$ lies strictly inside the circumcircle through $p,p',p''$\\[1ex]\hline
$\dVio$ & maximum number of points of $P$ strictly inside\\
& the circumcircle through a triangle in $G$\\[1ex]\hline
$\Dlocal$ & number of \emph{non-locally-Delaunay} edges in $G$\\
& (i.e., edges $pp'$ incident to triangles $\triangle pp'p''$ and $\triangle pp'p'''$ in $G$ such that\\
& $p'''$ is inside the circumcircle through $p,p',p''$)\\\hline 
\end{tabular}
\caption{Different measures of closeness between a given triangulation $G$ and the
Delaunay triangulation $\DT(P)$.}\label{tbl:def}
\end{table}

There are a number of natural parameters to measure distance to the Delaunay triangulation, as we define in Table~\ref{tbl:def}.  Some of these definitions (e.g., $\Dwrong,\Dflip,\Dcross$) may be used to compare any two triangulations, while others (e.g., $\Dvio,\Dlocal$) are specific to the Delaunay triangulation. For all these parameters, the value is 0 if and only if $G=\DT(P)$.  Some relationships between these parameters are obvious, e.g., $\Dcross\le O(n\dCross)$ and $\Dvio\le O(n\dVio)$, and it is known that
\[ \Dlocal\le \Dwrong\le\Dflip\le \Dcross.
\]
The inequality $\Dlocal\le\Dwrong$ holds because all non-locally Delaunay edges can't be actual Delaunay edges, while the last inequality $\Dflip\le\Dcross$ is nontrivial and was proved 
by Hanke et al.~\cite{HankeOS96}.

There are examples when $\Dlocal$ is a constant and $\Dwrong$ is $\Omega(n)$.
In a very vague sense, $\Dlocal$ is more similar to $D_{\textrm{runs}}$ (the number of runs) for sorting, while
$\Dwrong$ is more similar to $D_{\textrm{rem}}$ and $\Dflip$ and $\Dcross$ are more similar to $D_{\textrm{inv}}$.   Indeed, it is not difficult to prove an $\Omega(n\log\Dlocal)$ lower bound for computing the Delaunay triangulation in terms of $\Dlocal$, by reduction from sorting with $\Theta(\Dlocal)$ runs, or equivalently, merging $\Theta(\Dlocal)$ sorted lists (see Figure~\ref{fig:reduce:runs}).  So, $\Dlocal$ may not be an ideal parameter for the problem (even when $\Dlocal$ is as small as $n^{0.1}$, we can't hope to beat $n\log n$).

\begin{figure}\centering
		\includegraphics[scale=1.1,page=2]{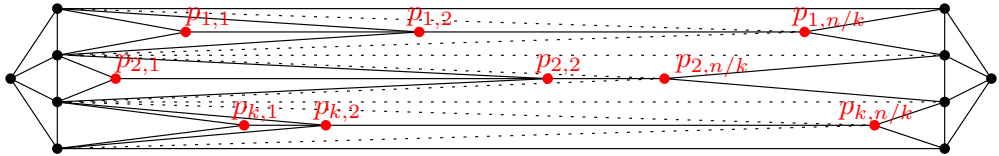}
		\caption{Merging $k$ $x$-sorted lists $\langle p_{1,1},\ldots,p_{1,n/k}\rangle$, \ldots,
		   $\langle p_{k,1},\ldots,p_{k,n/k}\rangle$ reduces to computing the Delaunay triangulation of the $O(n)$ vertices of the triangulation shown (assuming a rescaling to compress the $y$-coordinates).  This triangulation has $\Dlocal=\Theta(k)$ non-locally-Delaunay edges shown in dotted lines.}
		\label{fig:reduce:runs}
	\end{figure}

Arguably the simplest and most natural of all these parameters is $\Dwrong$, the number of ``wrong'' edges in $G$ that are not in $\DT(P)$ (which is the same as the number of edges in $\DT(P)$ that are in $G$, since all triangulations of the same point set have the same number of edges). To initiate the study of algorithms with predictions for the Delaunay triangulation, we thus begin with the following simple question:
\begin{center}
\it Question 1: what is the best 2D Delaunay triangulation algorithm with running time analyzed as a function of $n$ and the parameter $\Dwrong$?
\end{center}
This simple question already turns out to be quite challenging. An upper bound on time complexity in terms of $\Dwrong$ would automatically imply an upper bound in terms of $\Dflip$ and $\Dcross$ by the above inequalities.  We can ask a similar question for other parameters such as
$\dCross$ and $\dVio$.

Although Question 1 provides one way to model predictions, often errors are probabilistic in nature.  Thus, for various problems, algorithms with predictions have sometimes been studied under probabilistic models. 
To define a simple probabilistic model of prediction for the Delaunay triangulation, one could just consider tossing an independent coin to determine whether an edge of $\DT(P)$ should be placed in $G$ (similar to a model studied by Cohen-Addad et al.~\cite{Cohen-Addadd0LP24} for the max-cut problem). But there is one subtlety for our problem:
the input $G$ is supposed to be a triangulation, containing wrong edges (but not containing self-crossings).
We propose the following 2-stage model: first, each edge of $\DT(P)$ is placed in a set $R$ independently with a given probability $\rho$; second, $R$ is completed to a triangulation of $P$, but we make no assumption about how this second step is done, i.e., $R$ can be triangulated \emph{adversarially} (this makes results in this model a bit more general).

\begin{center}
\it Question 2: under the above probabilistic model, what is the best 2D Delaunay triangulation algorithm with running time analyzed as a function of $n$ and $\rho$?
\end{center}
In this model, the expected number of ``wrong'' edges $\Dwrong$ is upper-bounded by  $(1-\rho)m$, where $m\le 3n-6$ is the number of edges in any triangulation of $P$.  The hope is that the problem may be solved faster under the probabilistic model than by applying a worst-case algorithm in terms of the parameter $\Dwrong$.

We could consider other probabilistic models too.  For example, one variant is to pick a random edge in $\DT(P)$, perform a flip if its two incident triangles form a convex quadrilateral (or do nothing otherwise), and repeat for $(1-\rho)m$ steps.
In this model, the flip distance $\Dflip$ is guaranteed to be at most $(1-\rho)m$; in contrast, the above model allows for potentially larger flip distances and so is in some sense stronger.

\subsection{Our contributions}

We obtain a number of results addressing Questions 1 and 2 and related questions:

\begin{enumerate}
\item We present several different algorithms for computing $\DT(P)$ from $G$ in time sensitive to the parameter $\Dwrong$. One deterministic algorithm runs in 
$O(n+\Dwrong\log^4n)$ time, %, using dynamic geometric data structures,
and another deterministic algorithm runs in 
$O(n+\Dwrong\log^3n)$ time.
%, using planar graph separators or $r$-divisions. 
A randomized algorithm runs in $O((n+\Dwrong\log n)\log^*n)$ time, where
$\log^*$ is the (very slow-growing) iterated logarithmic function.
Our final randomized algorithm eliminates the $\log^* n$ factor and runs in 
$O(n+\Dwrong\log n)$ expected time, which is optimal  since an $\Omega(\max\{n,\Dwrong\log\Dwrong\})=\Omega(n+\Dwrong\log n)$ lower bound follows by reduction from sorting
$\Dwrong$ numbers (we can just embed the configuration in Figure~\ref{fig:reduce:sort} on $\Theta(\Dwrong)$ points into a larger Delaunay triangulation).  Thus, we have completely answered Question~1 (modulo the use of randomization).  All of our algorithms do not require knowing  the value of $\Dwrong$ a priori.

It is interesting to note that our optimal randomized algorithm runs in \emph{linear} time even when the number $\Dwrong$ of ``wrong'' edges is as large as $n/\log n$.

An immediate consequence is that the problem can be solved in $O(n+\Dflip\log n)$ expected time---we get linear running time as long as the flip distance is at most $n/\log n$.

\item On the probabilistic model in Question 2, we obtain another algorithm for computing $\DT(P)$ from $G$ in $O(n\log\log n + n\log(1/\rho))$ time with high probability (the algorithm is always correct).  For any constant $\rho$, the running time is thus almost  linear except for the small $\log\log n$ factor. This is notable when compared against our $O(n+\Dwrong\log n)$ algorithm,  since in this probabilistic setting, $\Dwrong$ is expected to be near $(1-\rho)m$.  For example, if we set $\rho=0.1$ 
(or even $\rho=1/\log n$), we can still substantially beat $n\log n$ even though a majority of the edges are wrong!

We can also get a similar result for the random flip model mentioned earlier.

\item We also obtain results sensitive to other parameters.
For example, in terms of $\dVio$, we describe a deterministic algorithm for computing $\DT(P)$ from $G$ in $O(n\log^*n + n\log\dVio)$ time.  Note that there are examples where $\dVio=1$ but $\Dwrong=\Omega(n)$ (i.e., a fraction of the edges are wrong!), so our $\Dwrong$-sensitive algorithms can't be used directly.
The parameter $\dVio$ is related
to a notion called \emph{order-$k$ Delaunay triangulation}%
\footnote{
The name is somewhat misleading as it is not equivalent to the \emph{order-$k$ Voronoi diagram} in the dual. The set of all candidate triangles for order-$k$ Delaunay triangulations does correspond to the vertices of the order-$k$ Voronoi diagram, but an order-$k$ Delaunay triangulation is not unique, comprising of a subset of the candidate triangles.
} 
in the literature~\cite{GudmundssonHK02} (with motivation very different from ours):
a triangulation $G$ is an ``order-$k$ Delaunay triangulation'' precisely when $\dVio\le k$.
Our result implies the following interesting consequence: given an order-$O(1)$ Delaunay triangulation, we can recover the actual Delaunay triangulation in almost linear ($O(n\log^*n)$) time!

\item We similarly describe an $O(n\log^*n + n\log\dCross)$-time deterministic algorithm for computing $\DT(P)$ from $G$.  Again, there are examples where $\dCross=1$ but $\Dwrong=\Omega(n)$.  Although the parameter $\dCross$ may not appear to be the most natural at first, it turns out to be the key that allows us to obtain items 2 and 3 above in an elegant, unified way: it is not difficult to show that $\dCross = O((1/\rho)\log n)$ with high probability in the probabilistic model, and we are also able to show that $\dCross \le O(\dVio^2)$ always holds.

\item Although we have focused on the Delaunay triangulation problem,
we can obtain similar results for a few related problems.
For example, consider the problem of computing the Euclidean minimum spanning tree $\EMST(P)$ of a set $P$ of $n$ points in 2D\@.  Given a spanning tree $T$, we can compute
$\EMST(P)$ from $T$ in $O((n + \Dmst\log n)\log^*n)$ expected time, where $\Dmst$ is the number of ``wrong'' edges in $T$ that are not in $\EMST(P)$.
In fact, this result, and the idea of using a spanning tree in the first place, will be instrumental to the eventual design of our $O(n+\Dwrong\log n)$-time randomized algorithm for Delaunay triangulations.

We get similar results for EMST in the probabilistic model or in terms of other parameters, and also for computing other related geometric graphs such as  the
relative neighborhood graph and the Gabriel graph in the plane~\cite{JaromczykT92}.  We also get similar results
for the problem of computing the convex hull of $n$ points in 3D that are in convex position (by a standard lifting map, Delaunay triangulations in 2D maps to convex hulls in 3D)---the convex position assumption is necessary since one could show an $\Omega(n\log n)$ lower bound for checking whether $n/2$ points are all inside the convex hull of another $n/2$ points, even when we are given the correct convex hull.
\end{enumerate}

\subsection{Related work in computational geometry}

We are not aware of any prior work in computational geometry that explicitly addresses the paradigm of algorithms with predictions, except for a recent paper 
by Cabello and Giannopoulos~\cite{CabelloG24} on an on-line geometric search problem (where predictions are used to obtain competitive ratio bounds).
Implicitly related are \emph{adaptive} data structures, and these have been studied in computational geometry; as one example, see Iacono and Langerman's work~\cite{IaconoL03} on ``proximate'' planar point location (where a query can be answered faster whe given a finger near the answer).

Our work is related to, and may be regarded as a generalizaton of, work on the
\emph{verification} of geometric structures: how to check whether a given answer is indeed correct, using a simpler or faster algorithm than an algorithm for computing the answer.
This line of work has received much attention from computational geometers in the 1990s~\cite{DevillersLPT98,McConnellMNS11,MehlhornNSSSSU99}, motivated by algorithm engineering.  For the Delaunay triangulation problem, it is easy to verify whether $G$ is
equal to $\DT(G)$ in linear time, by checking that all edges are locally Delaunay.  Verifying correctness corresponds to the case $\Dwrong=0$, and
our results provides more powerful extensions: we can correct errors (as many as $n/\log n$ wrong edges) in linear time.  Viewing our results in the lens of \emph{correcting errors in the output} provides further motivation, even for readers not necessarily interested in learning-augmented algorithms.

Some recent work such as \cite{EppsteinGS25} studies geometric algorithms in models that tolerate errors in the computation process itself (e.g., noisy or faulty comparisons), but our work seems orthogonal to this research direction.  On the other hand, there has been considerable prior work in computational geometry on models that capture \emph{errors or imprecisions in the input}.  For example, in one result by L\"offler and Snoeyink~\cite{LofflerS10}, an imprecise input point is modelled as a range, specifically, a unit disk (whose center represents its estimated position);
it is shown that $n$ disjoint unit disks in the plane can be preprocessed in $O(n\log n)$ time, so that given $n$ actual points lying in these $n$ disks, their Delaunay triangulation can be computed in $O(n)$ time. Similar results have been obtained for different types of regions and different problems.  Our model provides an alternative way to address imprecise input: in the preprocessing phase, we can just compute the Delaunay triangulation $G$ of the estimated points; when the actual points $P$ are revealed, we can correct $G$ to obtain $\DT(P)$ using our algorithms.\footnote{
One technical issue is that $G$ may have self-crossings when re-embedded into the actual point set $P$, but
at least one of our algorithms works even when $G$ has self-crossings (see Corollary~\ref{cor:DT:EMST:self:cross}).
%  If the number
%of crossings is small, we can compute them in linear randomized time~\cite{ClarksonCT92,EppsteinGS10},
%eliminate these crossings by deleting some edges, and then retriangulate.
}
 An advantage of our model is its simplicity: we don't need to be given any geometric range or distribution for each input point; rather, accuracy is encapsulated into a single parameter~$\Dwrong$.

Even without relevance to algorithms with paradigms or coping with errors in the input/output,
Question~1 is appealing purely as a classical computational geometry problem. In the literature, a number of papers have explored conditions under which Delaunay triangulations can be computed in linear or almost linear time.
For example,
%Finally, 
Devillers~\cite{Devillers92} showed that given the Euclidean minimum spanning tree of a 2D point set, one can compute the Delaunay triangulation in $O(n\log^*n)$ expected time
(this was later improved to deterministic linear time~\cite{ChinW98});
%---this result turns out to be very illuminating, as we will see.
two Delaunay triangulations $\DT(P_1)$ and $\DT(P_2)$ can be merged into $\DT(P_1\cup P_2)$ in linear time~\cite{Kirkpatrick79};
conversely, a Delaunay triangulation $\DT(P_1\cup P_2)$  can be split into $\DT(P_1)$ and $\DT(P_2)$ in linear  time (by Chazelle et al.~\cite{ChazelleDHMST02}); the
Delaunay triangulation of the vertices of a convex polygon can be computed in linear time (by Aggarwal et al.~\cite{AggarwalGSS89,Chew86});
the Delaunay triangulation for 2D points with integer coordinates can be computed in the same time complexity as sorting in the word RAM model (by Buchin and Mulzer~\cite{BuchinM11}); etc.  
A powerful work by L\"offler and Mulzer~\cite{LofflerM12} unifies many of these results by showing \emph{equivalence} of 2D Delaunay triangulations with quadtree construction and related problems.
Although some of these tools will be useful later, it isn't obvious how they can used to resolve Questions 1 and 2.

%seem potentially helpful, it is far from clear how to use them to solve our problem optimally
%in terms of $n$ and $k$.  Thus, we have identified a problem that is new and interesting from the classical computational geometry perspective.

\subsection{Our techniques}

Although we view our main contributions to be conceptual (introducing a new research direction in computational geometry), our algorithms are also technically interesting in our opinion.  The solutions turn out to be a lot more challenging than 1D adaptive sorting algorithms or the standard textbook Delaunay triangulation algorithms:
\begin{itemize}
\item
Our deterministic $O(n+\Dwrong\log^4n)$-time algorithm (given in Appendix~\ref{app:det}) makes clever use of dynamic geometric data structures.
\item
Our deterministic $O(n+\Dwrong\log^3n)$-time algorithm  (given in Section~\ref{sec:DT:det}) uses planar graph separators or $r$-divisions in the given graph $G$ (for classical computational geometry problems, Clarkson--Shor-style sampling-based divide-and-conquer~\cite{Clarkson92,ClarksonS89,CheongMR} is more commonly used 
%is more commonly done using simplicial partitions or cuttings 
than planar graph separators).
\item
Our randomized $O((n+\Dwrong\log n)\log^*n)$-time algorithm (given in Section~\ref{sec:rand:logstar}) is obtained via an adaptation of the abovementioned randomized $O(n\log^*n)$-time algorithm for computing the Delaunay triangulation from the EMST, by Devillers~\cite{Devillers92} (which in turn was based on the randomized $O(n\log^*n)$-time algorithms by Seidel~\cite{Seidel91} and Clarkson, Tarjan, and Van Wyck~\cite{ClarksonTW89} for the polygon triangulation problem).
The adaptation is more or less straightforward, but our contribution is in realizing that these randomized $O(n\log^*n)$-time algorithms are relevant in the first place.
\item
Our final randomized $O(n+\Dwrong\log n)$-time algorithm (given in Section~\ref{sec:DT:rand}) eliminates the
extra $\log^*n$ factor by incorporating a more complicated divide-and-conquer involving
planar graph separators or $r$-divisions, this time applied to actual Delaunay triangulations, not the input graph $G$. (Curiously, ideas from certain
output-sensitive convex hull algorithms~\cite{EdelsbrunnerS91,ChanSY97} turn out to be useful.)
\item
For our deterministic $O(n\log^*n + n\log\dCross)$-time algorithm (given in Section~\ref{sec:cross}), which as mentioned is the key to our result in the probabilistic model as well as our $\dVio$-sensitive result,
we return to using $r$-divisions in the given graph $G$.  The way the $\log^*n$ factor comes up is interesting---it arises not from the use of known randomized $O(n\log^*n)$ algorithms like Seidel's, but from an original recursion.
Although our $\Dwrong$-sensitive algorithm does not directly yield a $\dCross$-sensitive result, it will turn out to be an important ingredient in the final algorithm.
\end{itemize}

\subsection{Open questions}

Many interesting questions remain from our work (which we see as a positive, as our goal is to inspire further work in this research direction):

\begin{itemize}
\item Could the extra $\log^*n$ factors be eliminated in some of our results?
\item Could the extra $\log\log n$ factor in our result in the probabilistic model be removed?
\item Could we get a time bound in terms of $\Dflip$ better than $O(n+\Dflip\log n)$?  For example, is 
there a Delaunay triangulation algorithm that runs in $O(n+\Dflip)$ time? % or better still $O(n\log(1+\Dflip/n))$ time?  
\item Similarly, is there a Delaunay triangulation algorithm that runs in $O(n+\Dcross)$  time?  Or, better still, 
$O(n\log(1+\Dcross/n))$ time?  Or $O(n+\Dvio)$ or $O(n\log(1+\Dvio/n))$ time?
\end{itemize}

More generally, we hope that our work will prompt other models of prediction to be studied for the Delaunay triangulation problem, or similar models for other fundamental problems in computational geometry and beyond.

%%%%%%%%%%%%%%%%%%%%%%%%%%%%%%%%%%%%%%%%%%%%%%%%%%%%%%%%%%%%%%%%%%%%%%%%%%%%%%%%%%%%%%%%%%%%%%%%%%%%%%%%%%%%%%%%%%%%%%%%%%%%%%%%%%%%%%%%%%%%%%%%%%%%%%%%%%%%%%%%%%%%%%%%%%%%%%%%%%%%%%%%%%%%%%%%%%%%%%%%%%%%%%%%%%%%%%%%%%%%%%%%%%%%%%%%%%%%%%%%%%%%%%%%%%%%%%%%%%%%%%%%%%%%%%%%%%%%%%%%%%%%%%%%%%%%%%%%%
%\section{Notation and preliminaries}
%We start by revisiting some terminologies and properties that are standard in 
%computational geometry. 

\section{A deterministic $O(n+\Dwrong\log^3 n)$ algorithm}\label{sec:DT:det}

We begin by presenting a deterministic algorithm for computing the Delaunay triangulation in 
$O(n+\Dwrong\log^3 n)$ time.  The algorithm uses planar graph separators, or more generally, $r$-divisions, on the input triangulation $G$.  (A different deterministic algorithm is given in Appendix~\ref{app:det}, which was actually the first algorithm we discovered; it is slightly slower but does not require separators.)

An \DEF{$r$-division} of a planar graph $G$ with $n$ vertices is a partition of $G$ into $O(n/r)$ regions (edge-induced subgraphs) such that each region has at most $O(r)$ vertices and at most $O(\sqrt{r})$ boundary vertices, i.e., vertices shared by more than one region. An $r$-division can be computed in $O(n)$ time; see, for example, the classical result by Federickson~\cite{Federickson87} or the more recent, improved result by Klein, Mozes, and Sommer~\cite{KleinMS13}. Note that the regions may have holes (i.e., faces that are not faces of $G$). We may also assume that each region is connected.

%Let $G$ be a triangulation of a point set $P\subset\RR^2$ that is ``close'' to the 
%Delaunay triangulation $\DT(P)$, i.e., has a small number of ``wrong'' edges not in $\DT(P)$.
%In this section, as a warm-up, we present a slower but deterministic algorithm to recover the correct Delaunay triangulation $\DT(P)$; the randomized algorithm with optimal running time will be presented in Section~\ref{sec:DT:rand}. As mentioned above in the introduction, the idea is to apply planar graph separators, or $t$-divisions, to the input triangulation $G$.

%This algorithm has the advantage of being deterministic, albeit with a worse running time.

%The idea is also to use planar graph separators, or $r$-divisions, but in a way intrinsically different from our randomized algorithm in \cref{sec:DT:rand}:
%here, we more simply apply separators to the input triangulation $G$ (which may not be a completely correct Delaunay triangulations),
%whereas in the previous section we apply them to actual Delaunay triangulations (of random subsets).

\newcommand{\good}{\textit{good}}
\newcommand{\bad}{\textit{bad}}

\begin{theorem}\label{thm:det:DT}
Given a triangulation $G$ of a set $P$ of $n$ points in the plane, the Delaunay triangulation $\DT(P)$ can be constructed in $O(n + \Dwrong\log^3 n)$ deterministic time, where $\Dwrong$ is the (unknown) number of edges in $G$ that are not in $\DT(P)$. 
\end{theorem}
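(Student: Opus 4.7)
My plan is to use an $r$-division of the input triangulation $G$ to isolate the ``bad'' regions that disagree with $\DT(P)$, and then repair only those regions at the cost of a polylogarithmic overhead per bad edge. First, I would scan all edges of $G$ in $O(n)$ time and check each one for local Delaunayness. By the classical local-to-global theorem, if every edge of $G$ is locally Delaunay then $G = \DT(P)$ and we can output $G$. Otherwise let $L$ denote the set of non-locally-Delaunay edges of $G$; the inequality $\Dlocal \le \Dwrong$ from the preliminaries gives $|L| \le \Dwrong$.

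Next I would compute an $r$-division of $G$ in $O(n)$ time with $r = \Theta(\log^2 n)$, for instance via the Klein--Mozes--Sommer algorithm. Call a region \emph{dirty} if it contains an edge of $L$ and \emph{clean} otherwise; there are at most $O(\Dwrong)$ dirty regions, containing in aggregate $O(\Dwrong r) = O(\Dwrong \log^2 n)$ vertices and $O(\Dwrong \sqrt r) = O(\Dwrong \log n)$ boundary vertices. For each dirty region $R$, I would recompute the Delaunay triangulation of $V(R)$ from scratch in $O(r \log r)$ time using a standard textbook algorithm and splice the result into $G$, yielding an improved triangulation $G'$ at a total cost of $O(n + \Dwrong \log^2 n \log\log n)$.

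The main obstacle is that $G'$ need not equal $\DT(P)$, for two reasons: (i) a clean region can still contain globally non-Delaunay edges whose local tests in $G$ happened to pass because violators sit in a dirty region, and (ii) the splicing may introduce new non-locally-Delaunay edges along the boundaries between regions. To address both, I plan to maintain a dynamic Delaunay-triangulation data structure on the current triangulation supporting flips, insertions, and point location in $O(\log^2 n)$ amortized time, and run a propagating flip procedure seeded at the non-locally-Delaunay edges discovered in $G'$. The critical claim to establish is that this propagation can be charged to at most $O(\Dwrong \log n)$ flip events: each edge in $L$ influences the triangulation only up to depth $O(\sqrt r) = O(\log n)$ in the BFS structure of the $r$-division, so the reach of each bad edge contributes $O(\log n)$ to the flip count. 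Each flip costs $O(\log^2 n)$ in the dynamic data structure, for a total of $O(\Dwrong \log^3 n)$ repair work. Carrying out this charging argument rigorously—and justifying that propagation genuinely terminates within the promised depth—is the most delicate step; combined with the linear-time verification and the dirty-region recomputation, it yields the claimed $O(n + \Dwrong \log^3 n)$ bound.
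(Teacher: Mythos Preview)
There is a genuine gap in the propagation step. Your charging argument, taken literally, bounds the number of flips by $O(|L|\cdot\sqrt r)=O(\Dlocal\log n)$, not $O(\Dwrong\log n)$; but instances exist with $\Dlocal=O(1)$ and $\Dwrong=\Omega(n)$ (as the paper itself points out). In such an instance only $O(1)$ regions are dirty, your splicing touches $O(\log^2 n)$ vertices, and $G'$ still carries $\Theta(n)$ wrong edges sitting inside clean regions---so any flip sequence from $G'$ to $\DT(P)$ needs $\Omega(n)$ flips, not the $O(\log n)$ your charging would give. More broadly, an edge being locally Delaunay inside a clean region says nothing about its global correctness, and there is no known relation tying the length of a Lawson cascade to $\Dwrong$ times anything involving~$\sqrt r$; in the worst case Lawson flipping takes $\Theta(n^2)$ flips. (A secondary issue: $\DT(V(R))$ triangulates the convex hull of $V(R)$, not the region $R$ itself, so ``splicing'' is not even well-defined without passing to a constrained variant.)

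The missing idea is a \emph{global} certificate that a region is truly correct. The paper obtains it by first computing $\DT(V_B)$ on the $O(n/\sqrt t)$ boundary vertices of the $t$-division, at cost $O((n/\sqrt t)\log n)=O(n)$ when $t=\log^2 n$, and then calling a region good only if its triangles are locally Delaunay \emph{and} its surrounding boundary triangles already appear in $\DT(V_B)$. That second, global condition is precisely what lets the good regions be pasted into $\DT(V_B)$ to yield $\DT(V_B\cup V_{\good})$ exactly, with no cascade at all. The $O(\Dwrong)$ bad regions (containing $O(\Dwrong t)$ vertices) are then retriangulated from scratch in $O(\Dwrong t\log n)=O(\Dwrong\log^3 n)$ time, and a single linear-time merge of two Delaunay triangulations finishes the job---no dynamic data structure and no flip propagation are needed.
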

\begin{proof}
Our algorithm proceeds in the following steps:

\begin{enumerate}
\item
First we compute, in $O(n)$ time, a $t$-division $\Gamma$ of the dual graph of the input triangulation $G$, for some $t$ to be chosen later. %see Figure~\ref{fig:r-division} for an example. 
Each region $\gamma$ of $\Gamma$ is a connected set of $O(t)$ triangles.  Let $V_\gamma$ denote the set of vertices of all the triangles in
$\gamma$ and $B_\gamma$ denote the set of all the triangles surrounding $\gamma$ (here, it is more convenient to work with boundary triangles instead of boundary edges). Then $|V_\gamma| = O(t)$ and $|B_\gamma| = O(\sqrt{t})$.
Also, let $B$ be the set of all boundary triangles in $\Gamma$ and $V_B$ be the set of the vertices of the edges in $B$. Then $|V_B| = O(n/\sqrt{t})$. 

\item
We compute the Delaunay triangulation $\DT(V_B)$ of $V_B$, in
$O(|V_B|\log|V_B|) \le  O((n/\sqrt{t}) \allowbreak \log n)$ time.

\item
We call a region $\gamma$ of $\Gamma$ \DEF{good} if (i) all boundary triangles in $B_\gamma$ are in $\DT(V_B)$, and (ii) the triangles of $\gamma$ and their immediate neighborhood are locally Delaunay (i.e., for every neighboring pair of triangles $\triangle p_1p_2p_3$ and $\triangle p_1p_2p_4$ in $G$ with at least one inside $\gamma$ (the other may be inside or be a boundary triangle), we have $p_4$ outside the circumcircle through $p_1,p_2,p_3$); we call it \DEF{bad} otherwise. Let
\[ V_{\good} = \left(\bigcup_{\gamma \,\mathrm{is \, good}}V_\gamma\right) \setminus V_B\ \ \mbox{and}\ \ V_{\bad}= \left(\bigcup_{\gamma \,\mathrm{is \, bad}}V_\gamma\right)\setminus V_B. 
\]
Note that condition (i) can be tested using standard data structures, while condition (ii) can be tested in linear total time.

%Note that, using standard data structures (dictionaries), $\DT(V_B)$ can be stored such that membership of a triangle in it can be checked in $O(\log |V_B|) = \log n$ time. Over all regions $\gamma$, condition (i) takes $O((n/\sqrt{t})\log n)$ time. Also, testing whether an edge is locally Delaunay takes $O(1)$ time and since each region has $O(t)$ edges, condition (ii) takes $O(n)$ time over all regions. Thus, both sets $V_{good}$ and $V_{bad}$ can be computed in $O(n + (n/\sqrt{t})\log n)$ time.

\item
We compute $\DT(V_B\cup V_{\good})$.
This can be done by modifying $\DT(V_B)$.  Namely, for each good region
$\gamma$, we replace what is inside $B_\gamma$ (by definition,
$B_\gamma$ belongs to $\DT(V_B)$), with the triangles of~$\gamma$.  Then
the resulting triangulation is locally Delaunay, and thus must coincide
with $\DT(V_B\cup V_{\good})$ (since the local Delaunay property holding for the entire triangulation implies
that it is the Delaunay triangulation~\cite{BergCKO08}). This step takes linear time.

\item
We compute $\DT(V_{\bad})$.
We will do this step from scratch, in $O(|V_{\bad}|\log|V_{\bad}|)$ time.
For each bad region $\gamma$, there is at least one non-Delaunay edge of $G$
inside $\gamma$ or on its boundary triangles.  So, the number of bad regions is
at most $O(\Dwrong)$, implying that $|V_{\bad}|\le O(\Dwrong t)$. So, this step takes
$O(\Dwrong t\log n)$ time.

\item
Finally, we merge $\DT(V_B\cup V_{\good})$ and $DT(V_{\bad})$, in linear
time by standard algorithms for merging Delaunay triangulations
(Chazelle~\cite{Chazelle92}, and later Chan~\cite{Chan16}, showed more generally that two convex hulls in $\RR^3$
can be merged in linear time, but an earlier result by Kirkpatrick~\cite{Kirkpatrick79} suffices for
the case of Delaunay triangulations in $\RR^2$).
\end{enumerate}

The total running time is $O(n + (n/\sqrt{t} + \Dwrong t)\log n)$.
Choosing $t=\log^2n$ gives the desired $O(n + \Dwrong \log^3 n)$ bound.
\end{proof}

%%%%%%%%%%%%%%%%%%%%%%%%%%%%%%%%%%%%%%%%%%%%%%%%%%%%%%%%%%%%%%%%%%%%%%%%%%%%%%%%%%%%%%%%%%%%%%%%%%%%%%%%%%%%%%%%%%%%%%%%%%%%%%%%%%%%%%%%%%%%%%%%%%%%%%%%%%%%%%%%%%%%%%%%%%%%%%%%%%%%%%%%%%%%%%%%%%%%%%%%%%%%%%%%%%%%%%%%%%%%%%%%%%%%%%%%%%%%%%%%%%%%%%%%%%%%%%%%%%%%%%%%%%%%%%%%%%%%%%%%%%%%%%%%%%%%%%%%%
\section{A randomized $O((n+\Dwrong\log n)\log^* n)$ algorithm}\label{sec:rand:logstar}
%Let $T$ be a spanning tree of a point set $P\subset\RR^2$ that is ``close'' to the Euclidean minimum spanning
%tree $\EMST(P)$, i.e., has a small number of ``wrong'' edges not in $\EMST(P)$.  In this section, 
%we present an algorithm to recover the correct Euclidean minimum spanning tree $\EMST(P)$.

%Our plan is this:
%\begin{enumerate}
%\item we use the given spanning tree to construct the Delaunay triangulation;
%\item we then use the Delaunay triangulation to construct the Euclidean minimum spanning tree.
%\end{enumerate}
%The second step is well known and takes linear time (see the proof of \cref{cor:EMST}).

In this section, we present a completely different, and faster, randomized algorithm for computing the Delaunay triangulation in terms of the parameter $\Dwrong$, which is just a $\log^*n$ factor from optimal.
The algorithm is an adaptation of an algorithm by Devillers \cite{Devillers92}, which
uses the Euclidean minimum spanning tree to construct the Delaunay triangulation
in almost linear ($O(n\log^*n)$) expected time.
Slightly more generally, the Delaunay triangulation can be constructed
from any given spanning tree of $\DT(P)$ of maximum degree $O(1)$ in $O(n\log^*n)$ expected time.
We generalize his result further to show that the same can be achieved even if the given spanning tree has some wrong edges,
and even if the maximum degree requirement is dropped.  

%The interesting side effect is that this gives us an indirect way to correct the wrong edges of the Euclidean minimum spanning tree via our two-step plan.

Devillers' algorithm is similar to the more well-known
$O(n\log^*n)$ randomized algorithms for polygon triangulation by Seidel~\cite{Seidel91}
and Clarkson, Tarjan, and Van Wyk~\cite{ClarksonTW89}, which are
variants of standard $O(n\log n)$ randomized incremental or random sampling algorithms~\cite{Clarkson92,ClarksonS89,CheongMR}, but with the point location steps sped up.
We will describe our algorithm based on random sampling.

%As mentioned, given the spanning tree $T$, the key part is to compute $\DT(P)$, or equivalently,
%the lower envelope $\LE(P^*)$ in $\RR^3$.

It will be more convenient to work in 3D, where 2D Delaunay triangulations and Voronoi diagrams are mapped to
3D convex hulls and lower envelopes of planes.  We begin with some standard terminologies:

For a set of planes $\Pi$ in $\RR^3$, let $\LE(\Pi)$ denote the \DEF{lower 
envelope of $\Pi$}, that is, the set of points lying on some plane of $\Pi$
but with no plane of $\Pi$ below it.
The \DEF{vertical decomposition of $\LE(\Pi)$}, denoted $\VDLE(\Pi)$,
is a partition of the space below $\LE(\Pi)$ into vertical, truncated prisms 
that are unbounded from below and are bounded from above by a triangular 
face contained in $\LE(\Pi)$. The triangles bounding the prisms have
vertices on $\LE(\Pi)$ and are defined by some canonical rule, 
such as triangulating each face of $\LE(\Pi)$ with diagonals from the point
in the face with the smallest $x$-coordinate. (Since the faces are convex,
such a triangulation is possible.) We also have to ``triangulate''
the infinite faces of $\LE(\Pi)$, that is, to split them
into regions defined by three edges.  We will refer to these prisms as \DEF{cells}.

For each point $p = (\alpha, \beta)\in \RR^2$, 
let $p^*$ denote the plane in $\RR^3$
% dual to $\lambda(p)$ with respect
%to $\pi$. For $p=(\alpha,\beta)$ the plane $p^*$ is
defined by $z= - 2\alpha x - 2\beta y + \alpha^2+\beta^2$.
The plane $p^*$ is tangent to the paraboloid $z=-x^2-y^2$ 
at the point $p$. For any set $P$ of points, define 
$P^*=\{ p^*\mid p\in P\}$.

For a set $P$ of points in the plane, 
the $xy$-projection of the lower envelope of $P^*$
is the Voronoi diagram of $P$ (dual to the Delaunay triangulation $\DT(P)$). The vertices and edges of $\LE(P^*)$
correspond to the vertices and edges of the Voronoi diagram of $P$
in the plane. Since we assume general position,
each vertex of $\LE(P^*)$ is contained in three planes of $P^*$.
The $xy$-projection of a cell
in $\VDLE(P^*)$ is a triangle
contained in a Voronoi cell and defined by three Voronoi vertices.

For a vertex $v=p_1^*\cap p_2^*\cap p_3^*\in\RR^3$, the \DEF{conflict list of $v$}, denoted $P^*_v$, refers to the list of 
all planes in $P^*$ strictly below $v$.  %(Technically, we also include $p_1^*,p_2^*,p_3^*$ in the conflict list.)
Observe that $p^*$ is in the conflict list $P^*_v$ iff $p$ is strictly inside
the circumcircle through $p_1,p_2,p_3$.

\begin{lemma}\label{lem:conflict}
	Let $p_1,p_2,p_3$ be the vertices of the triangle $\triangle$, 
	and let $p,q$ be points such that $pq$ intersects 
	the interior of the triangle $\triangle$.
	If $pq$ is an edge of $DT(\{ p_1,p_2,p_3, p,q\})$,  
	then $p^*$ or $q^*$ must be in the conflict list of $p_1^*\cap p_2^*\cap p_3^*$.
\end{lemma}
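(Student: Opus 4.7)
The plan is to give a short proof by contradiction using the empty-circle characterization of Delaunay triangulations applied to the $5$-point set $S=\{p_1,p_2,p_3,p,q\}$. I would first negate the conclusion: assume neither $p^*$ nor $q^*$ is in the conflict list of the vertex $v = p_1^*\cap p_2^*\cap p_3^*$. Using the observation recorded immediately before the lemma, I would translate this into the geometric statement that neither $p$ nor $q$ lies strictly inside the circumcircle $C$ through $p_1,p_2,p_3$, and then invoke general position (no four co-circular points) to upgrade this to both $p$ and $q$ lying \emph{strictly outside} $C$.

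Next I would observe that, since the only points of $S$ other than $p_1,p_2,p_3$ are $p$ and $q$, this means $C$ is empty with respect to $S\setminus\{p_1,p_2,p_3\}$; hence the empty-circle property forces $\triangle p_1p_2p_3$ to be a triangle of $\DT(S)$. The contradiction then comes from $pq$: in any planar triangulation the triangular faces have pairwise disjoint interiors and each edge lies on the boundary of some face, so no edge can cross the interior of a triangular face; but by hypothesis $pq$ is an edge of $\DT(S)$ that crosses the interior of $\triangle p_1p_2p_3$. This contradicts $\triangle p_1p_2p_3\in\DT(S)$ and completes the proof.

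There is no substantive obstacle here, as the lemma reduces to a one-step application of the empty-circle property on a constant-size instance. The only things to take care about are (i) correctly translating the conflict-list negation into the geometric ``$p,q$ strictly outside $C$'' statement via the observation preceding the lemma, and (ii) using the general-position assumption to rule out $p$ or $q$ lying on $C$, so that the ``strictly outside'' upgrade, and the resulting emptiness of $C$ with respect to $S\setminus\{p_1,p_2,p_3\}$, are legitimate.
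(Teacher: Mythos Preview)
Your proposal is correct and follows essentially the same route as the paper: assume $p,q$ are not inside the circumcircle of $\triangle$, deduce that $\triangle p_1p_2p_3$ (equivalently, its three edges) belongs to $\DT(\{p_1,p_2,p_3,p,q\})$, and obtain a planarity contradiction with the edge $pq$ crossing $\triangle$. The only cosmetic difference is that the paper phrases the contradiction as ``$pq$ crosses an edge of $\triangle$'' rather than ``$pq$ enters the interior of the face $\triangle$,'' which is the same thing.
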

\begin{proof}
	\begin{figure}\centering
		\includegraphics[page=1]{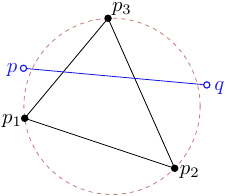}
		\caption{Proof of Lemma~\ref{lem:conflict}.}
		\label{fig:conflict}
	\end{figure}
	See Figure~\ref{fig:conflict}.
	If $p$ and $q$ are not inside the circumcircle of $\triangle$, then
	the edges of $\triangle$ are edges of $\DT(\{ p_1,p_2,p_3, p,q\})$.
	Since the endpoints of $pq$ are not inside $\triangle$,
	$pq$ crosses some edge of $\triangle$. This would contradict the planarity of $\DT(\{ p_1,p_2,p_3, p,q\})$.
\end{proof}

For a cell $\tau\subset \RR^3$, the \DEF{conflict list of $\tau$}, denoted $P^*_\tau$, refers to
the list of all planes in $P^*$ intersecting $\tau$.
Observe that the conflict list of $\tau$ is just the union of the conflict lists
of the three vertices of $\tau$.

%A \Dregion of a graph (\note: Do we need biconnected ?)
%is an edge-induced subgraph of it. 
%Finally, we mention some standard results on planar graph separators that we will need.

Before describing our main algorithm, we first present the following lemma, which uses a given spanning tree $T$ to speed up
conflict list computations.  
%We assume that $T$ has constant maximum degree;
%this assumption will be removed later.
To deal with the scenario when the maximum degree of $T$ is not bounded, we will work with weighted samples, or equivalently, samples of multisets.

%Call an edge of $T$ \DEF{good} if it is in the Delaunay triangulation $\DT(P)$, 
%and \DEF{bad} otherwise.  

\newcommand{\PPP}{\widehat{P}}
\newcommand{\RRR}{\widehat{R}}

\begin{lemma}\label{lem:all_conflicts}
	Let $P$ be a set of $n$ points, and $T$ be a spanning tree of $P$, where the edges around each vertex have been
	angularly sorted.  % with maximum degree $O(1)$.
	Let $\PPP$ be the multiset where each point $p\in P$ has multiplicity $\DEG_T(p)$.
	Let $\RRR$ be a random (multi)subset of $\PPP$ of a given size $r$, and let $R$ be the set $\RRR$ with duplicates removed.  
	Suppose $\LE(R^*)$ is given.
	Then we can compute the conflict lists of all the cells in $\VDLE(R^*)$ in
	$O(n+\Dtree\log n)$ expected time, where $\Dtree$ is the number of edges of $T$ that are not in $\DT(P)$.
\end{lemma}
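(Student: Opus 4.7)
The plan is to adapt Devillers' spanning-tree walking technique. Observe first that each cell $\tau \in \VDLE(R^*)$ is bounded above by a triangle whose three vertices are Voronoi vertices of $R$, each being the lifted circumcenter of some Delaunay triangle of $R$; the conflict list of $\tau$ is exactly the union of the (point-)conflict lists $\{p \in P : p \in \mathrm{circumcircle}(\triangle)\}$ of these three Delaunay triangles. Hence it suffices to compute, for each $\triangle \in \DT(R)$, the points of $P$ inside its circumcircle, and then take unions in linear time at the end.

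As preprocessing, we extract $\DT(R)$ from $\LE(R^*)$ and build on it an $O(\log r)$-query point-location structure, both in $O(r)$ time. We then root $T$ at an arbitrary vertex $v_0$, point-locate $v_0$, and DFS through $T$ (using the angular sort around each vertex to amortize moving between incident triangles of $\DT(R)$). Whenever DFS descends along a tree edge $(v,u)$ from an already-located $v$ to an unlocated $u$, we perform a straight-line walk in $\DT(R)$ from the Delaunay triangle containing $v$ along $\overline{vu}$, advancing across one Delaunay edge in $O(1)$ per step. At each triangle $\triangle r_1r_2r_3$ entered, we test whether $v$ or $u$ is in its circumcircle, and if so add that endpoint to the triangle's tentative conflict list. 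We abort the walk immediately if (i)~neither $v$ nor $u$ is in the current triangle's circumcircle, or (ii)~the walk has taken more than $c \log n$ steps; upon aborting, we point-locate $u$ in $O(\log n)$ time. After DFS, every $p \in P$ has a known containing Delaunay triangle, and we BFS from it through $\DT(R)$, visiting exactly those triangles whose circumcircle contains $p$ (this region is connected, being the set of triangles that would be destroyed by inserting $p$ into $\DT(R)$) and adding $p$ to each visited triangle's conflict list.

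The main obstacle is bounding the total walking cost. For a good tree edge $(v,u) \in \DT(P)$, restricting $P$ to $\{r_1, r_2, r_3, v, u\}$ preserves the Delaunay status of $vu$ (a witnessing empty circle in $P$ remains empty under restriction), so $vu \in \DT(\{r_1,r_2,r_3,v,u\})$ for every $r_1,r_2,r_3 \in R$; by Lemma~\ref{lem:conflict}, every Delaunay triangle of $R$ crossed by $\overline{vu}$ must contain $v$ or $u$ in its circumcircle, and the circumcircle abort never triggers on good edges. Hence the total length of completed good-edge walks is at most
\[
\sum_{\triangle \in \DT(R)} \,\sum_{p \in \mathrm{conflict}(\triangle)} \DEG_T(p),
\]
since each crossed $\triangle$ is charged, for each conflicting endpoint $p$, at most once per tree edge incident to $p$. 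A Clarkson--Shor analysis on $\PPP$ (of size $2(n-1)$, treating the $\DEG_T(p)$ copies of each $p$ as distinct elements all inducing the same plane) bounds the expectation of this sum by $O(|\PPP|) = O(n)$. At most $O(n/\log n)$ good-edge walks can be truncated at $c \log n$ steps (each using $c \log n$ of the $O(n)$ budget), and each truncation costs $O(\log n)$ extra for the fallback point location, contributing $O(n)$ overall. Each of the $\Dtree$ bad edges costs $O(\log n)$ (truncated walk plus point location). The BFS phase costs $\sum_\triangle |\mathrm{conflict}(\triangle)| = O(n)$ in expectation by the same Clarkson--Shor bound (using $\DEG_T(p) \ge 1$ for all $p$). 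Summing gives the advertised $O(n + \Dtree \log n)$ expected total time.
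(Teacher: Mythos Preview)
Your proof is correct and follows essentially the same approach as the paper: walk the spanning tree through $\DT(R)$ with a $\log n$ truncation fallback to point location, bound crossings on good edges via \cref{lem:conflict} and Clarkson--Shor on $\widehat{P}$, and finish with a graph search from each located point. The paper makes the $p\in R$ case more explicit (it merges the angular order of $T$-edges with that of $\DT(R)$-edges around each $p\in R$ during preprocessing), and your abort condition~(i) together with the walk-time tentative conflict lists are harmless but redundant additions, but otherwise the arguments coincide.
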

\begin{proof}
%	We modify the approach described by Devillers \cite{Devillers92} for 
%	the case when $T$ is a subgraph of $\DT(R)$.
%	
	We preprocess $DT(R)$ for point location~\cite{Kirkpatrick83}. 
	The preprocessing takes $O(r)$ time and 
	a point location query can be answered in $O(\log r)\le O(\log n)$ time.
	
    For each point $p\in P$, let $\gamma_p$ be the number
	of vertices of $\VDLE(R^*)$ whose conflict lists contain~$p^*$.
	
	We first describe how to compute, for each point $p\in P$,
	the triangle $\triangle_p\in \DT(R)$ containing $p$.  (Technically, we should add 3 vertices of a sufficiently large bounding triangle to both $R$ and $P$, to ensure that $\triangle_p$ always exists, and to deal with vertices of $\VDLE(R^*)$ ``at infinity''.)
	
	To this end, we traverse the edges of the spanning tree $T$.
	For each edge $pq$ of $T$, knowing $\triangle_p$ we can get to $\triangle_q$ 
	by performing a linear walk in $\DT(R)$. However, if we cross more than
	$\log n$ triangles of $\DT(R)$, we switch to performing a point
	location query for $q$. This takes $O(\min \{ c_{pq},\log n\})$ time, 
	where $c_{pq}$ is the number of triangles of $\DT(R)$ crossed by $pq$. 
	Iterating over all edges in $T$, we get $\triangle_p$
	for all $p\in P$.  The total time is 
	\begin{equation}\label{eq:sum1}
		O\left( \sum_{pq\in T}\min \{ c_{pq},\log n\}\right).
	\end{equation}
	If $pq$ is an edge of $\DT(P)$, then \cref{lem:conflict} implies
	that $c_{pq}$ is bounded by $\gamma_p+\gamma_q$. Therefore 
	\begin{align*}
		\sum_{pq\in T\cap \DT(P)} c_{pq}\ \le\ 
			\sum_{p\in P} ( \gamma_p\cdot \deg_T(p)),
%			\ \le\  O\left( \sum_{p\in P} \gamma_p \right), 
	\end{align*} 
	which is proportional to the total conflict list size, with multiplicities included.
	By Clarkson and Shor's analysis~\cite{ClarksonS89}, the expected total conflict list size, 
	with multiplicities included, is $O(r\cdot |\PPP|/r) = O(n)$, where we are using the fact 
	that $|\PPP|=O(\sum_p \deg_T(p))=O(n)$. 
	
	On the other hand,
	\begin{align*}
		\sum_{pq\in T\setminus \DT(P)} \min \{ c_{pq},\log n\} \ \le\ O(\Dtree\log n).
	\end{align*}
	Plugging these in \cref{eq:sum1}, we get the claimed expected time bound.

	One case not considered above is when $p\in R$.  To begin the walk from $p$ to $q$ for a given edge $pq$ in $T$, we need to find the triangle $\triangle_{pq}\in\DT(R)$ incident to $p$ that is crossed by the ray $\RAY{pq}$.  To this end, during preprocessing, we can merge the sorted list of the edges of $T$ around $p$ with the sorted list of edges of $\DT(R)$ around $p$, for every $p\in R$; the total cost of merging is linear.  Afterwards, each triangle $\triangle_{pq}$ can be identified in constant time.

	Now, $\triangle_p$ maps to a vertex of $\LE(P^*)$ whose conflict list contains $p^*$.
	By a graph search, we can generate all vertices of $\LE(P^*)$ whose conflict lists contain $p^*$, for each $p\in P$.  By inversion, we obtain the conflict lists of all vertices of $\LE(P^*)$, from which we 
	get the conflict lists of all cells in $\VDLE(P^*)$.  All this takes
	additional time linear in the total conflict list size, which is $O(n)$ in expectation as mentioned by Clarkson and Shor's bound.
\end{proof}

\begin{theorem}\label{thm:EMST}
Given a spanning tree $T$ of a set $P$ of $n$ points in the plane, % with maximum degree $O(1)$, 
the Delaunay triangulation $\DT(P)$ can be constructed in 
$O((n + \Dtree\log n)\log^*n)$ expected time, where $\Dtree$ is the number of 
edges of $T$ that are not in $\DT(P)$.
\end{theorem}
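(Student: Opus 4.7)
The plan is to apply the gradation-based random-sampling framework of Seidel~\cite{Seidel91} and Devillers~\cite{Devillers92}, plugging in Lemma~\ref{lem:all_conflicts} as the engine that exploits the spanning tree $T$ at each stage. I would fix a gradation of sample sizes $r_0 < r_1 < \cdots < r_k = |\PPP| = \Theta(n)$ where each ratio $r_{i+1}/r_i$ is roughly $2^{r_i}/r_i$ (so that $r_{k-1} \approx \log n$, $r_{k-2} \approx \log\log n$, and so on), giving $r_0 = O(1)$ and $k = O(\log^* n)$. At level $i$, I maintain a random multisubset $\RRR_i \subseteq \PPP$ of size $r_i$ (obtained by extending $\RRR_{i-1}$ with fresh samples from $\PPP$), its underlying point set $R_i$, and the lower envelope $\LE(R_i^*)$ together with its vertical decomposition $\VDLE(R_i^*)$, i.e., $\DT(R_i)$ plus the canonical triangulation of its faces. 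The base case $\LE(R_0^*)$ is computed in constant time since $r_0 = O(1)$.

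The main loop proceeds level by level. At level $i$, I would (a) invoke Lemma~\ref{lem:all_conflicts} on $\RRR_i$ to compute the conflict list of every cell of $\VDLE(R_i^*)$, in $O(n + \Dtree \log n)$ expected time; (b) draw the additional $r_{i+1} - r_i$ samples from $\PPP$ to form $\RRR_{i+1}$, locating each new plane inside the current envelope via the conflict lists from (a); and (c) update $\LE(R_i^*)$ to $\LE(R_{i+1}^*)$ by the standard Clarkson--Shor randomized incremental construction. Applying Clarkson--Shor analysis to the multiset $\PPP$ (of size $\Theta(n)$) bounds the expected cost of (c) by the total conflict list size at the current level, which is $O(n)$; hence each level costs $O(n + \Dtree \log n)$ in expectation, and summing over all $O(\log^* n)$ levels yields the claimed bound. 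At level $k$ we arrange $\RRR_k = \PPP$, so $R_k = P$ and projecting $\LE(R_k^*)$ down to $\RR^2$ gives $\DT(P)$.

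The hard part, already encapsulated inside Lemma~\ref{lem:all_conflicts}, is controlling the cost contribution of the tree edges that are not Delaunay edges: their walks in $\DT(R_i)$ could be arbitrarily long, but cutting each walk off at length $\log n$ and falling back on planar point location limits their total cost to $O(\Dtree \log n)$ per level. The weighted-sampling trick (sampling from $\PPP$ rather than $P$) is what lets us drop Devillers' original bounded-maximum-degree restriction on $T$, since the Clarkson--Shor calculations --- in particular the $O(n)$ bound on the expected total weighted conflict list size $\sum_p \gamma_p \DEG_T(p)$ --- go through for $\PPP$ regardless of the tree's degree distribution.
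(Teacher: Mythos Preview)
Your overall framework---nested random samples, invoking Lemma~\ref{lem:all_conflicts} at each level, weighted sampling from $\PPP$ to handle unbounded degree---matches the paper's proof. The gap is in your gradation and the claimed cost of step~(c).

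You take $r_0=O(1)$ and $r_{i+1}\approx 2^{r_i}$, so $r_{k-1}\approx\log n$ and $r_k=\Theta(n)$. You then assert that updating $\LE(R_i^*)$ to $\LE(R_{i+1}^*)$ costs $O(n)$, ``bounded by the total conflict list size at the current level.'' But the total conflict list size $\sum_\tau |P^*_\tau|=O(n)$ only tells you how big the subproblems are; it does not pay for \emph{solving} them. Whether you run an $O(m\log m)$ envelope algorithm inside each cell, or do randomized incremental insertion of $R_{i+1}\setminus R_i$ with a conflict graph or history DAG, the expected work is $\Theta\bigl(r_{i+1}\log(r_{i+1}/r_i)\bigr)$. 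At your final level this is $\Theta\bigl(n\log(n/\log n)\bigr)=\Theta(n\log n)$, which kills the bound.

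The paper fixes this by choosing the gradation the other way: $s_1=n$, $s_{i+1}=\lfloor\log s_i\rfloor$, $r_i=|\PPP|/s_i$, so that $r_1=O(1)$ but already $r_2=\Theta(n/\log n)$, $r_3=\Theta(n/\log\log n)$, and so on. With this choice, $r_{i+1}\log(r_{i+1}/r_i)=(n/s_{i+1})\log(s_i/s_{i+1})\approx(n/\log s_i)\cdot\log s_i=O(n)$ at \emph{every} level, and the $O(\log^* n)$ levels sum as claimed. Your sequence and the paper's both have $O(\log^* n)$ terms, but the intermediate sample sizes are completely different, and only the paper's keeps the per-level refinement cost linear.
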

\begin{proof}
    Let $\PPP$ be the multiset where each point $p\in P$ has multiplicity $\DEG_T(p)$; note that $|\PPP|=O(n)$.
    Define a sequence of random samples $\RRR_1\subset \RRR_2\subset \cdots\subset \RRR_\ell\subset \PPP$,
    where $\RRR_i$ has size $|\PPP|/s_i$ for a sequence $s_1,s_2,\ldots,s_\ell$ to be chosen later.
    Let $R_i$ be the set $\RRR_i$ with duplicates removed.
    
    Given $\LE(R_i^*)$, we describe how to compute $\LE(R_{i+1}^*)$.
    First, we compute the conflict lists of all cells in $\VDLE(R_i^*)$ by \cref{lem:all_conflicts}
    in $O(n+\Dtree\log n)$ expected time.
    For each cell $\tau\in \VDLE(R_i^*)$, we compute
    $\LE((R_{i+1}^*)_\tau) \cap\tau$
    by a standard lower envelope algorithm in $O(|(R_{i+1}^*)_\tau|\log |(R_{i+1}^*)_\tau|)$ time.
    Finally, we glue all these envelopes to obtain $\LE(R_{i+1}^*)$.
    The running time is 
    \[ O\left(\sum_{\tau\in\VDLE(R_i^*)}|(R_{i+1}^*)_\tau|\log |(R_{i+1}^*)_\tau|\right),\]
    which by Clarkson and Shor's bound~\cite{ClarksonS89} has expected value at most
    \[ 
		O(|\RRR_i| \cdot (|\RRR_{i+1}|/|\RRR_i|) \log (|\RRR_{i+1}|/|\RRR_i|)) \ =\ O((n/ s_{i+1}) \log (s_i/s_{i+1})).
		\]
    The total expected time is upper-bounded by
    $O(\ell (n + \Dtree\log n) + \sum_{i=1}^{\ell-1}  (n/s_{i+1})\log s_{i})$.
    Setting $s_1=n$, $s_{i+1}=\lfloor\log s_i\rfloor$, \ldots, $s_\ell=1$, with $\ell=O(\log^*n)$ yields
    the theorem.
\end{proof}

\begin{corollary}\label{cor:DT:EMST}
Given a triangulation $G$ of a set $P$ of $n$ points in the plane, the Delaunay triangulation $\DT(P)$ can be constructed in $O((n + \Dwrong\log n)\log^*n)$ expected time, where $\Dwrong$ is the (unknown) number of edges in $G$ that are not in $\DT(P)$. 
\end{corollary}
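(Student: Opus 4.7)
The plan is to reduce to \cref{thm:EMST} by extracting a spanning tree from $G$. First, since $G$ is a planar triangulation on $n$ points, it has $O(n)$ edges, so we can compute a spanning tree $T$ of $G$ in $O(n)$ time by, say, a breadth-first search. Because $G$ is given with its planar embedding, the circular order of the edges of $G$ around each vertex is available for free, and the circular order of the edges of $T$ around each vertex is just the restriction of this order; hence the angular sorting of $T$ required by \cref{lem:all_conflicts} costs only linear time.

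Next, I would observe that every edge of $T$ is an edge of $G$, so every edge of $T$ that is not in $\DT(P)$ is an edge of $G$ that is not in $\DT(P)$. Consequently $\Dtree \le \Dwrong$, where $\Dtree$ is the number of edges of $T$ missing from $\DT(P)$.

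Finally, invoking \cref{thm:EMST} on the spanning tree $T$ computes $\DT(P)$ in
\[ O\bigl((n + \Dtree \log n)\log^*\!n\bigr) \ \le\ O\bigl((n + \Dwrong \log n)\log^*\!n\bigr) \]
expected time, which is the desired bound. I do not foresee a real obstacle: the only subtle point is ensuring that the angular ordering required by \cref{lem:all_conflicts} is available without extra sorting, which is guaranteed because $T$ inherits it from the embedded triangulation $G$. Note in particular that the algorithm does not need to know $\Dwrong$ in advance, as \cref{thm:EMST} itself does not require knowledge of $\Dtree$.
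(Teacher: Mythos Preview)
Your proposal is correct and follows essentially the same route as the paper: take any spanning tree $T$ of $G$, note that $\Dtree\le\Dwrong$ because $T\subseteq G$, observe that the angular order around each vertex is inherited from the embedded triangulation, and apply \cref{thm:EMST}. The extra details you spell out (BFS in $O(n)$ time, no need to know $\Dwrong$) are fine elaborations but add nothing new to the argument.
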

\begin{proof}
Let $T$ be any spanning tree of $G$.
We just apply Theorem~\ref{thm:EMST}, noting that $\Dtree\le\Dwrong$ (since $T\subseteq G$, and edges in $G$ around each vertex  are already sorted).
%We first compute a bounded-degree spanning tree $T$ of $G$ in $O(n)$ time.
%To achieve this, we temporally add a new point $p^+$ in the exterior of the convex hull $\CH(P)$ and add edges to all the points on the boundary of $\CH(P)$. Let $G^+$ be the resulting triangulation, which is now a $3$-connected planar graph. 
%We then compute a spanning tree $T^+$ in $G^+$ of bounded degree using for example the construction of Biedl~\cite{Biedl14} (which can be viewed as a constructive proof of a variant of Barnette's theorem~\cite{Barnette66}); see also her discussion about alternative approaches.
%We remove the $O(1)$ edges incident to the artificial point $p^+$ and reconnect the 
%$O(1)$ connected components that result using edges of $G$.
%The resulting tree, $T$, is a subgraph of $G$ with maximum degree $O(1)$.
%
%We now apply Theorem~\ref{thm:EMST}, noting that $\Dtree\le\Dwrong$ (since $T\subseteq G$).
%Since $T$ has at most $O(k)$ edges not in $\DT(P)$, we can apply the algorithm in 
%the proof of \cref{thm:EMST}.
\end{proof}

(Chazelle and Mulzer~\cite{ChazelleM11} obtained a similar expected time bound $O(n\log^*n + k\log n)$ for a different problem, on computing the Delaunay triangulation of a subset of $n$ edges with $k$ connected components, given the Delaunay triangulation of a possibly much larger point set.)

Since Theorem~\ref{thm:EMST} holds even if $T$ has self-crossings, our result holds even if the given $G$ is 
a combinatorial triangulation that may have self-crossings.

\begin{corollary}\label{cor:DT:EMST:self:cross}
Given a set $P$ of $n$ points in the plane and a combinatorial triangulation $G$ with vertex set $P$ that may have self-crossings, the Delaunay triangulation $\DT(P)$ can be constructed in $O((n + \Dwrong\log n)\log^*n)$ expected time, where $\Dwrong$ is the (unknown) number of edges in $G$ that are not in $\DT(P)$. 
\end{corollary}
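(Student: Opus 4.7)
The plan is to reduce the statement immediately to Theorem~\ref{thm:EMST}. Extract any spanning tree $T$ of $G$ by a breadth-first search on $G$ viewed purely as an abstract graph; since $G$ has $O(n)$ edges this takes $O(n)$ time. Because $T\subseteq G$ as edge sets, every edge of $T$ that is not in $\DT(P)$ is also an edge of $G$ not in $\DT(P)$, so $\Dtree\le\Dwrong$. Invoking Theorem~\ref{thm:EMST} on $T$ then yields expected running time $O((n+\Dtree\log n)\log^* n)\le O((n+\Dwrong\log n)\log^* n)$, which is exactly the claim.

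The substantive task, already flagged by the sentence preceding the corollary, is to verify that Theorem~\ref{thm:EMST} --- and in particular its workhorse Lemma~\ref{lem:all_conflicts} --- nowhere uses that $T$ is an embedded plane graph. I would walk through the three places where the geometry of $T$ enters the proof. First, the walk from $\triangle_p$ to $\triangle_q$ in $\DT(R)$ along a tree edge $pq$ is charged $O(\min\{c_{pq},\log n\})$, where $c_{pq}$ is the number of $\DT(R)$-triangles crossed by the straight segment $pq$; this is a per-edge quantity and is insensitive to whether other edges of $T$ happen to cross $pq$ elsewhere. Second, the key inequality
\[
\sum_{pq\in T\cap\DT(P)} c_{pq}\ \le\ \sum_{p\in P}\gamma_p\cdot\deg_T(p)
\]
comes from applying Lemma~\ref{lem:conflict} edge by edge, and the Clarkson--Shor step that follows needs only $|\PPP|=\sum_p\deg_T(p)=2(n-1)$, a purely tree-theoretic identity that has nothing to do with embedding. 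Third, the merging of the angularly sorted $T$-edge list with the angularly sorted $\DT(R)$-edge list around a vertex $p\in R$ depends solely on the coordinates of $p$ and of its $T$-neighbors, not on whether distant edges of $T$ intersect one another.

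The main obstacle I anticipate is convincing oneself that no step of the Lemma~\ref{lem:all_conflicts} argument silently uses planarity of $T$: for instance, one might worry that a straight-line walk in $\DT(R)$ could ``double back'' across some self-intersection of $T$, or that the conflict-list bookkeeping could be corrupted by tree edges meeting away from their endpoints. In every case, however, the accounting is strictly local to a single edge of $T$, so crossings between distinct edges leave the analysis untouched. With this verified, Theorem~\ref{thm:EMST} applies verbatim to the combinatorial spanning tree extracted from $G$, and the corollary follows.
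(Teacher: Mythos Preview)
There is a genuine gap. Theorem~\ref{thm:EMST} is proved by repeated calls to Lemma~\ref{lem:all_conflicts}, and that lemma explicitly assumes that ``the edges around each vertex have been angularly sorted.'' You notice this in your third bullet --- you speak of ``the angularly sorted $T$-edge list'' --- but you never say how to obtain it within the time budget. When $G$ is an honest plane triangulation (Corollary~\ref{cor:DT:EMST}), the rotation system of $G$ \emph{is} the angular order, so the sorted lists come for free; that is exactly the parenthetical ``edges in $G$ around each vertex are already sorted'' in the proof of Corollary~\ref{cor:DT:EMST}. Once $G$ is allowed to self-cross, its combinatorial rotation system is no longer the angular order, and a spanning tree extracted by BFS carries no angular information whatsoever. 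Sorting the $T$-edges around each vertex from scratch costs $\sum_p \deg_T(p)\log\deg_T(p)$, which can be $\Theta(n\log n)$ when $T$ has high-degree vertices, so your reduction does not stay within $O((n+\Dwrong\log n)\log^*n)$.

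The paper closes this gap with an idea you are missing: the combinatorial rotation system of $G$ is \emph{almost} the correct angular order. At a vertex $p$, whenever a triangular face of $G$ incident to $p$ is an actual Delaunay triangle, the two neighbors of $p$ on that face are angularly consecutive and in the right relative position; hence the rotation list at $p$ has only $O(d_p)$ displaced elements, where $d_p$ counts the non-Delaunay faces incident to $p$. An adaptive sorting algorithm then sorts the neighbors of $p$ in $O(\deg_G(p)+d_p\log d_p)$ time, and since $\sum_p d_p = O(\Dwrong)$ the total is $O(n+\Dwrong\log n)$. Only after this preprocessing does the paper take a spanning tree and invoke Theorem~\ref{thm:EMST}. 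Your verification that the walk and the Clarkson--Shor accounting are indifferent to self-crossings is correct, but that is the easy half of the argument; the substantive content of the corollary is precisely the adaptive-sorting step you omitted.
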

\begin{proof}
Let $d_p$ be the number of triangular faces in $G$ incident to $p$ that are not triangles in $\DT(P)$.
The neighbors of each vertex $p$ in $G$ are almost sorted correctly except possibly for $O(d_p)$ displaced elements;
they can be sorted in $O(\DEG_G(p) + d_p\log d_p)$ time by known adaptive sorting algorithms~\cite{Estivill-CastroW92}.
The total sorting time is thus $O(n + \sum_p d_p\log n) = O(n+\Dwrong\log n)$.

Let $T$ be any spanning tree of $G$.  We can now apply Theorem~\ref{thm:EMST} to $T$ as before.
\end{proof}

Incidentally, we also obtain an algorithm for EMST sensitive to the number of wrong edges:

\begin{corollary}\label{cor:EMST}
Given a spanning tree $T$ of a set $P$ of $n$ points in the plane, 
the Euclidean minimum spanning tree $\EMST(P)$ can be constructed in 
$O((n + \Dmst\log n)\log^*n)$ expected time, where $\Dmst$ is the (unknown) number of 
edges of $T$ that are not in $\EMST(P)$.	
\end{corollary}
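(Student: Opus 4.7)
The plan is to reduce the problem to computing a Delaunay triangulation and invoking Theorem~\ref{thm:EMST} directly. The fundamental geometric fact I will use is the classical inclusion $\EMST(P) \subseteq \DT(P)$, which means any edge of the given spanning tree $T$ that happens to be a true EMST edge is automatically a true Delaunay edge. Contrapositively, every edge in $T \setminus \DT(P)$ lies in $T \setminus \EMST(P)$, so the parameter $\Dtree$ (edges of $T$ not in $\DT(P)$) used in Theorem~\ref{thm:EMST} satisfies $\Dtree \le \Dmst$.

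So, as the first step, I would simply feed $T$ into Theorem~\ref{thm:EMST}, which produces $\DT(P)$ in $O((n+\Dtree\log n)\log^*n)$ expected time. By the inequality above, this is at most $O((n+\Dmst\log n)\log^*n)$ expected time, matching the target bound. Note that the algorithm does not need to know $\Dmst$ in advance, since Theorem~\ref{thm:EMST} itself is oblivious to $\Dtree$.

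As the second step, I would extract $\EMST(P)$ from $\DT(P)$. Since $\DT(P)$ is a planar graph with $O(n)$ edges, the minimum spanning tree of its edge set (with Euclidean edge weights) can be computed in $O(n)$ deterministic time using any linear-time planar MST algorithm (e.g., Cheriton--Tarjan, or Borůvka on a planar graph with standard shrinking). Since $\EMST(P)$ is a subgraph of $\DT(P)$, this MST of the Delaunay graph is exactly $\EMST(P)$.

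There is no substantial obstacle here: the whole statement is essentially a corollary of Theorem~\ref{thm:EMST} combined with the well-known subgraph relationship $\EMST(P) \subseteq \DT(P)$, plus a linear-time planar MST step. The only thing worth double-checking is that the bound $\Dtree \le \Dmst$ holds even though $T$ may not be a minimum-weight spanning tree of anything in particular, but this follows just from set-theoretic containment and needs no weight comparison.
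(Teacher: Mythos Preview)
Your approach matches the paper's: invoke Theorem~\ref{thm:EMST} using $\Dtree\le\Dmst$ (from $\EMST(P)\subseteq\DT(P)$), then extract $\EMST(P)$ from $\DT(P)$ by a linear-time planar MST computation (the paper cites Matsui and Cheriton--Tarjan). So the high-level plan is exactly right.

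There is, however, one technical precondition you skipped. Theorem~\ref{thm:EMST} is proved via Lemma~\ref{lem:all_conflicts}, which assumes that the edges of $T$ around each vertex have been angularly sorted. In Corollary~\ref{cor:DT:EMST} this is free because $T$ is taken from a planar triangulation, but here $T$ is an arbitrary spanning tree and no such order is given. The paper addresses this explicitly before invoking the theorem: since $\EMST(P)$ has maximum degree at most $6$, every vertex $p$ with $\deg_T(p)\ge 7$ contributes at least $\deg_T(p)-6$ edges to $T\setminus\EMST(P)$, so $\sum_p \max\{\deg_T(p)-6,0\}\le O(\Dmst)$. Hence sorting the edges around each vertex costs
\[
O\!\left(\sum_p \deg_T(p)\log\deg_T(p)\right)\;\le\; O\!\left(n+\sum_p \max\{\deg_T(p)-6,0\}\log n\right)\;=\;O(n+\Dmst\log n),
\]
which fits within the stated bound. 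Without this observation the sorting step is not obviously cheap, so you should include it.
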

\begin{proof}
	Since the maximum degree of $\EMST(P)$ is at most $6$, 
	every vertex $p$ of degree at least $7$ in $T$ is adjacent 
	to at least $\deg_T(p) - 6$ edges not in $\EMST(P)$. This implies that $\sum_p \max\{\deg_T(p)-6,0\} \le O(\Dmst)$.
	We can sort the edges of $T$ around each vertex, in total time $O(\sum_p \deg_T(p)\log\deg_T(p)) \le O(n + \sum_p \max\{\deg_T(p)-6,0\}\log n) = O(n+\Dmst\log n)$. We can now apply \cref{thm:EMST}, noting that $\Dtree\le \Dmst$ 
	(since $\EMST(P)\subseteq\DT(P)$).

%	Since the maximum degree of $\EMST(P)$ is at most $6$, 
%	every vertex $u$ of degree at least $7$ in $T$ is adjacent 
%	to at least $\deg_T(u) - 6$ edges not in $\EMST(P)$. Hence, by removing all the 
%	edges of $T$ incident to such vertices, we have removed $O(\Dmst)$ edges 
%	and obtain $O(\Dmst)$ connected components. Each connected component 
%	is a tree of maximum degree at most $6$; some components are isolated vertices. 
%	We then reconnect these components by adding a path that passes 
%	through one leaf in each component. This path adds $O(\Dmst)$ new edges 
%	that are potentially not in $\EMST(P)$, and the maximum degree is at most $6$.
%	This modification increases $\Dmst$ by at most a constant factor, while keeping the maximum degree $O(1)$; in particular, sorting the edges of $T$ around each vertex is now trivial.
%	We can now apply \cref{thm:EMST}, noting that $\Dtree\le \Dmst$ 
%	(since $\EMST(P)\subseteq\DT(P)$).

	Having computed $\DT(P)$, we can then obtain $\EMST(P)$ by computing a minimum spanning
tree of $\DT(P)$. For this, we can use the linear-time algorithms
for planar graphs by Matsui~\cite{Matsui95} or by Cheriton and Tarjan~\cite{CheritonT76}
(the general, linear-time randomized algorithm of Karger, Klein 
and Tarjan~\cite{KargerKT95} is not necessary).
\end{proof}

We can also obtain a similar result for the \DEF{relative neighborhood graph}~\cite{JaromczykT92}:
$pq$ forms an edge iff the intersection of the two disks centered at $p$ and $q$ of radius
$\|p-q\|$ does not have any points of $P$ inside. The relative neighborhood graph
contains the Euclidean minimum spanning tree and is thus connected.

\begin{corollary}%\label{cor:PSLG}
Given a set $P$ of $n$ points in the plane and a connected PSLG\footnote{PSLG stands for a plane straight-line graph; see Appendix~\ref{app:det} for a definition.
} $G$ with vertex set $P$,  
the relative neighborhood graph $\RNG(P)$ can be constructed in 
$O((n + \Drng\log n)\log^*n)$ expected time, where $\Drng$ is the (unknown) number of 
edges of $G$ that are not in $\RNG(P)$.	
\end{corollary}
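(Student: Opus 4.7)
The plan is to reduce to \cref{thm:EMST} by extracting a spanning tree from $G$, computing $\DT(P)$ with it, and then filtering out the non-RNG edges of $\DT(P)$. Since $G$ is connected, I would first extract an arbitrary spanning tree $T\subseteq G$ in $O(n)$ time. Because $G$ is a PSLG, the edges around each vertex of $G$ are already given in cyclic angular order, and this order is inherited by $T$ at no additional cost; moreover, $T$ has no self-crossings, being a subgraph of the PSLG $G$. Thus $T$ is in the exact form that \cref{thm:EMST} requires as input (so no analogue of the degree-$6$ trick from \cref{cor:EMST} is needed).

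Next I would bound $\Dtree\le \Drng$, where $\Dtree$ is the number of edges of $T$ not in $\DT(P)$. This uses the standard inclusion $\RNG(P)\subseteq \DT(P)$ (see~\cite{JaromczykT92}): any edge $e\in T\subseteq G$ that is not in $\DT(P)$ cannot be in $\RNG(P)$ either, and so contributes to $\Drng$. Applying \cref{thm:EMST} to $T$ therefore produces $\DT(P)$ in
\[
O((n + \Dtree\log n)\log^* n)\ =\ O((n + \Drng\log n)\log^* n)
\]
expected time.

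Finally, I would extract $\RNG(P)$ from $\DT(P)$ in $O(n)$ time by testing, for each Delaunay edge, whether the corresponding ``lune'' is empty of other points of $P$; this can be carried out in linear total time via the known post-processing algorithms surveyed in~\cite{JaromczykT92}. The total expected running time remains $O((n + \Drng\log n)\log^* n)$, as claimed.

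There is no real obstacle in this argument: the only conceptual step is the inclusion $\RNG(P)\subseteq \DT(P)$, which transfers the error count from the RNG setting into the DT setting so that \cref{thm:EMST} can be invoked as a black box. The PSLG hypothesis is crucial because it supplies the angular ordering around each vertex for free; without it we would need an analogue of the degree bound used for $\EMST$ in \cref{cor:EMST}, but $\RNG(P)$ can have unbounded maximum degree, so no such bound is available.
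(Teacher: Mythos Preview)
Your proposal is correct and follows essentially the same approach as the paper: take a spanning tree of $G$, use the inclusion $\RNG(P)\subseteq\DT(P)$ to bound $\Dtree\le\Drng$, apply \cref{thm:EMST}, and then extract $\RNG(P)$ from $\DT(P)$ in linear time. The only cosmetic difference is that the paper cites Lingas~\cite{Lingas94} for the linear-time extraction of $\RNG(P)$ from $\DT(P)$, whereas you point to the survey~\cite{JaromczykT92}.
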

\begin{proof}
Let $T$ be any spanning tree of $G$.
We just apply Theorem~\ref{thm:EMST}, noting that $\Dtree\le\Drng$ (since $T\subseteq G$ and $\RNG(G)\subseteq \DT(G)$, and edges in $G$ around each vertex  are already sorted).
Having computed $\DT(P)$, we can then obtain $\RNG(G)$ in linear time~\cite{Lingas94}.
\end{proof}

A similar result holds for the Gabriel graph or, more generally, the $\beta$-skeleton for any $\beta\in[1,2]$~\cite{JaromczykT92}.

%\EMPH{Gabriel graph}~\cite{??}: here, 
%In the Gabriel graph, $pp'$ forms an edge iff the circle with $pp'$ as its diameter contains no points of $P$.

%%%%%%%%%%%%%%%%%%%%%%%%%%%%%%%%%%%%%%%%%%%%%%%%%%%%%%%%%%%%%%%%%%%%%%%%%%%%%%%%%%%%%%%%%%%%%%%%%%%%%%%%%%%%%%%%%%%%%%%%%%%%%%%%%%%%%%%%%%%%%%%%%%%%%%%%%%%%%%%%%%%%%%%%%%%%%%%%%%%%%%%%%%%%%%%%%%%%%%%%%%%%%%%%%%%%%%%%%%%%%%%%%%%%%%%%%%%%%%%%%%%%%%%%%%%%%%%%%%%%%%%%%%%%%%%%%%%%%%%%%%%%%%%%%%%%%%%%%

\section{An optimal randomized $O(n+\Dwrong\log n)$ algorithm}\label{sec:DT:rand}

%Let $G$ be a triangulation of a point set $P\subset\RR^2$ that is ``close'' to the 
%Delaunay triangulation $\DT(P)$, i.e., has a small number of ``wrong'' edges not in $\DT(P)$.

In this section, we present our optimal randomized algorithm  for computing the Delaunay triangulation in terms of the parameter $\Dwrong$, eliminating the extra $\log^*n$ factor from the algorithm in the previous section.
%to recover the correct Delaunay triangulation $\DT(P)$, improving our earlier deterministic algorithm in Section~\ref{sec:DT:det}.
%One idea is to realize that the same algorithm in \cref{thm:EMST} solves this problem as well,
%since it is already a Delaunay triangulation algorithm!  We just use a bounded-degree spanning tree $T$ of $G$ as the guide.
%However, we can do even better and remove the extra $\log^*$ factor!  
This is interesting since for related problems such as polygon triangulation, removing the extra $\log^*n$ factor from Clarkson, Tarjan, and Van Wyk's or Seidel's
algorithm~\cite{ClarksonTW89,Seidel91} is difficult to do (cf.\ Chazelle's linear-time
polygon triangulation algorithm~\cite{Chazelle91}).

Our new approach is to stop the iterative algorithm in the proof of \cref{thm:EMST} earlier after a constant number of rounds,  when the conflict list sizes gets sufficiently small (say, $O(\log\log n)$), and then switch to a different strategy. For this part, we use planar graph separators ($t$-divisions) on actual Delaunay triangulations (of random subsets).

\begin{theorem}\label{thm:rand:opt}
	Given a triangulation $G$ of a set $P$ of $n$ points in the plane, 
	the Delaunay triangulation $\DT(P)$ can be constructed in $O(n + \Dwrong\log n)$ expected time, 
	where $\Dwrong$ is the (unknown) number of edges in $G$ that are not in $\DT(P)$.
\end{theorem}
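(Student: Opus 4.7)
The plan is to adapt the iterative scheme of Theorem~\ref{thm:EMST} by truncating it to a constant number of rounds and finishing with a divide-and-conquer based on $t$-divisions of the Delaunay triangulation of the current random sample, following the hints in Section~1.6.

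Take any spanning tree $T$ of $G$, so that $\Dtree\le\Dwrong$. Invoke the iterative lower-envelope construction from the proof of Theorem~\ref{thm:EMST} (driven by Lemma~\ref{lem:all_conflicts}) for only $O(1)$ rounds, with sample sizes growing up to $|R_\ell|=\Theta(n/\log\log n)$, halting once the expected conflict list of each cell of $\VDLE(R_\ell^*)$ has shrunk to $O(\log\log n)$. Each round costs $O(n+\Dwrong\log n)$ for the conflict-list computation and $O(n)$ for the per-cell lower-envelope refinement, so this initial phase still fits the budget. At the end we have $\LE(R_\ell^*)$ (hence $\DT(R_\ell)$) together with all cell conflict lists.

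For the refinement phase, compute a $t$-division $\Gamma$ of $\DT(R_\ell)$ for a suitable $t=(\log\log n)^{O(1)}$. Each region $\gamma\in\Gamma$ contains $O(t)$ triangles of $R_\ell$, $O(\sqrt{t})$ boundary vertices, and by the Clarkson--Shor bound carries expected $O(t\log\log n)$ points of $P$ in its cells' conflict lists, so $\sum_\gamma|P_\gamma|=O(n)$. Within each region I compute $\DT(P_\gamma\cup V_\gamma)$ locally by running a few more rounds of the iterative scheme on the local instance, bootstrapped by a locally derived spanning tree assembled from pieces of $T$ together with the local boundary edges of $\gamma$. Since each local instance has size $(\log\log n)^{O(1)}$, only $O(1)$ local rounds suffice to bring the inner conflict lists down to constant size (no $\log^*$-type cascade is incurred, because we set the inner sample schedule so the iteration terminates after a bounded number of steps), and the local cost becomes $O(|P_\gamma|+D_\gamma\log|P_\gamma|)$, where $D_\gamma$ is the number of wrong edges of the local prediction. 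The regional Delaunay triangulations are then stitched along shared boundary edges in linear total time via Kirkpatrick's merge~\cite{Kirkpatrick79}.

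The main obstacle is bounding the refinement phase by $O(n+\Dwrong\log n)$. What is needed is (i) a charging argument showing $\sum_\gamma D_\gamma = O(\Dwrong+|R_\ell|)$, so that multiplied by $\log|P_\gamma|=O(\log\log\log n)$ the wrong-edge contribution is dominated by $O(\Dwrong\log n)$, and (ii) a construction of the local spanning trees that respects this charging, attributing each locally wrong edge either to a genuinely wrong edge of $G$, to a boundary edge of $\Gamma$ (of which there are $O(|R_\ell|/\sqrt{t})$ in total), or to a cell-level fan that is paid for by the $O(n)$ Clarkson--Shor budget. Ideas from the output-sensitive 3D convex hull algorithms of \cite{EdelsbrunnerS91,ChanSY97}---in particular, the trick of advancing the construction region by region while charging work only to locally affected structure---should help organize this accounting. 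The most delicate point is ensuring that the inner iteration schedule does not silently reintroduce a $\log^*n$ factor on the small local instances, which is why the inner sample sizes must be chosen so that $O(1)$ rounds suffice on each region.
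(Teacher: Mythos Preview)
Your plan starts correctly---run the Theorem~\ref{thm:EMST} iteration for $O(1)$ rounds down to $s_\ell=\Theta(\log\log n)$, then take a $t$-division of (the decomposition induced by) $\DT(R_\ell)$---but the refinement phase misses the key idea of the paper and, as written, would not eliminate the $\log^*n$ factor.

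The paper does \emph{not} recursively compute $\DT(P_\gamma)$ inside each supercell using a local spanning tree. Instead it does two things you do not do. First, for each of the $O(\sqrt t)$ boundary walls $\beta$ of each supercell it computes $\LE(P_\beta^*)\cap\beta$ directly; projected back to the plane, this produces chains of \emph{actual} edges of $\DT(P)$ that partition the plane into regions whose boundaries are genuine Delaunay edges. Second, inside each such region it simply \emph{tests} whether all boundary edges appear in $G$ and whether the triangles of $G$ inside are locally Delaunay; if so, those triangles of $G$ are certified correct (because the local Delaunay property inside a region bounded by true Delaunay edges forces equality with $\DT(P)$). A supercell is ``bad'' only if this test fails, which forces it to contain at least one triangle of $\DT(P)$ not present in $G$; hence there are at most $O(\Dwrong)$ bad supercells. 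Only the bad supercells are handled from scratch, and with $t=s_\ell^4$, $b=s_\ell$ this costs $O(\Dwrong\,(\log\log n)^{O(1)})\le O(\Dwrong\log n)$. There is no recursion, no local spanning tree, and no merging step: the boundary chains are already Delaunay, so the pieces glue trivially.

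Your proposal has two concrete failures. (1)~Running ``a few more rounds'' on a local instance of size $m=(\log\log n)^{O(1)}$ does not reach constant conflict lists in $O(1)$ rounds: the Seidel/Devillers schedule needs $\log^* m=\log^* n-O(1)$ rounds, so the $\log^*n$ factor is simply reintroduced region by region. (2)~The charging $\sum_\gamma D_\gamma=O(\Dwrong+|R_\ell|)$ is not justified: the restriction of $T$ to $P_\gamma$ can be an arbitrary forest, and connecting its components (not by ``boundary edges of $\gamma$'', which join points of $R_\ell$, not of $P_\gamma$) can create many locally-non-Delaunay edges not chargeable to a global wrong edge. Kirkpatrick's merge also does not stitch $\DT(P_\gamma)$'s into $\DT(P)$, since $\DT(P_\gamma)$ need not agree with $\DT(P)$ inside $\gamma$. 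The fix is to drop the recursion entirely and use $G$ itself as the certificate inside each supercell, exactly as the paper does.
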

\begin{proof}
Note that at most $O(\Dwrong)$ of the triangles of $G$ are not in $\DT(P)$, or equivalently, at most $O(\Dwrong)$ of the triangles of $\DT(P)$ are not in $G$  (since $G$ and $\DT(P)$ have the same number of triangles).
 
Call a triangle of $\DT(P)$ \DEF{good} if it is in $G$, and \DEF{bad} otherwise.

%We first compute a bounded-degree spanning tree $T$ of $G$ in $O(n)$ time as in the proof of Corollary~\ref{cor:DT:EMST}.
Let $T$ be any spanning tree of $G$.
Since $T$ has at most $\Dwrong$ edges not in $\DT(P)$, we can apply the algorithm in 
\cref{thm:EMST}.
However, we will stop the algorithm after $\ell$ iterations with a nonconstant choice of $s_\ell$ instead of $s_\ell=1$.
Once we have computed $\LE(R_\ell^*)$, we will switch to a different strategy
to finish the computation of $\LE(P^*)$, as described below.

First, we construct the conflict lists of $\VDLE(R_\ell^*)$ by \cref{lem:all_conflicts}
in $O(n + \Dwrong\log n)$ expected time.

Call a cell \DEF{exceptional} if it has conflict list size more than $b s_\ell$ for some parameter $b$ to be set later.
We directly construct the lower envelope $\LE(P_\tau^*)\cap\tau$ in $O(|P_\tau^*|\log|P_\tau^*|)$ time 
for all exceptional cells $\tau\in\VDLE(R_\ell^*)$.
By Chazelle and Friedman's ``Exponential Decay Lemma''~\cite{ChazelleF90},
the expected number of cells with conflict list size more than $is_\ell$ is $O(2^{-\Omega(i)}\cdot n/s_\ell)$,
and thus the expected total cost of this step is at most
$O(\sum_{i\ge b} 2^{-\Omega(i)}\cdot (n/s_\ell)\cdot (is_\ell)\log(is_\ell))
= O(2^{-\Omega(b)}\cdot n \log s_\ell)$.

We compute a $t$-division of the dual planar graph of the $xy$-projection of $\VDLE(R_\ell^*)$ in $O(n)$ time
for some parameter $t$.
This yields $O(n/(s_\ell t))$ regions, which we will call \DEF{supercells},
each of which consists of $O(t)$ cells in $\VDLE(R_\ell^*)$ and has $O(\sqrt{t})$ boundary walls.

We compute $\LE(P^*_\beta)\cap\beta$ for each boundary wall $\beta$.
This costs $O(n/(s_\ell t) \cdot \sqrt{t}\cdot (bs_\ell)\log(bs_\ell))$ time
(since exceptional cells have already been taken care of).

The vertices of $\LE(P^*)$ at its boundary walls map to chains of 
edges in $\DT(P)$ when returning to $\RR^2$.  (Some past output-sensitive convex hull algorithms
also used these kinds of chains for divide-and-conquer~\cite{EdelsbrunnerS91,ChanSY97,AmatoGR94}, but we will use them in a different way.)
These edges divide $\DT(P)$ into regions, where each region is associated to one supercell
(although one supercell could be associated with multiple regions).
We  can remove ``spikes'' (i.e., an edge where both sides bound the same region). 
For each such region, we check whether all its boundary edges appear in $G$,
and if so, check whether all triangles in $G$ inside the region are locally Delaunay
(i.e., for every neighboring pair of triangles $\triangle p_1p_2p_3$ and $\triangle p_1p_2p_4$ in $G$ inside the region,
we have $p_4$ outside the circumcenter through $p_1,p_2,p_3$).
If the checks are true, then we know that all triangles in $G$ inside the region are correct
Delaunay triangles,\footnote{
If a triangulation $G$ satisfies the local Delaunay property inside a region $\gamma$, and all the boundary edges of $\gamma$ are correct Delaunay edges, then the part of $G$ inside $\gamma$ are correct Delaunay triangles.
This fact follows immediately, for example, from known properties on constrained Delaunay triangulations~\cite{LeeL86}, with $\partial\gamma$ as constrained edges.
} and will mark the region as such.  All these checks take linear total time.

For each supercell $\gamma$, if its corresponding regions are marked correct, then we have
all the vertices in $\LE(P^*)\cap\gamma$ for free, and we say that the supercell $\gamma$ is \DEF{good}.
Otherwise, the supercell $\gamma$ is \DEF{bad}, and we just
compute the vertices in $\LE(P^*)\cap\gamma$
by computing $\LE(P_\tau^*)\cap\tau$ for each cell $\tau$ of $\gamma$,
in time $O(t\cdot (bs_\ell)\log(bs_\ell))$ (again, exceptional cells have already
been taken care of).

Observe that if the Delaunay triangles corresponding to all vertices in $\LE(P^*)\cap\gamma$ are
all good, then the supercell $\gamma$ must be good.
Thus, each bad supercell contains at least one vertex in $\LE(P^*)$ whose Delaunay triangle is bad.
So, the number of bad supercells is at most $O(\Dwrong)$.  

We conclude that from $\LE(P_\ell^*)$, we can compute $\LE(P^*)$ in expected time
\[  O(  n + \Dwrong\log n + n/(s_\ell t) \cdot \sqrt{t}\cdot (bs_\ell)\log(bs_\ell) + \Dwrong\cdot t\cdot (bs_\ell)\log(bs_\ell) + 2^{-\Omega(b)} \cdot n\log s_\ell ).
\]
For example, we can choose $b=s_\ell$ and $t=s_\ell^4$ and obtain the time bound
$O(n + \Dwrong\log n + \Dwrong s_\ell^{O(1)})$.

Recall that the algorithm from the proof of Theorem~\ref{thm:EMST} has expected time bound
    $O(\ell (n + \Dwrong \log n) + \sum_{i=1}^{\ell-1}  (n/s_{i+1})\log s_{i})$.
    Setting $s_1=n$, $s_2=\log n$, $s_3=\log\log n$ with $\ell=3$ yields
    overall time bound $O(n + \Dwrong \log n + \Dwrong(\log\log n)^{O(1)})=O(n+\Dwrong\log n)$.
\end{proof}

The same result holds also when the given graph $G$ is not a triangulation but just a PSLG (possibly disconnected),
with missing edges.

\begin{corollary}\label{cor:rand:opt}
	Given a set $P$ of $n$ points in the plane and a PSLG $G$ with vertex set $P$, 
	the Delaunay triangulation $\DT(P)$ can be constructed in $O(n + \Dwrong\log n)$ expected time, 
	where $\Dwrong$ is redefined as the (unknown) number of edges in $\DT(G)$ that are not in $G$.
\end{corollary}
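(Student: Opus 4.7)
The plan is to reduce to Theorem~\ref{thm:rand:opt} by extending the PSLG $G$ to a triangulation $G^+$ of $P$ with $G\subseteq G^+$. For any such extension, the ``wrong-edge'' parameter that Theorem~\ref{thm:rand:opt} consumes is still bounded by $\Dwrong$: since $G\subseteq G^+$ we have $|\DT(P)\setminus G^+|\le|\DT(P)\setminus G|=\Dwrong$, and since $G^+$ and $\DT(P)$ are both triangulations of the same point set (and hence contain the same number of edges), $|G^+\setminus \DT(P)|=|\DT(P)\setminus G^+|\le \Dwrong$ as well. Invoking Theorem~\ref{thm:rand:opt} on $G^+$ therefore produces $\DT(P)$ in the desired $O(n+\Dwrong\log n)$ expected time.

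What remains is to construct $G^+$ within the time budget. I would first compute the face structure of $G$ by a standard $O(n)$-time planar graph traversal. When $G$ is connected, each bounded face is a simple polygon, and Chazelle's linear-time polygon triangulation algorithm triangulates every face in $O(n)$ total time, directly yielding $G^+$. When $G$ is disconnected, some faces may contain holes; I would first insert a few bridge edges joining distinct connected components that share a common face, after which each face becomes a simple polygon and the previous step completes the construction.

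The delicate part, and the main obstacle, is performing the bridging step in the disconnected case within the linear-time budget, since a naive sweep-line already costs $\Theta(n\log n)$. One clean workaround is to adapt the algorithm of Theorem~\ref{thm:rand:opt} so that it consumes a PSLG directly: the per-triangle local-Delaunay check inside each supercell is replaced by a per-face check that verifies that the portion of the Delaunay triangulation inside each region agrees with the face structure of $G$; the rest of the proof carries through essentially unchanged and avoids constructing $G^+$ explicitly. Alternatively, since $|G\cap \DT(P)|\ge|\DT(P)|-\Dwrong$, in the interesting regime $\Dwrong=o(n/\log n)$ the PSLG $G$ already contains almost all Delaunay edges, so the number of connected components of $G$ is tightly controlled and a cheaper bridging suffices; while in the complementary regime $\Dwrong=\Omega(n/\log n)$ the additional cost of a generic sweep-line bridging is absorbed into $O(\Dwrong\log n)$.
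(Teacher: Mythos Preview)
Your reduction to Theorem~\ref{thm:rand:opt} via an extension $G^+\supseteq G$ is exactly the paper's plan, and your bound $|G^+\setminus\DT(P)|\le\Dwrong$ is correct. What you are missing is the observation that turns the extension step into a one-liner and makes all of your workarounds unnecessary: since every triangulation of $P$ has the same number $m$ of edges, the number of edges one must \emph{add} to $G$ to reach any triangulation is $m-|E(G)|\le m-|E(G)\cap E(\DT(P))|=\Dwrong$. A short Euler-formula count then shows that the non-triangular faces of $G$, together with any holes and isolated vertices they contain, have total combinatorial complexity $O(\Dwrong)$. Hence one can scan $G$ once in $O(n)$ time to locate this non-triangular part and triangulate it with any off-the-shelf $O(m\log m)$ routine in $O(\Dwrong\log\Dwrong)$ time---no Chazelle, no bridging, no case split.

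Without that observation, your fallback arguments do not close. In your case analysis, the ``large'' regime $\Dwrong=\Omega(n/\log n)$ does \emph{not} absorb a $\Theta(n\log n)$ sweep: $n\log n=O(\Dwrong\log n)$ requires $\Dwrong=\Omega(n)$, so the range $n/\log n\lesssim\Dwrong\ll n$ falls through the crack. In the ``small'' regime, noting that $G$ has $O(\Dwrong)$ connected components is fine, but ``a cheaper bridging suffices'' is not yet an algorithm; you have not said how to place the bridges within the budget. The ``adapt Theorem~\ref{thm:rand:opt} to consume a PSLG directly'' option is plausible in spirit but would need a real argument for the per-region local check.
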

\begin{proof}
The number of missing edges is at most $\Dwrong$, and one can easily first triangulate the non-triangle faces of $G$ in $O(\Dwrong\log \Dwrong)$ time and then run our algorithm in Theorem~\ref{thm:rand:opt}. 
\end{proof}

% where $\Dwrong$ is redefined as the number of edges in $\DT(G)$ that are not in $G$.  In this case,

%Similarly, in the case of the Euclidean minimum spanning tree, a given forest $T$ can be turned first into a spanning tree by adding $O(k)$ missing edges arbitrarily (possibly introducing crossings).

%It is possible to generalize both our deterministic and randomized algorithms for Delaunay triangulations in $\RR^2$
%to lower/upper hulls of points in convex position in $\RR^3$.

%%%%%%%%%%%%%%%%%%%%%%%%%%%%%%%%%%%%%%%%%%%%

\section{An $O(n\log^*n + n\log\dCross)$ algorithm}\label{sec:cross}

In this section, we present an algorithm for computing the Delaunay triangulation which is
sensitive to the parameter $\dCross$.  The algorithm uses planar graph separators or $r$-divisions in
the input graph $G$, like in the deterministic algorithm in \cref{sec:DT:det}, but the divide-and-conquer is
different.  The difficulty is that even when $\dCross$ is small, the number $\Dwrong$ of wrong edges
could be large. Our idea is to recursively compute the correct Delaunay triangulation of each region, and
combine them to obtain a new triangulation where the number of wrong edges is smaller ($O(n/\polylog n)$);
afterwards, we can apply our earlier $\Dwrong$-sensitive algorithm to fix this triangulation.  The depth of the recursion
is $O(\log^*n)$ (but this is not related to the $\log^*n$ factor in the randomized algorithm in \cref{sec:rand:logstar}).

We begin with known subroutines on ray shooting:

\begin{lemma}\label{lem:ray:shoot}
Given $h$ disjoint simple polygons with a total of $n$ vertices in the plane, we can build a data structure in $O(n + h^2\log^{O(1)} h)$ time so that a ray shooting query can be answered in $O(\log n)$ time.

More generally, given a parameter $\delta\in[0,1]$, we can build a data structure in $O(n + h^{2-\delta}\log^{O(1)} h)$ time so that a ray shooting query can be answered in $O(h^\delta\log n)$ time.
\end{lemma}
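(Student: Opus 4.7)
The plan is to combine a fine ray-shooting structure for each individual polygon with a coarse ray-shooting structure that treats each polygon as a constant-complexity object, and to interleave the two levels at query time.

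First, I would preprocess each polygon $P_i$ with $n_i$ vertices for ray shooting against its own boundary in $O(n_i)$ time with $O(\log n_i)=O(\log n)$ query, using Chazelle's linear-time polygon triangulation together with the Hershberger--Suri ray-shooting data structure for simple polygons, applied both to the interior of $P_i$ and to the hole-complement of $P_i$ inside a global bounding box so that a fine query resolves hits from either side of $\partial P_i$. Summed over $i$, this fine-level preprocessing costs $O(n)$.

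Second, I would replace each $P_i$ by an $O(1)$-complexity surrogate (for instance, a bounding triangle or a handful of convex-hull tangent chords) and build a coarse ray-shooting structure on the resulting $O(h)$ surrogate edges, augmented with common tangents between pairs of surrogates so that the structure captures the visibility relations between the polygons. For the base case $\delta=0$ I would compute the full vertical decomposition of this augmented arrangement together with a persistent ray-shooting oracle, all in $O(h^2\log^{O(1)} h)$ time, yielding $O(\log h)$-time ray shooting against the surrogates. For general $\delta\in[0,1]$ I would instead build a hierarchical $(1/r)$-cutting of the surrogate edges with $r=\Theta(h^{1-\delta})$, producing a multi-level partition tree whose preprocessing is $O(h^{2-\delta}\log^{O(1)} h)$ and whose ray-shooting-against-surrogates queries cost $O(h^\delta\log h)$.

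At query time, I would walk the ray outward through the coarse structure and, for each surrogate the ray enters in along-ray order, invoke the fine structure of the corresponding polygon to confirm or reject an actual hit, terminating on the first confirmed hit. The main obstacle is that the ray can in principle cross $\Theta(h)$ surrogates without an actual polygon hit, which would inflate the fine-level cost to $\Theta(h\log n)$. My plan to control this is to exploit the tangent-augmented coarse structure so that, in the $\delta=0$ case, each coarse query returns directly a surrogate that is \emph{guaranteed} to be pierced along the ray (reducing the number of fine invocations to $O(1)$), and in the general case to amortize fine tests against the $O(h^\delta)$ cells visited along the ray by the cutting hierarchy. This delicate ``filtering'' step is the technical heart of the proof, but once it is in place the two levels combine to give the claimed preprocessing and query bounds, namely $O(n+h^2\log^{O(1)} h)$ time with $O(\log n)$ query and, more generally, $O(n+h^{2-\delta}\log^{O(1)} h)$ time with $O(h^\delta\log n)$ query.
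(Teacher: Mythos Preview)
Your proposal has a genuine gap at exactly the point you yourself flag as ``the technical heart of the proof.'' Replacing each polygon by an $O(1)$-size surrogate and then ``walking'' the ray through a coarse arrangement does not work without a mechanism that, given a ray, identifies in $O(\log n)$ time a polygon that the ray \emph{actually} hits (not merely a surrogate it enters). Your suggestion of ``tangent-augmenting'' the surrogates is too vague: tangents between bounding triangles say nothing about where the real boundaries are, and if you mean common tangents between the actual polygons' convex hulls, you are no longer working with $O(1)$-complexity objects and must justify both the preprocessing bound and, crucially, why the resulting structure lets you avoid $\Theta(h)$ fine-level probes. That justification is precisely the content of Chen and Wang's result~\cite{ChenW15}; you have not supplied it, only asserted that ``once it is in place'' the bounds follow.

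The paper's proof sidesteps all of this. The first part is simply a citation of Chen and Wang. The second part then follows by a one-line decomposability argument: partition the $h$ polygons arbitrarily into $h^\delta$ groups of $h^{1-\delta}$ polygons each, build the first-part structure on each group (total preprocessing $O(n + h^\delta\cdot (h^{1-\delta})^2\log^{O(1)}h)=O(n+h^{2-\delta}\log^{O(1)}h)$), and answer a query by shooting in each group and returning the nearest hit (query time $O(h^\delta\log n)$). No cuttings, no surrogates, no filtering are needed for the tradeoff once the base case is in hand. You missed this simple reduction and instead tried to redo the hard part from scratch.
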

\begin{proof}
The first part is due to Chen and Wang~\cite{ChenW15}.  The second part follows by dividing the input into $h^\delta$ subcollection of $h^{1-\delta}$ polygons each, and applying the data structure to each subcollection: the preprocessing time becomes $O(n + h^\delta\cdot (h^{1-\delta})^2\log^{O(1)} h)$ and the query time becomes $O(h^\delta\log n)$ (since
ray shooting is a ``decomposable'' problem---knowing the answers to a query for each of the $h^\delta$ subcollections, we would know the overall answer).
\end{proof}

\begin{lemma}\label{lem:intersect}
Given a connected PSLG $G$ with $n$ vertices
and a set $Q$ of line segments, we can compute the subset $A$ of all edges of $G$ that
 cross $Q$, in $O(n+(|Q|+|A|)^{2-\delta}\log^{O(1)}n)$ time for a sufficiently small constant $\delta>0$.
\end{lemma}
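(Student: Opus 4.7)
The plan is to compute $A$ via a guided traversal of $G$ that uses ray shooting both inside the faces of $G$ and among the segments of $Q$, combining the two so as to avoid enumerating every individual crossing.

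First, I would preprocess $G$ in $O(n)$ time. Since $G$ is connected, each of its bounded faces is bounded by a single (weakly simple) closed walk and so can be triangulated in linear time by Chazelle's algorithm on simple polygons. On top of this triangulation I would build a Kirkpatrick-style point-location hierarchy, supporting $O(\log n)$-time location of any point among the faces of $G$, together with linear-preprocessing/$O(\log n)$-query ray-shooting structures within each individual face (summing to $O(n)$ over all faces).

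Second, I would apply \cref{lem:ray:shoot} with parameter $\delta$ to the set $Q$, viewing each $q\in Q$ as a degenerate two-vertex polygon (a straightforward adaptation of Chen--Wang). This gives ray shooting among the segments of $Q$ with $O(|Q|^{2-\delta}\log^{O(1)}|Q|)$ preprocessing and $O(|Q|^\delta\log n)$ query time. Since the target bound involves $|A|$ as well as $|Q|$, I would periodically rebuild this structure to also include the edges of $A$ discovered so far, using a standard geometric-doubling schedule so that the total rebuild cost remains $O((|Q|+|A|)^{2-\delta}\log^{O(1)}n)$.

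Third, the main traversal processes each $q\in Q$ in turn, tracing $q$ through $G$ and reporting each edge of $G$ it crosses. An intra-face ray shot in $G$ yields the next $G$-edge crossed by $q$; a complementary ray shot in the $Q$-structure lets us \emph{skip} stretches of $q$ that would only re-discover edges already in $A$. Because each edge of $G$ enters $A$ at most once, the number of ``productive'' ray shots inside $G$ is $O(|A|)$, and the number of ``pruning'' shots among the $Q$-segments is $O(|Q|+|A|)$ when amortized over the traversals of all segments. Multiplied by the per-query cost $O((|Q|+|A|)^\delta\log n)$, this matches the claimed bound.

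The hard part is the amortization that keeps the total ray-shot count near-linear in $|Q|+|A|$ rather than the naive $\Omega(|Q|\cdot|A|)$ one would get from a plain per-segment trace of $q$ through $G$. This is exactly where one must orchestrate the two ray-shooting structures so that every unproductive crossing (an edge of $A$ being re-encountered by a different $q$) is \emph{jumped over} via one $Q$-side query rather than walked one crossing at a time; the geometric-doubling rebuild of the $Q$-side structure is what lets $|A|$, which is unknown at the outset, appear in the final bound.
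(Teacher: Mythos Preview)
Your proposal has a genuine gap in the part you yourself flag as ``the hard part'': the skipping mechanism is never actually specified, and the claimed amortization does not follow from what you describe.  The total number of incidences between segments of $Q$ and edges of $A$ can be $\Theta(|Q|\cdot|A|)$ (think of a grid).  A ray shot in a structure on $Q\cup A_{\text{curr}}$ only returns the \emph{next} crossing along the ray; it does not let you leapfrog a block of already-discovered $A$-edges in one step, because between any two consecutive $A_{\text{curr}}$-edges along $q$ there may or may not be a new $G$-edge, and you have no way to tell without a $G$-side ray shot.  Thus your ``pruning'' shots are still one per unproductive crossing, giving $\Theta(|Q|\cdot|A|)$ queries in the worst case, not $O(|Q|+|A|)$.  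If instead you intended to build the ray-shooting structure on $G\setminus A_{\text{curr}}$ (so that a single shot does skip all discovered edges), the geometric-doubling rebuilds each cost $\Omega(n)$ by \cref{lem:ray:shoot}, and the $O(\log|A|)$ rebuilds blow the linear term up to $\Omega(n\log|A|)$.

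The paper takes a different route that sidesteps this issue.  It first picks a $(1/r)$-net $R\subset Q$ of size $O(r\log r)$ and finds the set $A_R\subseteq A$ of $G$-edges crossed by $R$, by naive repeated ray shooting in $G$ (cost $O((r\log r)\cdot|A|\log n)$).  It then builds the tradeoff structure of \cref{lem:ray:shoot} \emph{once} on $G\setminus A_R$, which has $O(|A|)$ components, and traces every $q\in Q$ through it.  The $(1/r)$-net property is the key: any edge of $G\setminus A_R$ crosses at most $|Q|/r$ segments of $Q$, so the total number of crossings in this second phase is $O(|A|\cdot|Q|/r)$, not $|A|\cdot|Q|$.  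Choosing $r=|Q|^{2\delta}$ balances the two phases.  The net argument is exactly what replaces the amortization you were hoping for; nothing in your outline plays that role.
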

\begin{proof}
Let $r\in [1,|Q|]$ be a parameter to be chosen later.
Let $R$ be a \DEF{$(1/r)$-net} of $Q$ of size $O(r\log r)$, i.e.,
a subset $R\subset Q$ such that every line segment crossing at least $|Q|/r$ segments of $Q$ must cross at least one segment of $R$.  By known results on $(1/r)$-nets~\cite{Matousek93}, this takes
$O(|Q|r^{O(1)})$ deterministic time (if randomization is allowed, a random sample already satisfies the property with good probability).

We preprocess $G$ in $O(n)$ time so that a ray shooting query can be answered in $O(\log n)$ time by Lemma~\ref{lem:ray:shoot} (since $G$ can be turned into a tree by introducing cuts, and a tree can be viewed as a simple polygon by traversing each edge twice).
For each segment $q\in R$, we find all edges of $G$ crossing $q$ in time $O(\log n)$ times the number of crossings, by repeated ray shooting.  The total time of this step is at most $O(n + (r\log r)\cdot |A|\cdot \log n)$.

Let $A_R$ denote the subset of all edges of $G$ that cross $R$.
We next preprocess $G-A_R$ in $O(n + |A|^{2-\delta}\log^{O(1)} n)$ time so that a ray shooting query can be answered in $O(|A|^\delta\log n)$ time by Lemma~\ref{lem:ray:shoot} (since $G-A_R$ has at most $O(|A_R|)\le O(|A|)$ components and can be viewed as $O(|A|)$ simple polygons).
For each segment $q\in Q$, we find all edges of $G-A_R$ crossing $q$ in time
$O(|A|^\delta\log n)$ times the number of crossings, by repeated ray shooting. 
Each edge in $G-A_R$ can cross at most $|Q|/r$ segments of $Q$ because $R$ is a $(1/r)$-net; thus, the total number of crossings between $G-A_R$ and $Q$ is at most 
$O(|A|\cdot |Q|/r)$.  The total time of this step is at most
$O(n + |A|^{2-\delta}\log^{O(1)} n + |A|\cdot |Q|/r\cdot |A|^\delta\log n)$.
The lemma follows by setting $r=|Q|^{2\delta}$ and choosing a sufficiently small $\delta>0$ (in our application, it is 
not important to optimize the value of $\delta$).
\end{proof}

%\begin{lemma}\label{lem:intersect}
%Given a triangulation $G$ of a set $P$ of $n$ points in the plane
%and a set $Q$ of line segments, we can compute the subset $A$ of all edges of $G$ that
% cross $Q$, in $O(n+|Q|\log n + (|Q|+|A|)|A|^{1/3+\eps})$ time for an arbitrarily small constant $\eps>0$.
%\end{lemma}
%\begin{proof}
%Initialize $A$ to $\emptyset$. Let $F$ be the collection of
% all non-triangular faces of $G-A$. We maintain $F$ in a dynamic data structure for ray shooting with $O(|F|^{1/3+\eps})$ query and update time~\cite{??}.  Initially, $F=\emptyset$.  
% 
%For each segment $s\in Q$, we do the following.
%We first identify the face of $F$ containing $s$, by a point location query in $G$ in $O(\log n)$ time (after $O(n)$-time preprocessing)~\cite{??}.
%We then find all faces of $G-A$ that cross $s$, by ``walking'' from one endpoint of $s$ to the other endpoint; whenever we pass through a non-triangular face, we use the ray shooting data structure to continue the walk.  The walk gives us all edges of $G-A$ that cross $s$.  We insert these edges to $A$, and update $F$.
%
%The number of ray shooting queries per segment $s\in Q$ is proportional to one plus the number of edges inserted to $A$.  Since each edge is inserted to $A$ at most once, the total number is $O(|Q|+|A|)$.
%Each edge insertion to $A$ causes $O(1)$ deletions/insertions to $F$.
%In particular, $|F|=O(|A|)$ at all times.  Thus, the total running time is $O(n + |Q|\log n + (|Q|+|A|)|A|^{1/3+\eps})$.
%\end{proof}

\begin{lemma}\label{lem:planar}
Every (not necessarily connected) planar graph with $n$ vertices has at most $3n-6$ edges and at least $3n-6-O(z)$ edges, where
$z$ is the number of edges appearing in non-triangular faces plus the number of isolated vertices.
\end{lemma}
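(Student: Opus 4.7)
The plan for the upper bound $m \le 3n - 6$ is classical: extend $G$ to a maximal planar graph by adding edges (always possible when $n \ge 3$), and observe that a maximal planar graph on $n$ vertices has exactly $3n-6$ edges by Euler's formula. Any reader can supply the textbook proof.

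For the lower bound, the plan is to start from Euler's formula $n - m + f = 1 + c$, where $c$ is the number of connected components of $G$ and $f$ the number of faces, together with the handshake identity $\sum_F \deg(F) = 2m$ taken over all faces. Eliminating $f$ yields the key identity
\[
3n - 6 - m \;=\; 3(c-1) \;+\; \sum_{F}\bigl(\deg(F) - 3\bigr),
\]
so it suffices to bound the right-hand side by $O(z)$. Write $z = z_e + z_v$, where $z_e$ is the number of edges lying on non-triangular faces and $z_v$ is the number of isolated vertices.

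For the face-sum, triangular faces contribute $0$, and each non-triangular face contributes at most $\deg(F)$. Since each edge has two face-sides and every edge that contributes to $\sum_{F\text{ non-triangular}}\deg(F)$ must lie on a non-triangular face, we obtain $\sum_F(\deg(F)-3) \le 2z_e$. For the term $3(c-1)$, the plan is to split $c = z_v + c''$, where $c''$ is the number of non-isolated components, and then establish the auxiliary identity $\sum_F h_F = c'' - 1$, in which $h_F$ counts the ``holes'' (extra inner boundary cycles) of face $F$. This identity follows by applying Euler's formula to each non-isolated component of $G$ separately and comparing the total with Euler's formula for $G$ as a whole. Each hole is the outer boundary of some nested non-isolated component, so it contains at least one distinct edge, and those edges necessarily lie on a non-triangular face (namely, the face containing the hole); because distinct nested components contribute edge-disjoint boundaries, we conclude $c'' - 1 \le z_e$.

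Putting everything together gives
\[
3n - 6 - m \;\le\; 3 z_v + 3(c'' - 1) + 2 z_e \;\le\; 3 z_v + 5 z_e \;\le\; 5 z,
\]
which is the desired $O(z)$ bound (with explicit constant $5$). The main subtlety is the bookkeeping for faces with multiple boundary cycles when $G$ is disconnected; the identity $\sum_F h_F = c'' - 1$ is what packages this cleanly, while isolated vertices are absorbed directly through the split $c = z_v + c''$.
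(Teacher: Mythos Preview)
Your proof is correct. The paper's own proof is a single sentence: ``By adding $O(z)$ edges, we can turn the graph into a fully triangulated (i.e., maximal planar) graph, which has exactly $3n-6$ edges.'' That is a synthetic argument---construct the completion and count what was added---whereas you take the analytic route, extracting the identity $3n-6-m = 3(c-1) + \sum_F(\deg(F)-3)$ directly from Euler's formula and bounding each piece by $O(z)$.

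The two approaches are equivalent in content (the number of edges the paper adds is precisely your $3n-6-m$), but yours is more explicit. The paper's one-liner leaves the reader to see why triangulating all non-triangular faces and absorbing all isolated vertices costs only $O(z)$ new edges, which is not entirely immediate when faces may have several boundary components. Your bookkeeping via $h_F$ and the identity $\sum_F h_F = c'' - 1$, together with the observation that each ``hole'' donates a distinct edge lying on a non-triangular face, is exactly what makes that step rigorous. The paper's version is terser and relies on intuition about polygon triangulation; yours is self-contained and even yields an explicit constant.
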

\begin{proof}
By adding $O(z)$ edges, we can turn the graph into a fully triangulated (i.e., maximal planar) graph, which has exactly $3n-6$ edges.
\end{proof}

\begin{theorem}\label{thm:cross}
Given a triangulation $G$ of a set $P$ of $n$ points in the plane,
the Delaunay triangulation $\DT(P)$ can be constructed
in $O(n\log^* n + n\log \dCross)$ deterministic time,
where $\dCross$ is the (unknown) maximum number of edges of $\DT(P)$ that an edge of $G$ may 
cross.
\end{theorem}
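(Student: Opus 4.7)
The plan is to design a recursive divide-and-conquer that takes $G$ and, in $O(\log^{*}n)$ levels of $r$-divisions applied to $G$, reduces the task to running the deterministic $\Dwrong$-sensitive algorithm of \cref{thm:det:DT} on a triangulation whose wrong-edge count has been pushed below $n/\log^{3}n$. The role of the $\dCross$ hypothesis will be confined to the merge phase, where it lets me bound the work spent repairing boundary-related edges.

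At the top level I compute an $r$-division $\Gamma$ of $G$ for $r$ chosen as a sufficiently large polylogarithmic function of $n$, yielding $O(n/r)$ regions, each with $O(r)$ vertices and $O(\sqrt{r})$ boundary vertices. For every region $\gamma$, I extract the induced sub-triangulation of $G$ on $\gamma$, patch any holes or non-triangular faces in $O(|V_\gamma|\log|V_\gamma|)$ time to obtain a valid triangulation of $V_\gamma$, and then recursively compute $\DT(V_\gamma)$ by re-invoking the same algorithm. With the schedule $r_0=n$, $r_{i+1}=\log^{O(1)}r_i$, the recursion tree has depth $O(\log^{*}n)$, and its constant-size leaves are solved directly.

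To merge the regional triangulations into a global one, I compute $\DT(V_B)$ on the set $V_B$ of all boundary vertices (of total size $O(n/\sqrt{r})$) in $O((n/\sqrt{r})\log n)=O(n)$ time, and stitch each regional $\DT(V_\gamma)$ into the portion of $\DT(V_B)$ covering $\gamma$, following the same pattern as in the proof of \cref{thm:det:DT}. An edge of the resulting triangulation $G'$ can be wrong only if a point outside $V_\gamma$ invalidates its empty-circle witness, which forces the offending edge to sit within constant graph distance of the boundary of its region. Hence the number of wrong edges in $G'$ is $O(n/\sqrt{r})$, and for $r$ a large enough power of $\log n$ this becomes $O(n/\log^{3}n)$, after which a single call to \cref{thm:det:DT} on $G'$ costs $O(n)$.

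The main obstacle is bookkeeping the $\dCross$-dependence of the merge. To identify the boundary-crossing edges of $\DT(V_\gamma)$ that are invalidated by outside points, I must enumerate, for each region $\gamma$, the edges of $G$ that cross the candidate circumdisks or the boundary $\partial\gamma$; for this I plan to use the ray-shooting data structure of \cref{lem:ray:shoot} together with the crossing-enumeration routine of \cref{lem:intersect}, coupled with a doubling search on an unknown estimate $d$ of $\dCross$ (trying $d=2,4,8,\ldots$ and aborting whenever a single $G$-edge is discovered to cross more than $d$ regional Delaunay edges). A correct guess runs in $O(n)$ work, so the total cost of this phase is $O(n\log\dCross)$, and adding the additive $O(n)$ work per recursion level over the $O(\log^{*}n)$ levels yields the claimed $O(n\log^{*}n+n\log\dCross)$ bound.
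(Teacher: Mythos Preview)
Your high-level architecture matches the paper's---recursive $r$-divisions on $G$, recursively compute $\DT(V_\gamma)$ in each region, assemble a new triangulation $G'$, and finish with \cref{thm:det:DT}---but the crucial counting step is wrong, and this breaks the analysis.

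The claim ``the offending edge sits within constant graph distance of the boundary, hence the number of wrong edges in $G'$ is $O(n/\sqrt{r})$'' is false. An edge $e\in\DT(V_\gamma)$ can fail to be in $\DT(P)$ because some point $s\notin V_\gamma$ lies in the circumcircle of a Delaunay triangle of $V_\gamma$ containing $e$; nothing prevents $e$ from lying arbitrarily deep inside $\gamma$ in graph distance. The correct bound is obtained \emph{via} $\dCross$: since $\partial\gamma$ consists of $O(\sqrt{r})$ edges of $G$, at most $O(\dCross\sqrt{r})$ edges of $\DT(P)$ cross $\partial\gamma$, so by \cref{lem:planar} at least $3|V_\gamma|-O(\dCross\sqrt{r})$ edges of $\DT(P)$ lie entirely inside $\gamma$; these are all present in $\DT(V_\gamma)$, which forces the number of wrong edges of $\DT(V_\gamma)$ inside $\gamma$ to be $O(\dCross\sqrt{r})$, not $O(\sqrt{r})$. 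Summing over regions gives $O(\dCross\, n/\sqrt{r})$ wrong edges in $G'$, and pushing this below $n/\log^{3}n$ requires $r\gtrsim(\dCross\log^{3}n)^{2}$, which you cannot set to a fixed polylog when $\dCross$ is unknown. This is exactly why the paper's recurrence carries a $\dCross$-dependent term and why a doubly-exponential guessing schedule for a parameter $b\approx\dCross$ (not a simple doubling search) is used to recover the $n\log\dCross$ bound.

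A second gap is the merge itself: $\DT(V_\gamma)$ is \emph{not} contained in $\gamma$; up to $O(\dCross\sqrt{r})$ of its edges cross $\partial\gamma$, and you must separate those (the set $X_\gamma$ in the paper) from the interior edges $Y_\gamma$ before you can glue anything. The paper uses \cref{lem:intersect} for this, at cost $O(t+(\dCross\sqrt{t})^{2-\delta}\log^{O(1)}t)$ per region, and this cost is what drives the choice $t=\max\{\log^{c}\mu,\,b^{c}\}$. Your stitching ``following the same pattern as in \cref{thm:det:DT}'' does not apply, since that proof only replaces the interior of a region when its boundary triangles already appear in $\DT(V_B)$---a condition you have no reason to expect here.
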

\begin{proof}
Consider a region $\gamma_0$, which is the union of a connected set of at most $\mu$ triangles
of $G$, with at most $c\sqrt{\mu}$ boundary edges, for a sufficiently large constant $c$.  (Initially, $\gamma_0$ is the entire convex hull of $P$.)  
Let $V_{\gamma_0}$ denote the set of vertices of all triangles in $\gamma_0$.
We describe a recursive algorithm to compute $\DT(V_{\gamma_0})$:

\begin{enumerate}
\item
First we compute, in $O(\mu)$ time, a $t$-division $\Gamma$ of the dual graph of the part of $G$ inside $\gamma_0$, for some $t$ to be chosen later. %see Figure~\ref{fig:r-division} for an example. 
Each region $\gamma$ of $\Gamma$ is the union of a connected set of at most $t$ triangles, with 
at most $c\sqrt{t}$ boundary edges.  Let $V_\gamma$ denote the set of vertices of all  triangles in $\gamma$.

\item
For each region $\gamma$, we compute $\DT(V_\gamma)$ recursively.

\item
For each region $\gamma$, let $X_\gamma$ be the set of all edges in $\DT(V_\gamma)$ that cross the boundary $\partial\gamma$, and let $Y_\gamma$ be the set of all edges in $\DT(V_\gamma)$ that are contained in $\gamma$.

We bound the number of ``wrong'' edges in $Y_\gamma$ via an indirect argument, by considering $\DT(P)$:
We know that the number of edges and triangles of $\DT(P)$ crossing $\partial\gamma$ is
at most $O(\dCross\sqrt{t})$.  
The number of edges of $\DT(P)$ contained in $\gamma$ must then be at least $3|V_\gamma|-O(\dCross\sqrt{t})$ by Lemma~\ref{lem:planar}.
%(using the fact that every (not necessarily connected) planar graph with $n$ vertices has $3n-O(z)$ edges, where $z$ is the total size of the non-triangular faces plus the number of isolated vertices).  
Since all edges of $\DT(P)$ contained in $\gamma$ are in both $Y_\gamma$ and in $\DT(V_{\gamma_0})$, 
there are at least  $3|V_\gamma|-O(\dCross\sqrt{t})$ edges of $Y_\gamma$ that are in $\DT(V_{\gamma_0})$.  This implies that there are at most $3|V_\gamma| - (3|V_\gamma|-O(\dCross\sqrt{t})) = O(\dCross\sqrt{t})$ ``wrong'' edges of $Y_\gamma$ that are not in $\DT(V_{\gamma_0})$.

In particular, we also have $|Y_\gamma| \ge 3|V_\gamma|-O(\dCross\sqrt{t})$, and $|X_\gamma| \le 3|V_\gamma|-|Y_\gamma|=O(\dCross\sqrt{t})$.

We can compute $X_\gamma$ by applying Lemma~\ref{lem:intersect} to the triangulation $\DT(V_\gamma)$ and the line segments in $\partial\gamma$, in
$O(t + (\dCross\sqrt{t})^{2-\delta}\log^{O(1)}t)$ time.
%, where $\delta>0$ is a constant that has to be chosen sufficiently small.  
%(The term $\log^{O(1)} t$ is absorbed taking $\delta$ slightly smaller.)
We can then generate $Y_\gamma$ in $O(t)$ additional time by a graph traversal.  The total running time for this step is $O(\mu/t \cdot (t + (\dCross\sqrt{t})^{2-\delta}\log^{O(1)}t)) = O(\mu + (\dCross^{2-\delta}\mu/t^{\delta/2})\log^{O(1)}t)$.
 
\item
For each region $\gamma$, we extend $Y_\gamma$ to obtain a complete triangulation $G'_\gamma$ of $\gamma$
using all the vertices in $V_\gamma$ (but no Steiner points), by triangulating the non-triangular faces in $Y_\gamma\cup\partial\gamma$ inside $\gamma$.
%, by triangulating the subregion between $Y_\gamma$ and $\partial\gamma$.
%\note{P: I don't understand this.. :(. What is the region between $Y_\gamma$ and $\partial\gamma$ ? $Y_\gamma$ is just a set of edges..}
Since $\partial\gamma$ has size $O(\sqrt{t})$ and $|Y_\gamma| \ge 3|V_\gamma|-O(\dCross\sqrt{t})$, the non-triangular
faces have complexity $O(\dCross\sqrt{t})$ and  can be triangulated in $O(\dCross\sqrt{t}\log(\dCross\sqrt{t}))$ time.  This cost is absorbed by the cost of other steps.

\item
We combine the triangulations $G'_\gamma$ for all $\gamma$ in $\Gamma$ to get a complete triangulation $G'$
of $\DT(V_{\gamma_0})$, by triangulating the non-triangular faces in $Y_{\gamma_0}\cup\partial\CH(V_{\gamma_0})$.
%pockets outside of $\partial\gamma_0$ and inside the convex hull of $V_{\gamma_0}$.  These pockets 
These non-triangular faces
have total complexity $O(\sqrt{\mu})$, and so can be triangulated in $O(\sqrt{\mu}\log\mu)$ time.

\item
Finally, apply our previous algorithm from Theorem~\ref{thm:det:DT} to obtain $\DT(V_{\gamma_0})$ from $G'$.

The total number of ``wrong'' edges of $G'$ that are not in $\DT(V_{\gamma_0})$ is
at most $O(\mu/t \cdot \dCross\sqrt{t}  + \sqrt{\mu}) = O(\dCross\mu/\sqrt{t} + \sqrt{\mu})$.  So, this step takes
$O(\mu + (\dCross\mu/\sqrt{t} + \sqrt{\mu}) \log^3\mu)$ deterministic time.
\end{enumerate}

The total time satisfies the following recurrence:
\[ T(\mu)\ =\ \max_{\substack{\mu_1,\mu_2,\ldots\le t:\\ \mu_1+\mu_2+\cdots\le \mu}} 
\left( \sum_i T(\mu_i) + O(\mu + (\dCross^{2-\delta}/t^{\delta/2} + \dCross/\sqrt{t})\mu\log^{O(1)}\mu) \right).
\]
Let $b$ be a parameter to be chosen later.
Setting $t=\max\{ \log^c \mu,\, b^c\}$ for a sufficiently large constant $c$ (depending on $\delta$)
gives 
\[ T(\mu)\ =\ \max_{\substack{\mu_1,\mu_2,\ldots\le \max\{\log^c\mu,\, b^c\}:\\ \mu_1+\mu_2+\cdots\le \mu}}
\left( \sum_i T(\mu_i) + O((1+\dCross^2/b^2)\mu) \right).
\]
Combined with the naive bound $T(\mu)\le O(\mu\log\mu)\le O(\mu\log b)$ for the base case when $\mu\le b^c$, the recurrence solves to $T(\mu)=O((1+\dCross^2/b^2)\mu\log^*\mu + \mu\log b)$.

If we know $\dCross$, we can simply set $b=\dCross$.
If not, we can ``guess'' its value using an increasing sequence.
More precisely, for $j=\lceil\log(\log^*\mu)\rceil,\lceil\log(\log^*\mu)\rceil+1,\ldots$, we set $b= 2^{2^j}$ and run the algorithm with
running time capped at $O(\mu 2^j)$.
The algorithm succeeds at the latest when $j$ reaches $\max\{\lceil\log(\log^*\mu)\rceil,\lceil\log\log \dCross\rceil\}$, and the total running time is dominated by that of the last iteration, which is $O(\mu\log^*\mu + \mu\log\dCross)$.
%
%Setting $t=\max\{ \log^c \mu,\, \dCross^c\}$ for a sufficiently large constant $c$ (that depends on $\delta$) gives 
%\[ T(\mu)\ =\ \max_{\substack{\mu_1,\mu_2,\ldots\le \max\{\log^c\mu,\, d_{\mbox{\tiny cross}}^c\}:\\ \mu_1+\mu_2+\cdots\le \mu}}
%\left( \sum_i T(\mu_i) + O(\mu) \right).
%\]
%Combined with the naive bound $T(\mu)\le O(\mu\log\mu)$ for the base case when $\mu\le \dCross^c$, the recurrence solves to $T(\mu)=O(\mu\log^*\mu + \mu\log \dCross)$.
\end{proof}

The same approach can be applied to EMSTs:

\begin{theorem}\label{thm:cross:MST}
Given a triangulation $G$ of a set $P$ of $n$ points in the plane,
the Euclidean minimum spanning tree $\EMST(P)$ and the Delaunay triangulation $\DT(P)$ can be constructed
in $O(n(\log^* n)^2 + n\log \dCrossMST)$ expected time,
where $\dCrossMST$ is the (unknown) maximum number of edges of $\EMST(P)$ that an edge of $G$ may 
cross.
\end{theorem}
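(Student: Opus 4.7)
The plan is to mirror the divide-and-conquer of Theorem~\ref{thm:cross}, but with $\EMST$ replacing $\DT$ and Corollary~\ref{cor:EMST} replacing Theorem~\ref{thm:det:DT} at the ``fix-up'' step. Since Corollary~\ref{cor:EMST} itself pays an extra $\log^*n$ factor over the deterministic $\Dwrong$-sensitive algorithm used inside the DT proof, and this factor is incurred once at each of the $O(\log^*n)$ levels of the recursion, the final bound carries $(\log^*n)^2$ instead of the single $\log^*n$ of Theorem~\ref{thm:cross}.

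For a region $\gamma_0$ with $\mu=|V_{\gamma_0}|$ vertices, the recursive procedure I would run is: compute a $t$-division $\Gamma$ of the dual graph of $G$ restricted to $\gamma_0$; recursively compute $\EMST(V_\gamma)$ for each $\gamma\in\Gamma$; take the union $U=\bigcup_{\gamma\in\Gamma}\EMST(V_\gamma)$; extract an arbitrary spanning tree $T'$ of $U$ by linear-time graph search (the union is connected because $G$ is connected and adjacent regions share boundary vertices); and finally feed $T'$ into Corollary~\ref{cor:EMST} to obtain $\EMST(V_{\gamma_0})$, with $\DT(V_{\gamma_0})$ as a by-product. Notably, no analogue of the ray-shooting Lemma~\ref{lem:intersect} is needed here, because, unlike the DT case, we do not have to separate each recursive output into ``interior'' and ``boundary-crossing'' edges---taking the union already gives a workable candidate tree.

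The key structural lemma I would establish, replacing the planarity bound of Theorem~\ref{thm:cross}, is that each $\EMST(V_\gamma)$ contains at most $O(\dCrossMST\sqrt{t})$ edges absent from $\EMST(V_{\gamma_0})$. The argument has three steps. First, since $\partial\gamma$ has $O(\sqrt{t})$ edges of $G$, at most $O(\dCrossMST\sqrt{t})$ edges of $\EMST(P)$ cross $\partial\gamma$. Second, the subgraph of $\EMST(P)$ induced on $V_\gamma$ is a forest with at most $O(\dCrossMST\sqrt{t})$ components, because each component must be connected to the rest of $\EMST(P)$ by at least one edge with only one endpoint in $V_\gamma$, and each such edge crosses $\partial\gamma$. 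Subtracting also the $O(\dCrossMST\sqrt{t})$ edges with both endpoints in $V_\gamma$ whose segment strays outside $\gamma$, at least $|V_\gamma|-O(\dCrossMST\sqrt{t})$ edges of $\EMST(P)$ lie entirely inside $\gamma$. By the MST cut property applied once to the point set $V_\gamma$ and once to $V_{\gamma_0}$, each such edge lies simultaneously in $\EMST(V_\gamma)$ and in $\EMST(V_{\gamma_0})$. Since $|\EMST(V_\gamma)|=|V_\gamma|-1$, only $O(\dCrossMST\sqrt{t})$ edges of $\EMST(V_\gamma)$ can be wrong, so the total number of wrong edges in $T'$ is $O(\dCrossMST\cdot\mu/\sqrt{t})$.

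Plugging into the recurrence
\[
T(\mu)\ =\ \sum_i T(\mu_i) + O\bigl(\mu\log^*\mu + (\dCrossMST\,\mu/\sqrt{t})\log\mu\cdot\log^*\mu\bigr),\qquad \mu_i\le t,
\]
with $t=\max\{\log^c\mu,\,b^c\}$ for a sufficiently large constant $c$, base case $T(\mu)\le O(\mu\log b)$ for $\mu\le b^c$, and a geometric doubling guess on $b$ as in Theorem~\ref{thm:cross}, solves to $T(n)=O((1+\dCrossMST^2/b^2)\,n(\log^*n)^2+n\log b)$; choosing $b=\dCrossMST$ yields the target bound. The main obstacle, and the reason this is not just a mechanical copy of Theorem~\ref{thm:cross}, is the cut-property step: unlike DT, EMST lacks the point-subset monotonicity given by the empty-circle property, so one must invoke the MST cut property twice and carefully track which edges stay correct when passing between the point sets $V_\gamma$, $V_{\gamma_0}$, and $P$.
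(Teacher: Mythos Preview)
Your approach follows the same framework as the paper's: a $t$-division recursion on $G$, recursively computing $\EMST(V_\gamma)$ in each region, then feeding a combined spanning tree into Corollary~\ref{cor:EMST}. The paper's proof is terse (``modify the algorithm in the proof of Theorem~\ref{thm:cross}, replacing $\DT$ with $\EMST$ and triangulations with spanning trees''), and your explicit invocation of the MST cycle property---an edge of $\EMST(P)$ with both endpoints in $V_\gamma$ must lie in both $\EMST(V_\gamma)$ and $\EMST(V_{\gamma_0})$---is precisely what underlies the paper's remark that ``the coefficients 3 are replaced with 1.'' Your decision to skip the ray-shooting of Lemma~\ref{lem:intersect} and simply take the union $\bigcup_\gamma \EMST(V_\gamma)$, then extract any spanning tree, is a genuine simplification over the paper's literal prescription: it works because Corollary~\ref{cor:EMST} (via Theorem~\ref{thm:EMST}) accepts spanning trees with self-crossings, whereas Theorem~\ref{thm:det:DT} in the $\DT$ case requires a planar input. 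This drops the $\dCrossMST^{2-\delta}/t^{\delta/2}$ term from the recurrence, though it does not change the final bound.

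Two small gaps to patch. First, your claim that every edge of $\EMST(P)$ with exactly one endpoint in $V_\gamma$ must cross $\partial\gamma$ fails when that endpoint is a boundary vertex of $\gamma$; but since $\EMST$ has maximum degree at most $6$ and $\gamma$ has $O(\sqrt{t})$ boundary vertices, this contributes only $O(\sqrt{t})$ extra components, absorbed into the $O(\dCrossMST\sqrt{t})$ bound. Second, ``a geometric doubling guess on $b$ as in Theorem~\ref{thm:cross}'' is not quite enough as stated: the inner algorithm (Corollary~\ref{cor:EMST}) is now Las Vegas, so a fixed time cap may be exceeded even at the correct $b$. The paper handles this via Markov's inequality---capping at $C\mu 2^j$, the probability of needing to overshoot the correct $j$ by $s$ is $O(1/C)^s$, and the resulting geometric series still sums to the stated expected bound.
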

\begin{proof}
We modify the algorithm in the proof of Theorem~\ref{thm:cross}, replacing $\DT(\cdot)$ with $\EMST(\cdot)$ and
``triangulations'' with ``spanning trees''.  All the coefficients 3 are replaced with 1.  In step~6, we invoke
Corollary~\ref{cor:EMST} instead (which causes the extra $\log^*n$ factor).  When guessing $\dCrossMST$,
if we cap the running time at $C\mu 2^j$ for a sufficiently large constant $C$, the probability that $j$ needs to
reach $\max\{\lceil\log(\log^*\mu)\rceil,\lceil\log\log \dCrossMST\rceil+s\}$ for a given $s$ is $O(1/C)^s$, 
so the expected total running time is 
\[ 
\sum_{s\ge 0} O(1/C)^s\cdot C(\mu\log^*\mu + 2^s\mu\log\dCrossMST) = O(\mu\log^*\mu + \mu\log\dCrossMST).
\]

After computing $\EMST(P)$, we can obtain $\DT(P)$ e.g.\ by Devillers' algorithm~\cite{Devillers92} in $O(n\log^*n)$ additional expected time.
\end{proof}

We can also obtain results involving multiple parameters; for example, it is not difficult to adapt the proof of Theorem~\ref{thm:cross} to get a Delaunay triangulation algorithm with running time $O(n\log^*n + \Dwrong \log \dCross)$.

%Remarks: could also get something sensitive to two parameters $\dCross$ and $\Dwrong$:  $O(n\log^*n + \Dwrong \log \dCross)$
%(or $O(n + \Dwrong\log^{(c)}n + \Dwrong\log\dCross)$)

\section{An $O(n\log\log n + n\log(1/\rho))$ algorithm in the probabilistic model}

Using our $\dCross$-sensitive algorithm, we can immediately obtain an efficient algorithm for Delaunay triangulations in the probabilistic model.

\begin{lemma}
Given a set $P$ of $n$ points in the plane and $\rho\in (0,1)$,
let $R$ be a subset of the edges of $\DT(P)$ where each edge is selected independently with
probability $\rho$.
Let $G$ be any triangulation of $P$ that contains $R$.
Let $\dCross$ be the maximum number of edges of $\DT(P)$ that an edge of $G$ may 
cross.
Then $\dCross\le O((1/\rho)\log n)$ with high probability.
\end{lemma}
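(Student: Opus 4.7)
The plan is to exploit the fact that $R \subseteq G$ and triangulations are planar: any edge of $G$ (in particular any edge of $R$) cannot be crossed by any other edge of $G$. So if $e$ is an edge of $G$ that crosses an edge $f \in \DT(P)$, then $f$ cannot belong to $R$. Equivalently, for any line segment between two points of $P$ that becomes an edge of $G$, every $\DT(P)$-edge it crosses was \emph{not} selected into $R$.

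More concretely, I would argue as follows. Fix any pair of points of $P$; the segment $e$ between them crosses some set $S_e$ of $\DT(P)$-edges, and $|S_e|$ is a deterministic quantity depending only on $P$. By independence, the probability that none of the edges in $S_e$ is chosen into $R$ is $(1-\rho)^{|S_e|} \le e^{-\rho|S_e|}$. Taking a union bound over all at most $\binom{n}{2} \le n^2$ such segments, the probability that some segment $e$ with $|S_e| \ge c(1/\rho)\log n$ avoids $R$ entirely is at most $n^2 \cdot e^{-c\log n} = n^{2-c}$, which is polynomially small for $c$ a sufficiently large constant.

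Conditioned on this high-probability event, any segment used as an edge of $G$ must, if it crosses at least $c(1/\rho)\log n$ edges of $\DT(P)$, cross at least one edge of $R$; but this is impossible since $R\subseteq G$ and no two edges of a triangulation cross. Hence every edge of $G$ crosses at most $c(1/\rho)\log n$ edges of $\DT(P)$, yielding $\dCross \le O((1/\rho)\log n)$ with high probability.

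The argument is essentially a one-shot union bound, and there is no real obstacle; the only subtle point is the ordering of quantifiers. Since $G$ may be chosen adversarially based on $R$, we cannot fix $G$ first; instead we must union-bound over \emph{all} potential segments between pairs of points of $P$ (which are determined by $P$ alone, independent of $R$), and then appeal to the planarity of $G$ plus the inclusion $R\subseteq G$ to rule out any potential edge that crosses too many $\DT(P)$-edges.
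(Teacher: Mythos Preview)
Your proof is correct and essentially identical to the paper's own argument: both observe that edges of $G$ cannot cross $R$, bound the probability that a fixed segment $pq$ crossing many $\DT(P)$-edges avoids $R$ entirely by $(1-\rho)^{(c/\rho)\ln n}\le n^{-c}$, and take a union bound over all $O(n^2)$ pairs. Your explicit discussion of the quantifier ordering (handling the adversarial choice of $G$ by union-bounding over all potential segments) is a nice touch that the paper leaves implicit.
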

\begin{proof}
This follows from a standard sampling-based hitting set or epsilon-net argument:
None of the edges of $G$ cross $R$.
For a fixed $p,q\in P$ such that $pq$ crosses more than $(c/\rho)\ln n$ edges of $\DT(P)$,
the probability that $pq$ does not cross $R$ is at most
$(1-\rho)^{(c/\rho)\ln n}\le (e^{-\rho})^{(c/\rho)\ln n} = n^{-c}$.
By a union bound, the probability that $\dCross > (c/\rho)\ln n$ is at most 
$n^{-c+2}$.
\end{proof}

\begin{corollary}
Given a set $P$ of $n$ points in the plane and $\rho\in (0,1)$,
let $R$ be a subset of the edges of $\DT(P)$ where each edge is selected independently with
probability $\rho$.
Let $G$ be any triangulation of $P$ that contains $R$.
Given $G$,
the Delaunay triangulation $\DT(P)$ can be constructed
in $O(n\log\log n + n\log (1/\rho))$ time with high probability.
\end{corollary}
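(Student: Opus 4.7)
The plan is to simply combine the preceding lemma with the $\dCross$-sensitive algorithm from Theorem~\ref{thm:cross}. The lemma tells us that under the probabilistic model, the crossing parameter $\dCross$ is bounded by $O((1/\rho)\log n)$ with high probability. Theorem~\ref{thm:cross} provides a deterministic algorithm that always correctly computes $\DT(P)$ from any triangulation $G$ of $P$ in $O(n\log^* n + n\log\dCross)$ time.

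Concretely, the procedure is to run the algorithm from Theorem~\ref{thm:cross} on the input $(P,G)$. Correctness is inherited unconditionally from that theorem since it does not depend on how $G$ arose. For the running time, plug the high-probability bound $\dCross = O((1/\rho)\log n)$ into the stated time complexity:
\[
n\log^* n + n\log\dCross \;\le\; n\log^* n + n\log\bigl((c/\rho)\log n\bigr) \;=\; n\log^* n + n\log(1/\rho) + n\log\log n + O(n),
\]
and absorb the $n\log^* n$ term into $n\log\log n$ to obtain the claimed $O(n\log\log n + n\log(1/\rho))$ bound, which holds with high probability since the only probabilistic statement used is the high-probability bound on $\dCross$.

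There is essentially no obstacle: the entire content is the observation that $\dCross$ is the right parameter to couple the probabilistic model with the deterministic $\dCross$-sensitive result, which was already foreshadowed in the introduction and in Theorem~\ref{thm:cross}. One small point worth stating explicitly in the final write-up is that the algorithm does not need to know the value of $\dCross$ in advance, since Theorem~\ref{thm:cross} already handles the guessing of $\dCross$ internally via the geometrically increasing sequence of trial values of $b$; thus the algorithm as stated requires no knowledge of $\rho$ either.
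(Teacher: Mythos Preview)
Your proposal is correct and matches the paper's own proof, which is a one-liner: ``We just apply \cref{thm:cross} in combination with the preceding lemma.'' Your additional remarks about absorbing $n\log^*n$ into $n\log\log n$ and about the algorithm not needing to know $\rho$ in advance are accurate elaborations but not required for the argument.
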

\begin{proof}
We just apply \cref{thm:cross} in combination with the preceding lemma.
\end{proof}

A similar approach works for EMST:

\begin{lemma}
Given a set $P$ of $n$ points in the plane and $\rho\in (0,1)$,
let $R$ be a subset of the edges of $\EMST(P)$ where each edge is selected independently with
probability $\rho$.
Let $T$ be any spanning tree that contains $R$ and does not have any crossings.
Let $\dCrossMST$ be the maximum number of edges of $\EMST(P)$ that an edge of $T$ may 
cross.
Then $\dCrossMST\le O((1/\rho)\log n)$ with high probability.
\end{lemma}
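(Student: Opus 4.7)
The plan is to mirror the proof of the preceding lemma almost verbatim, replacing the triangulation $G$ with the spanning tree $T$ and $\DT(P)$ with $\EMST(P)$. The key structural input that made the previous argument go through was only that $G$ was a crossing-free plane graph containing $R$; the spanning tree $T$ enjoys precisely the same property here by assumption, so the same hitting-set reasoning should apply.

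First I would observe that since $T$ is crossing-free and $R\subseteq T$, no edge of $T$ can cross any edge of $R$. Then, for each fixed ordered pair $p,q\in P$ such that the segment $pq$ crosses more than $(c/\rho)\ln n$ edges of $\EMST(P)$, I would bound the probability that $R$ fails to contain any of those crossed EMST-edges. Since each EMST-edge is put into $R$ independently with probability $\rho$, this probability is at most $(1-\rho)^{(c/\rho)\ln n}\le e^{-c\ln n}=n^{-c}$, using $1-\rho\le e^{-\rho}$.

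A union bound over the fewer than $n^2$ candidate pairs $\{p,q\}\subseteq P$ then shows that, with probability at least $1-n^{-c+2}$, every segment $pq$ crossing more than $(c/\rho)\ln n$ edges of $\EMST(P)$ must cross at least one edge of $R$. On this high-probability event, no edge of $T$ can cross more than $(c/\rho)\ln n$ edges of $\EMST(P)$, for otherwise that edge of $T$ would be forced to cross an edge of $R\subseteq T$, violating the planarity of $T$. Choosing $c$ to be a sufficiently large constant yields $\dCrossMST=O((1/\rho)\log n)$ with high probability.

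The argument is essentially without obstacles; the only point worth being careful about is that the adversary is allowed to choose $T$ after seeing $R$, so the union bound must range over all $\binom{n}{2}$ potential segments on $P$ rather than only the $n-1$ edges of the particular $T$—but since the probability bound $n^{-c}$ can absorb any polynomial factor by taking $c$ large, this costs nothing.
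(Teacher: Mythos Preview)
Your proposal is correct and is exactly the approach the paper intends: the paper states this lemma without proof, prefacing it only with ``A similar approach works for EMSTs,'' which is precisely the mirroring of the preceding $\DT$ lemma that you carried out. Your remark about the adversary choosing $T$ after seeing $R$, and hence needing the union bound over all $\binom{n}{2}$ segments, is a valid point and is implicit in the paper's version as well.
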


\begin{corollary}
Given a set $P$ of $n$ points in the plane and $\rho\in (0,1)$,
let $R$ be a subset of the edges of $\EMST(P)$ where each edge is selected independently with
probability $\rho$.
Let $T$ be any spanning tree that contains $R$ and does not have any crossings.
Given $T$,
the Euclidean minimum spanning tree $\EMST(P)$ and the Delaunay triangulation $\DT(P)$ can be constructed
in $O(n\log\log n + n\log (1/\rho))$ time with high probability.
\end{corollary}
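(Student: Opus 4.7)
The plan is to combine the preceding lemma bounding $\dCrossMST$ with Theorem~\ref{thm:cross:MST}, in direct analogy with the Delaunay corollary just above. First I would invoke the lemma to conclude that, with high probability over the random choice of $R$, the spanning tree $T$ satisfies $\dCrossMST \le c(1/\rho)\ln n$ for some constant $c$. Note that this bound holds for \emph{any} completion $T$ of $R$ to a non-crossing spanning tree of $P$, so the adversarial choice of $T$ is harmless: the probabilistic event depends only on $R$ and $\EMST(P)$.

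Next I would feed $T$ into the algorithm of Theorem~\ref{thm:cross:MST}, which constructs both $\EMST(P)$ and $\DT(P)$ in expected time $O(n(\log^* n)^2 + n\log \dCrossMST)$. The algorithm does not need to know $\dCrossMST$ in advance (guessing is built in), so we can just run it and use the high-probability bound on $\dCrossMST$ to bound its running time. Substituting $\dCrossMST = O((1/\rho)\log n)$ gives
\[
n\log\dCrossMST \;=\; O\bigl(n\log(1/\rho) + n\log\log n\bigr),
\]
and since $(\log^* n)^2 = o(\log\log n)$, the $n(\log^*n)^2$ term is absorbed into $n\log\log n$. The total is $O(n\log\log n + n\log(1/\rho))$, as claimed.

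The only subtlety to address is the interaction between the high-probability guarantee on $\dCrossMST$ and the expected running time given by Theorem~\ref{thm:cross:MST}. Because the guess-and-double procedure in that theorem caps each iteration's cost and proceeds geometrically, one can convert its expected-time guarantee (conditional on the value of $\dCrossMST$) into a high-probability bound: conditioning on the event $\dCrossMST \le c(1/\rho)\ln n$, a standard Markov argument on one call plus a constant number of independent repetitions (or simply terminating the last doubling phase at a hard deadline proportional to the target bound) suffices. I do not foresee a significant obstacle here; the result is essentially an immediate corollary, mirroring the $\DT$ case handled just previously.
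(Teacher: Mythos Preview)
Your approach is essentially the paper's, but you have skipped one step that the paper makes explicit: Theorem~\ref{thm:cross:MST} takes a \emph{triangulation} $G$ of $P$ as input, not a spanning tree. You cannot ``feed $T$ into'' that algorithm directly. The paper first extends $T$ to a triangulation $G$ of $P$ (in $O(n)$ time via Chazelle's polygon-triangulation algorithm, or $O(n\log^*n)$ expected time via Seidel's), and only then invokes Theorem~\ref{thm:cross:MST}. This is not purely cosmetic: you must also check that the bound $\dCrossMST \le O((1/\rho)\log n)$ carries over from $T$ to the larger graph $G$. It does, because the $\varepsilon$-net argument behind the lemma actually shows, with high probability, that \emph{every} segment with endpoints in $P$ that avoids crossing $R$ crosses at most $O((1/\rho)\log n)$ edges of $\EMST(P)$; since $G$ is planar and contains $R$, every edge of $G$ avoids $R$, so the bound applies to $G$ as well.

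Your remark about reconciling the expected-time guarantee of Theorem~\ref{thm:cross:MST} with the claimed high-probability bound is a legitimate point that the paper leaves implicit.
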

\begin{proof}
We can compute a triangulation $G$ of $P$ from $T$, e.g., by Chazelle's algorithm in linear time~\cite{Chazelle91}
or Seidel's algorithm in $O(n\log^*n)$ expected time~\cite{Seidel91}.
We can then apply \cref{thm:cross:MST} in combination with the preceding lemma.
\end{proof}

A similar result holds in a random flip model:

\begin{lemma}
Given a set $P$ of $n$ points in the plane and $\rho\in (0,1)$, consider the following process:
\begin{quote}
($\ast$) Initialize $G=\DT(P)$.  Let $m$ be the number of edges in $G$.
For $i=1,\ldots,(1-\rho)m$: pick a random edge $e_i$ from $G$, and if the two triangles
incident to $e_i$ in $G$ forms a convex quadrilateral, flip $e_i$ in $G$.
\end{quote}
At the end of the process, let $\dCross$ be the maximum number of edges of $\DT(P)$ that an edge of $G$ may 
cross.
Then $\dCross\le O((1/\rho)\log n)$ with high probability.
\end{lemma}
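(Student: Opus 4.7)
The plan is to mirror the template of the preceding lemma. Fix a pair $p,q\in P$ such that $pq$ crosses $k$ edges $e_1,\dots,e_k$ of $\DT(P)$, bound the probability that $pq$ lies in the final triangulation $G$, and finish with a union bound over the $O(n^2)$ pairs. Since $G$ is a (non-crossing) triangulation, $pq\in G$ forces every $e_i$ to be absent from $G$ at the end. Each $e_i$ starts in $G_1=\DT(P)$ and can only leave $G$ by being picked and then successfully flipped out, so every $e_i$ must be picked at least once during the $N=(1-\rho)m$ rounds. It therefore suffices to show that $\Pr[\text{every edge of } S \text{ is picked at some step}]\le(1-\rho)^k$ for $S=\{e_1,\dots,e_k\}$.

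The technical heart is the following identity, valid for every $T\subseteq \DT(P)$:
\[
\Pr[\text{no edge of } T \text{ is ever picked}] \;=\; \left(\tfrac{m-|T|}{m}\right)^{\!N}.
\]
I would prove this by telescoping on the step index $j$: if no edge of $T$ has been picked during steps $1,\dots,j-1$, then no edge of $T$ has been flipped out, so $T\subseteq G_j$; conditioned on this the step-$j$ pick avoids $T$ with probability exactly $(m-|T|)/m$, regardless of the specific $G_j$. The resulting formula coincides with the probability that $N$ balls thrown uniformly and independently into $m$ bins all miss a fixed set of $|T|$ bins, so the joint law of the indicators $\{e\text{ is never picked}\}_{e\in S}$ agrees with that of the corresponding empty-bin indicators in the $N$-balls/$m$-bins model.

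Invoking the standard negative association of empty-bin events (or an inclusion-exclusion expansion of the displayed identity), one gets
\[
\Pr[\text{every edge of } S \text{ is picked}] \;\le\; \bigl(1-(1-1/m)^N\bigr)^{\!k} \;\le\; (N/m)^k \;=\; (1-\rho)^k,
\]
where the last inequality uses Bernoulli's inequality $(1-1/m)^N\ge 1-N/m$. Setting $k=\lceil(c/\rho)\ln n\rceil$ gives $(1-\rho)^k\le e^{-\rho k}=n^{-c}$, and a union bound over the $O(n^2)$ choices of $\{p,q\}$ yields $\Pr[\dCross>(c/\rho)\ln n]=O(n^{2-c})$, which is polynomially small. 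The one delicate point will be the telescoping identity: a priori the law of the pick at step $j$ depends on the entire flip history, but restricting to histories in which $T$ has survived collapses the per-step transition probability to the clean $(m-|T|)/m$, and this is what makes the reduction to balls-into-bins go through.
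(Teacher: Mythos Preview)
Your argument is correct and follows the same high-level route as the paper: both observe that if $pq$ survives in the final triangulation then every Delaunay edge it crosses must have been selected at some step, and both bound that ``all $k$ are selected'' probability by $(1-\rho)^k$ via a balls-in-bins calculation. The difference is in how the balls-in-bins structure is justified. You establish it by telescoping---showing $\Pr[\text{no edge of }T\text{ ever picked}]=((m-|T|)/m)^N$ for every $T\subseteq\DT(P)$, hence the ``never picked'' indicators share the joint law of empty-bin indicators---and then invoke negative association. The paper instead assigns a persistent label in $\{1,\dots,m\}$ to each edge, carried through flips; since every $G_j$ has exactly $m$ edges, the picked label at each step is uniform and independent of the past, so the sequence of picked labels is literally an i.i.d.\ sample, and one bounds $\Pr[B\subseteq\{\ell(e_1),\dots,\ell(e_t)\}]\le\binom{t}{b}b!/m^b\le (t/m)^b$ by a direct union bound. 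The labeling trick is a bit slicker (it sidesteps both the telescoping and the negative-association appeal), but your route is equally valid and the final bound is identical.
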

\begin{proof}
Let $t=(1-\rho)m$.  For the analysis,
it is helpful to assign a distinct label $\ell(e)\in\{1,\ldots,m\}$ to each edge $e$ in $G$.
When an edge is flipped, we give the new edge the same label as the old edge, to maintain the invariant.
Then $\ell(e_1),\ldots,\ell(e_t)$ are independent, uniformly distributed random variables from $\{1,\ldots,m\}$.
Consider a fixed $p,q\in P$ such that $pq$ crosses more than $b := (c/\rho)\ln n$ edges of $\DT(P)$.
Let $B$ be the labels of a fixed subset of $b$ of these edges.
The probability that $pq$ is in the final $G$ is at most the probability
that $\{\ell(e_1),\ldots,\ell(e_t)\}$ contains $B$, which is
equal to ${t\choose b}\cdot \frac{b!}{m^b} = \frac{t!}{(t-b)!m^b} \le \frac{t^b}{m^b} = (1-\rho)^b \le e^{-\rho b} = n^{-c}$.
By a union bound, the probability that $\dCross > (c/\rho)\ln n$ is at most 
$n^{-c+2}$.
\end{proof}

\begin{corollary}
Given a set $P$ of $n$ points in the plane and $\rho\in (0,1)$,
let $G$ be a triangulation of $P$ generated by the process ($\ast$) above.
Given $G$,
the Delaunay triangulation $\DT(P)$ can be constructed
in $O(n\log\log n + n\log (1/\rho))$ time with high probability.
\end{corollary}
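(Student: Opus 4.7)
The plan is to combine the preceding lemma (which bounds $\dCross$ in the random flip model) with the $\dCross$-sensitive algorithm of \cref{thm:cross}. Since the triangulation $G$ produced by process ($\ast$) is a genuine triangulation of $P$ (flips preserve the property of being a triangulation), we may feed it directly into \cref{thm:cross}.

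First I would invoke the preceding lemma to conclude that, with high probability, the parameter $\dCross$ associated with the output $G$ of process ($\ast$) satisfies $\dCross \le O((1/\rho)\log n)$. Then I would run the deterministic algorithm of \cref{thm:cross} on $G$; the algorithm itself is always correct and has running time $O(n\log^* n + n\log\dCross)$ without needing to know $\dCross$ in advance (the ``guess-and-double'' scheme inside the proof of \cref{thm:cross} handles this).

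Substituting the high-probability bound on $\dCross$ yields total running time
\[
O(n\log^* n + n\log\dCross) \;\le\; O\!\left(n\log^* n + n\log\!\left(\tfrac{1}{\rho}\log n\right)\right) \;=\; O(n\log^* n + n\log\log n + n\log(1/\rho)),
\]
which simplifies to $O(n\log\log n + n\log(1/\rho))$ since $\log^* n = O(\log\log n)$. This holds with high probability because the only source of failure is the event that $\dCross$ exceeds $O((1/\rho)\log n)$, which occurs with probability at most $n^{-c+2}$.

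There is essentially no obstacle here beyond verifying that the hypotheses of \cref{thm:cross} and of the preceding lemma are satisfied by the output of ($\ast$); this is immediate since a flip on a triangulation produces a triangulation, and the lemma was stated precisely for triangulations generated by ($\ast$). All technical work has already been done in \cref{thm:cross} and the preceding lemma.
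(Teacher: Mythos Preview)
Your proposal is correct and follows essentially the same approach as the paper: the paper's (implicit) proof is just to apply \cref{thm:cross} in combination with the preceding lemma, exactly as was done for the earlier corollary in the first probabilistic model.
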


%%%%%%%%%%%%%%%%%%%%%%%%%%%%%%%%%%%%%%%%%%%%%%%%%%%%%%%%%%%%%%%%%%%%%%%%%
\section{An $O(n\log^*n + n\log\dVio)$ algorithm}

Using our $\dCross$-sensitive algorithm, we can also immediately obtain a $\dVio$-sensitive algorithm for Delaunay triangulations, after proving that $\dCross$ is always upper-bounded by $O(\dVio^2)$.

\begin{lemma}\label{lem:circle0}
Given a triangulation $G$ of a set $P$ of $n$ points in the plane,
let $\dVio$ be the maximum number of points of $P$ that the circumcircle of a triangle of $G$ may contain.
Consider a triangle $\triangle qq''q'''$ in $G$.
The number of edges of $\DT(P)$ incident to $q$ that cross $q''q'''$ is at most $\dVio$.
\end{lemma}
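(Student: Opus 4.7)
The plan is to prove the stronger claim: every endpoint $p$ of a Delaunay edge $qp$ that crosses segment $q''q'''$ must lie strictly inside the circumcircle $C$ of $\triangle qq''q'''$. Since, by definition of $\dVio$, at most $\dVio$ points of $P$ lie strictly inside $C$, and distinct Delaunay edges from $q$ correspond to distinct endpoints $p$, this immediately yields the claimed bound.

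To establish the stronger claim I would use the power-of-a-point function. Since $qp \in \DT(P)$, there exists an empty disk $D$ with $q$ and $p$ on its boundary, i.e., no point of $P$ lies strictly inside $D$. For any $r \in \RR^2$, let $\operatorname{pow}_{C'}(r)$ denote the power of $r$ with respect to a circle $C'$; this quantity is negative, zero, or positive according as $r$ lies inside, on, or outside $C'$. Define $f(r) := \operatorname{pow}_D(r) - \operatorname{pow}_C(r)$. Because the leading quadratic term $r_x^2 + r_y^2$ in the two powers cancels, $f$ is an affine function on $\RR^2$.

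Assume for contradiction that $p$ lies outside $C$. Then $f(q) = 0 - 0 = 0$ (both circles pass through $q$) and $f(p) = 0 - \operatorname{pow}_C(p) < 0$. By affineness, $f < 0$ on the entire open segment $qp$; in particular $f(z) < 0$ at the point $z$ where $qp$ meets $q''q'''$. Restricting $f$ to the segment $q''q'''$, which is again affine, its negativity at the interior point $z$ forces $f < 0$ at $q''$ or at $q'''$. Say $f(q'') < 0$: then $\operatorname{pow}_D(q'') < \operatorname{pow}_C(q'') = 0$, so $q''$ lies strictly inside $D$, contradicting the emptiness of $D$.

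The main obstacle is recognizing the correct linear function; once the combination $\operatorname{pow}_D - \operatorname{pow}_C$ is introduced, the rest is a short chain of sign inequalities. General position (no four cocircular points) rules out $p$ lying exactly on $C$, and $p \neq q''$, $p \neq q'''$ since these lie at endpoints of $q''q'''$ rather than in its interior, which is where $qp$ crosses. Thus every such $p$ is a distinct point strictly inside $C$, giving at most $\dVio$ such edges.
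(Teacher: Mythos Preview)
Your proof is correct and establishes exactly the same key claim as the paper: the far endpoint of any Delaunay edge from $q$ crossing $q''q'''$ must lie strictly inside the circumcircle $C$ of $\triangle qq''q'''$. The route, however, is different. The paper argues at a higher level: if $q'$ were outside $C$, then $C$ itself would witness $q''q'''$ as a Delaunay edge of the four-point set $\{q,q',q'',q'''\}$, while $qq'$ remains Delaunay there as well (being Delaunay in $P$), and two crossing Delaunay edges contradict planarity. Your argument instead works directly with the affine function $\operatorname{pow}_D-\operatorname{pow}_C$ and chases signs along the two segments. Your approach is more self-contained and makes no appeal to properties of Delaunay triangulations beyond the existence of the empty witness disk $D$; the paper's approach is a one-liner once the planarity of $\DT$ on four points is taken for granted. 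Both are equally valid and of comparable length.
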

\begin{proof}
Suppose $qq'$ is an edge of $\DT(P)$ that crosses $q''q'''$.  Then
$q$ must be inside the circumcircle of $\triangle qq''q'''$ (because otherwise, $qq'$ and $q''q'''$ would both be edges of $\DT(\{q,q',q'',q'''\})$, contradicting planarity of  Delaunay triangulations).
So, there are at most $\dVio$ choices of $q'$.
\end{proof}

\begin{lemma}\label{lem:vio}
Given a triangulation $G$ of a set $P$ of $n$ points in the plane,
let $\dCross$ be the maximum number of edges of $\DT(P)$ that an edge of $G$ may 
cross, and 
let $\dVio$ be the maximum number of points of $P$ that the circumcircle of a triangle of $G$ may contain.
Then $\dCross\le O(\dVio^2)$.
\end{lemma}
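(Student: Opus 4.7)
The plan is to bound, for each edge $uv$ of $G$, the number of Delaunay edges crossing $uv$ by $O(\dVio^2)$. Let $\triangle uvw_1$ and $\triangle uvw_2$ denote the two $G$-triangles incident to $uv$, with circumcircles $C_1$ and $C_2$; by hypothesis $|P \cap \mathrm{int}(C_i)| \le \dVio$, so $|P \cap \overline{C_i}| \le \dVio + 3$ once we include the three boundary points.

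The main step is a short angle-chasing argument: every Delaunay edge $pp'$ crossing $uv$ must have at least one endpoint in $\overline{C_1}$ and at least one endpoint in $\overline{C_2}$. To see this, take $p$ WLOG on the $w_1$-side and $p'$ on the $w_2$-side. Since $pp'$ crosses $uv$, the four points form a convex quadrilateral with $pp'$ and $uv$ as the two crossing diagonals; as $pp'$ is Delaunay in $P$ (hence in the 4-point subset), the classical angle characterization of the Delaunay diagonal gives $\angle upv + \angle up'v > \pi$. Writing $\alpha^* := \angle uw_1 v$, the inscribed angle theorem yields $p \in \overline{C_1}$ iff $\angle upv \ge \alpha^*$, and $p' \in \overline{C_1}$ iff $\angle up'v \ge \pi - \alpha^*$. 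If both failed we would get $\angle upv + \angle up'v < \pi$, contradicting Delaunay; hence at least one endpoint lies in $\overline{C_1}$. The same reasoning with $\beta^* := \angle uw_2 v$ handles $\overline{C_2}$.

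For the count, I would assign to each Delaunay edge crossing $uv$ an ordered pair $(x, y) \in (\overline{C_1} \cap P) \times (\overline{C_2} \cap P)$, where $x$ is an endpoint in $\overline{C_1}$ and $y$ is an endpoint in $\overline{C_2}$. In the non-degenerate case where $x \ne y$ the pair uniquely recovers the edge, giving at most $|\overline{C_1} \cap P| \cdot |\overline{C_2} \cap P| = O(\dVio^2)$ such edges. In the degenerate case where a single endpoint lies in $\overline{C_1} \cap \overline{C_2}$, I would separately bound the number of Delaunay edges incident to each such vertex that cross $uv$: for the boundary points $u, v, w_1, w_2$ this follows directly from \cref{lem:circle0} applied to the appropriate $G$-triangle, giving $\le \dVio$ edges each, and for other points in the intersection a generalization of \cref{lem:circle0} (whose proof goes through verbatim for any triangle $\triangle pqr$ with $p,q,r \in P$, bounding the Delaunay edges from $p$ crossing $qr$ by the number of points in the triangle's circumcircle) yields the required $O(\dVio)$ bound per such vertex. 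Since $\overline{C_1} \cap \overline{C_2} \cap P$ has size $O(\dVio)$, this case also contributes at most $O(\dVio^2)$.

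The main obstacle is the degenerate case: the clean bipartite pairing breaks down when both ``endpoint labels'' land on the same vertex, and handling it requires choosing the right auxiliary triangle (of the form $\triangle xuv$ with $x \in \overline{C_1} \cap \overline{C_2}$) so that the generalized \cref{lem:circle0} actually gives the desired $O(\dVio)$ bound; in contrast, the non-degenerate case gives $O(\dVio^2)$ almost for free from Step~1.
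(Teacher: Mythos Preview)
Your opening step---using \emph{both} circumcircles $C_1,C_2$ and the Delaunay-diagonal angle criterion to force each crossing Delaunay edge to place an endpoint in $\overline{C_1}$ and an endpoint in $\overline{C_2}$---is correct and is a genuinely different route from the paper, which works with only one circumcircle. The bipartite pairing then handles the non-degenerate case cleanly.

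The gap is in the degenerate case. For $x\in(\overline{C_1}\cap\overline{C_2}\cap P)\setminus\{u,v,w_1,w_2\}$ you invoke a generalization of \cref{lem:circle0} on the triangle $\triangle xuv$ and assert an $O(\dVio)$ bound on Delaunay edges from $x$ crossing $uv$. The generalization itself is valid---the number of such edges is indeed at most the number of points of $P$ strictly inside the circumcircle of $\triangle xuv$---but that last quantity is \emph{not} bounded by $\dVio$, because $\triangle xuv$ is not a triangle of $G$, and $\dVio$ only controls circumcircles of $G$-triangles. Concretely, when $x$ sits close to $uv$ (so $\angle uxv$ is large), the circumcircle of $\triangle xuv$ bulges far past both $\overline{C_1}$ and $\overline{C_2}$ on the opposite side of $uv$, and nothing in the hypothesis prevents many points of $P$ from lying there. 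So ``yields the required $O(\dVio)$ bound per such vertex'' is the unjustified leap.

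The paper closes exactly this gap by never leaving $G$-triangles: for each endpoint $q$ inside the single circumcircle $C$, it takes the $G$-triangle $\triangle qq''q'''$ incident to $q$ whose wedge at $q$ contains the other endpoint $q'$, applies \cref{lem:circle0} to \emph{that} triangle (a genuine $G$-triangle, so $\le\dVio$ choices of $q'$ per triangle), and then bounds the total number of such wedge-triangles over all $q$ by $O(\dVio)$ using planarity of the subgraph of $G$ induced by the points inside $C$. That planarity argument is the idea your degenerate case is missing.
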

\begin{proof}
Consider an edge $pp'$ of $G$.  Let $\triangle pp'p''$ be a triangle of $G$ incident to this edge.
Let $C$ be the circumcircle of $pp'p''$. 
Let $H$ be the subgraph of $G$ induced by the points inside $C$.

Suppose $qq'$ is an edge of $\DT(P)$ that crosses $pp'$. Then
at least one of $q$ and $q'$ must be inside $C$ (because otherwise,
$pp'$ and $qq'$ would both be edges of $\DT(\{p,p',q,q'\})$, contradicting planarity of  Delaunay triangulations).
W.l.o.g., suppose $q$ is inside $C$.
Let $\triangle qq''q'''$ be the triangle in $G$ incident to $q$ such that $q'$ lies in the wedge
between $\RAY{qq''}$ and $\RAY{qq'''}$.  See Figure~\ref{fig:circle}.
Note that $\triangle qq''q'''$ does not intersect $pp'$ (since $G$ is planar).
Thus, $qq'$ must intersect $q''q'''$.  By Lemma~\ref{lem:circle0}, there are at most $\dVio$ choices of $q'$ for each fixed $\triangle qq''q'''$.

Furthermore, we must have one of the following: (i)~$q''$ or $q'''$ is inside $C$, or (ii)~$p$ and $p'$ lies in the wedge
between $\RAY{qq''}$ and $\RAY{qq'''}$.
In case (i), there are at most $2\cdot\DEG_H(q)$ choices of $\triangle qq''q'''$ for each fixed $q$.
In case (ii), there is only one choice of $\triangle qq''q'''$ per $q$.
Thus, the total number of choices of $\triangle qq''q'''$ is $O(\sum_q 2\DEG_H(q)+1))$,
which is $O(\dVio)$ (since $H$ is a planar graph with at most $\dVio$ vertices).
The total number of choices of $qq'$ is $O(\dVio^2)$.
\end{proof}

\begin{figure}
\centering
\includegraphics[scale=0.8]{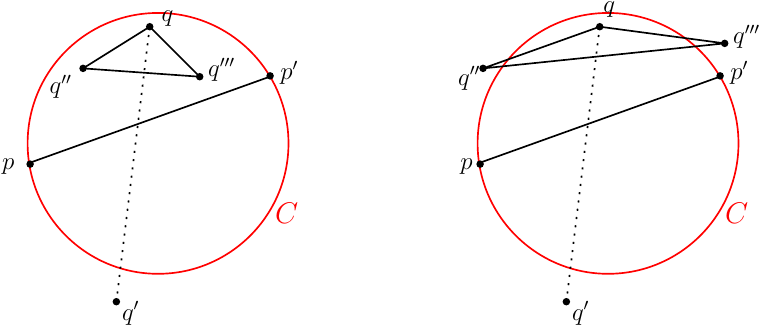}
\caption{Proof of Lemma~\ref{lem:vio}.  The left example is in case (i); the right example is in case (ii).}\label{fig:circle}
\end{figure}

\begin{corollary}
Given a triangulation $G$ of a set $P$ of $n$ points in the plane,
the Delaunay triangulation $\DT(P)$ can be constructed
in $O(n\log^* n + n\log \dVio)$ deterministic time,
where  $\dVio$ is the (unknown) maximum number of points of $P$ that the circumcircle of a triangle of $G$ may contain.
\end{corollary}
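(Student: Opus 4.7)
The plan is to combine two results already established in the excerpt: \cref{thm:cross}, which gives an $O(n\log^*n + n\log\dCross)$-time deterministic algorithm for computing $\DT(P)$ from a triangulation $G$, and \cref{lem:vio}, which asserts the geometric inequality $\dCross \le O(\dVio^2)$. Given these, the corollary should follow with essentially no further work.

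Concretely, I would simply run the algorithm from \cref{thm:cross} on the input triangulation $G$ and invoke its stated running time. Note that this algorithm does not require knowing $\dCross$ in advance (the proof of \cref{thm:cross} shows how to ``guess'' $\dCross$ via an increasing sequence of caps), so it is irrelevant that $\dVio$ is also unknown. The output is the correct $\DT(P)$, and the algorithm runs in $O(n\log^*n + n\log\dCross)$ deterministic time.

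It remains only to substitute the bound from \cref{lem:vio}. Since $\dCross \le O(\dVio^2)$, we have
\[
O(n\log^*n + n\log\dCross) \ \le\ O(n\log^*n + n\log(\dVio^2)) \ =\ O(n\log^*n + 2n\log\dVio) \ =\ O(n\log^*n + n\log\dVio),
\]
which is the claimed bound. (The edge case $\dVio=0$, meaning $G=\DT(P)$, is handled since then $\dCross=0$ as well, and the $O(n\log^*n)$ term dominates.)

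There is no real obstacle here: all the work has been done in \cref{thm:cross} (the recursive $r$-division argument, the ray-shooting subroutine, and the unknown-parameter guessing trick) and in \cref{lem:vio} (the case analysis via the circumcircle $C$ and the planar subgraph $H$ inside $C$). The only conceptual point worth remarking on is that the reduction goes through the intermediate parameter $\dCross$, which turns out to be the ``right'' combinatorial quantity to expose the uniform structure behind both the probabilistic model and the $\dVio$-sensitive setting.
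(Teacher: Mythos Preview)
Your proposal is correct and matches the paper's approach exactly: the corollary is stated immediately after \cref{lem:vio} with no proof given, precisely because it follows by plugging $\dCross\le O(\dVio^2)$ into \cref{thm:cross}.
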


Alternatively, it is easier to prove the inequality $\dCrossMST \le O(\dVio)$ (since the EMST has constant maximum degree); we can then apply \cref{thm:cross:MST} to obtain the almost same result, except with one more $\log^*n$ factor.

\section*{Funding}

This research was funded in part by the Slovenian Research and Innovation Agency 
(P1-0297, J1-2452, N1-0218, N1-0285) and in part by the European 
Union (ERC, KARST, project number 101071836).
Views and opinions expressed are
however those of the authors only and do not necessarily reflect those
of the European Union or the European Research Council.  Neither the
European Union nor the granting authority can be held responsible for
them.

\section*{Acknowledgements}

The authors would like to thank Kostas Tsakalidis for suggesting the area of algorithmic problems with predictions.

{\small
\bibliography{main,more_refs}
}

\newpage
\begin{appendix}
\markboth{Appendix}{Appendix}

\section{A deterministic $O(n+\Dwrong\log^4n)$-time algorithm}\label{app:det}

%\note{P: Actually, the algorithm runs in $O(n + D \log^4 D)$ time}
%\note{T: actually the two expressions are equivalent, since the second term can dominate only when $D \gg n^{\Omega(1)}$.
%We should be consistent throughout, so I went for $O(n+\Dwrong\log^4n)$.}

In this appendix, we describe a different deterministic $\Dwrong$-sensitive algorithm for computing the Delaunay triangulation.
Although it is a logarithmic factor slower than the algorithm in \cref{sec:DT:det}, 
the approach does not use planar graph separators and may be of independent interest (it uses certain dynamic geometric data structures, and the running time could potentially be better if those data structures could be improved).
%an advantage is that it avoids the use of planar graph separators \note{P: Why is this an advantage ?}.

It will be convenient to introduce here the more general notion of a \emph{plane straight-line graph}, shortened to \emph{PSLG}, i.e., a planar embedding of a graph where all edges are drawn as straight-line edges. Thus, a triangulation of a planar point set $P$ is a 
connected PSLG such that: (i) its vertex set is precisely $P$,
(ii) all its bounded faces are triangles, and
(iii) the unbounded face is the complement of the convex hull $\CH(P)$.
We remark that a triangle with an extra vertex inside,
possibly isolated, is not a triangular face of a PSLG.
The combinatorial complexity of a face is the number of edges plus the 
number of vertices that bound the face. Isolated vertices inside
the face also contribute to the count. 

For any three points $p,q,r$, let $C(p,q,r)$ be 
the unique circle through $p,q,r$. For a triangle $\triangle$ 
we denote by $C(\triangle)$ its circumcircle.

%Let $P$ be a set of $n$ points in the plane. 
%For simplicity, we assume henceforth that $P$ is in general position, 
%meaning that no four points of $P$ are cocircular and no three points 
%of $P$ are collinear. Considering the general case is
%not much more difficult, but this assumption simplifies the discussion.
%For any three points $p,q,r$, let $C(p,q,r)$ be 
%the unique circle through $p,q,r$. For a triangle $t$ 
%we denote by $C(t)$ its circumcircle.

%The Delaunay triangulation of $P$, denoted by $DT(P)$,
%can be defined as follows: 
%for any three distinct points $p,q,r\in P$, the triangle $\triangle(p,q,r)$
%belongs to $DT(P)$ if and only if it the circle $C(p,q,r)$ has
%no point of $P$ in its interior. From this definition it is actually
%not obvious we actually get a triangulation.
%We assume that the reader is familiar with basic properties
%of Delaunay triangulations; 
%see for example the textbook~\cite[Chapter 9]{BergCKO08}.

Let $G$ be a PSLG and let $f$ be a face of $G$.
We denote by $N_G(f)$ the set of neighbor faces of $f$ in the dual
graph of $G$. 
Therefore, a face $f'$ belongs to $N_G(f)$ if and only if 
$f$ and $f'$ share some edge. 
Let $V(N_G(f))$ be the vertex set of the faces of $N_G(f)$.
Note that if a face of $N_G(f)$ contains isolated vertices
in their interior, then those vertices also belong to $V(N_G(f))$.
See Figure~\ref{fig:DT1}.

\begin{figure}[h]\centering
	\includegraphics[scale=1,page=4]{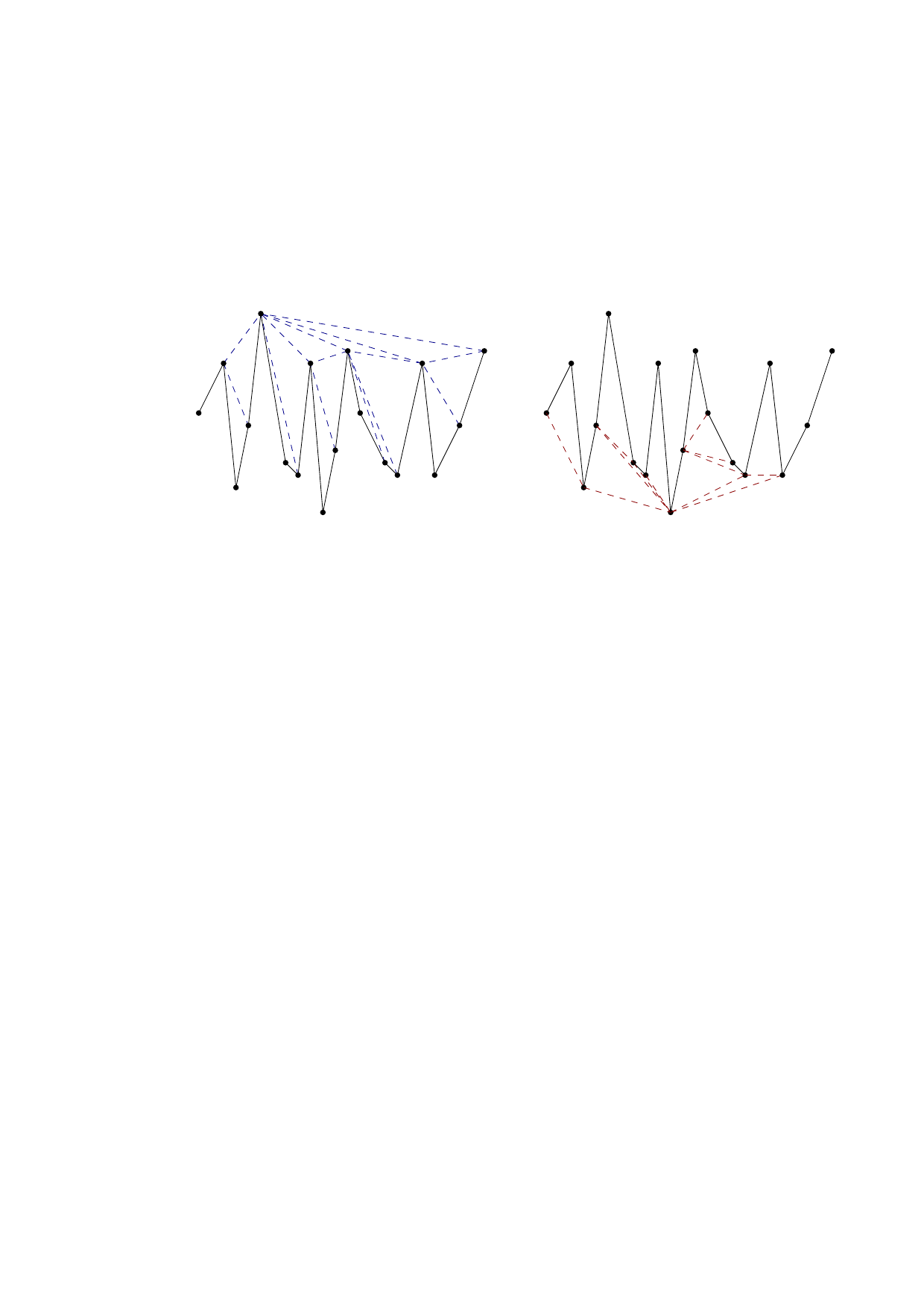}
	\caption{Left: $N_G(f)$ are marked for the face $f$, and the points
		of $V(N_G(f))$ are marked with empty squares.
		The PSLG on the left does not satisfy the conditions of 
		Lemma~\ref{lem:subgraphDT}, the one on the right does.
		To verify the assumptions of Lemma~\ref{lem:subgraphDT} 
		for the shaded triangle on the right we have to check the points marked 
		with empty squares.}
	\label{fig:DT1}
\end{figure}

The standard empty circle property of Delaunay triangulations (given in the introduction) implies that one can test whether a given triangulation is the Delaunay triangulation in linear time. 
%A standard property of $DT(P)$ is that is can be tested locally:
%a triangulation $T$ of $P$ is $DT(P)$ if and only if, for 
%each triangle $t=\triangle(p_i,p_j,p_k)$ of $T$ 
%it holds that no point $p\in V(N_T(t))$ 
%is in the interior of the circle $C(p_i,p_j,p_k)$. 
%This implies that one can test whether a given triangulation
%is the Delaunay triangulation in linear time.
We will need the following extension to identify subgraphs
of the Delaunay triangulation.
See Figure~\ref{fig:DT1}, right for the hypothesis.

\begin{lemma}
\label{lem:subgraphDT}
	Let $G$ be a PSLG with vertex set $P$ and such that 
	each edge of $G$ is on the boundary of $\CH(P)$ or 
	adjacent to a triangular face.
	Assume that, for each triangle $\triangle$ of $G$, all the 
	points of $V(N_G(\triangle))$ are outside the circumcircle $C(\triangle)$. 
	Then each edge of $G$ is an edge of the 
	Delaunay triangulation $\DT(P)$.
\end{lemma}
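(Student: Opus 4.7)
The plan is to prove the stronger claim that for every triangular face $\triangle$ of $G$, the open disk bounded by $C(\triangle)$ contains no point of $P$. Granting this, each such triangle witnesses the Delaunay property for its three edges, while edges of $G$ on $\partial\CH(P)$ are always Delaunay. By the structural hypothesis of the lemma, these two cases cover every edge of $G$, so every edge of $G$ is in $\DT(P)$.

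I prove the emptiness claim by contradiction. Among all pairs $(\triangle^*, p^*)$ where $\triangle^*$ is a triangular face of $G$ and $p^* \in P$ lies strictly inside $C(\triangle^*)$, pick one that minimizes the Euclidean distance $d(p^*, \triangle^*)$. The lemma's hypothesis gives $p^* \notin V(N_G(\triangle^*))$, and because isolated interior vertices of a face are also counted as vertices of that face, $p^*$ cannot lie anywhere in the closure of any face of $G$ adjacent to $\triangle^*$. Let $x^* \in \partial\triangle^*$ be a closest point of $\triangle^*$ to $p^*$, lying on an edge $pq$ of $\triangle^* = \triangle pqr$. The segment $[x^*, p^*]$ is contained in the closed disk bounded by $C(\triangle^*)$ (by convexity) and lies inside $\CH(P)$.

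Just past $x^*$, the segment enters the face $f'$ of $G$ across $pq$. Since $p^*$ is neither a boundary vertex nor an isolated interior vertex of $f'$, the segment must exit $f'$ through another edge $e'$ at a point $y^*$ strictly between $x^*$ and $p^*$. The edge $e'$ cannot lie on $\partial\CH(P)$ because the segment stays inside $\CH(P)$, so by the structural hypothesis, one of the two faces incident to $e'$ is a triangular face $\triangle'$ of $G$. The key geometric step is to show that $p^*$ lies strictly inside $C(\triangle')$, via the standard inscribed-angle ``flip'' property: when $\triangle' = f'$, the third vertex of $f'$ is outside $C(\triangle^*)$ on the same side of $pq$ as $p^*$, forcing the arc of $C(\triangle')$ on that side to strictly contain the corresponding arc of $C(\triangle^*)$ and hence to contain $p^*$; when $\triangle'$ is the face on the opposite side of $e'$ from $f'$, the lemma's hypothesis is instead invoked with $\triangle'$ in place of $\triangle^*$ (so that $p, q$ lie outside $C(\triangle')$) and an analogous inscribed-angle comparison yields the desired containment.

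Once $p^*$ is shown to be strictly inside $C(\triangle')$, the minimality of our chosen pair is violated: because $y^* \in \partial\triangle'$ lies strictly between $x^*$ and $p^*$ on the segment, $d(p^*, \triangle') \le |y^* - p^*| < |x^* - p^*| = d(p^*, \triangle^*)$, so $(\triangle', p^*)$ is a strictly smaller violation. The main technical obstacle will be the second subcase of the key geometric step: establishing that $p^*$ remains strictly inside the open disk $D(\triangle')$ when the triangular face $\triangle'$ is two dual steps removed from $\triangle^*$ across a non-triangular face $f'$. Here the flip-style inclusion must combine the hypothesis applied to the new triangle $\triangle'$ with the constraint that the crossing point $y^*$ belongs to both disks, and the details require careful geometric case analysis.
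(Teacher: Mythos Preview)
Your approach differs substantially from the paper's. The paper extends $G$ to an arbitrary full triangulation $T$ of $P$, runs the standard flip algorithm that converts $T$ into $\DT(P)$, and argues that no edge of $G$ is ever flipped: if $e=p_ip_j\in E(G)$ were the first such edge to become illegal, then (since no edge of $G$ has yet been flipped) the face $f$ of $G$ opposite the $G$-triangle $\triangle p_ip_jp_k$ across $e$ is still tiled by triangles of the current triangulation, so the opposite vertex $p_l$ in that triangulation lies in $V(f)\subseteq V(N_G(\triangle p_ip_jp_k))$ and hence is not inside $C(p_i,p_j,p_k)$, contradicting illegality. By piggybacking on a full triangulation, the paper never has to compare two $G$-triangles separated by a non-triangular face---which is exactly the obstacle you run into.

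Your distance-minimization route can be completed, and Case~1 is fine, but the tool you name for Case~2 (``an analogous inscribed-angle comparison'') is not the right one when $\triangle^*$ and $\triangle'$ share no edge. Use the power of a point instead. With $e'=ab$ and $\triangle'$ the triangular face across $e'$ from $f'$, set
\[
L(x)\;=\;\mathrm{pow}_{C(\triangle^*)}(x)-\mathrm{pow}_{C(\triangle')}(x),
\]
an affine function of $x$. Since $p,q\in V(f')\subseteq V(N_G(\triangle'))$ we get $L(p),L(q)\le 0$, hence $L(x^*)\le 0$; since $a,b\in V(f')\subseteq V(N_G(\triangle^*))$ we get $L(a),L(b)\ge 0$, hence $L(y^*)\ge 0$. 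Because $x^*,y^*,p^*$ are collinear with $y^*$ strictly between, affinity forces $L(p^*)\ge 0$, i.e.\ $\mathrm{pow}_{C(\triangle')}(p^*)\le \mathrm{pow}_{C(\triangle^*)}(p^*)<0$; thus $p^*$ is strictly inside $C(\triangle')$ and your minimality contradiction goes through. (The same computation in fact subsumes Case~1, where $p,q$ lie on $C(\triangle')$ and $L(x^*)=0$.) You should also dispose of the degenerate possibilities $x^*\in\{p,q\}$ and $y^*$ a vertex of $f'$; general position makes these routine.
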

\begin{proof}
	Extend $G$ to a triangulation $T$ and use the algorithm
	that converts $T$ to the Delaunay triangulation using
	edge flips to make edges legal, 
	as described in~\cite[Chapter 9]{BergCKO08}.
	We show that all the edges of $G$ remain legal
	through the whole algorithm.
	
	Assume, for the sake of reaching a contradiction, that $e=p_ip_j$
	is the first edge of $G$ that we flip because it becomes illegal.
	Let $T_e$ be the triangulation we have before the edge flip of 
	$e=p_ip_j$. It cannot be that $e$ belongs to the boundary of $\CH(P)$
	because those edges are never illegal nor flipped.
	Thus, $e$ is not on the boundary of $\CH(P)$.
	By hypothesis, there is a triangle $\triangle p_ip_jp_k$ in $G$. 
	Let $f$ be the other face of $G$ that has $p_ip_j$ on its boundary; 
	$f$ is a bounded face.

	Since no edge of $G$ has been flipped so far, the edges of $\triangle$ and $f$ 
	keep being edges in the current triangulation $T_e$. 
	This means that $f$ is the union of some triangles of $T_e$.
	For the other triangle $\triangle p_ip_jp_l \neq \triangle p_ip_jp_k$ 
	of $T_e$ having $p_ip_j$ as an edge, it must be that $p_l$ is
	a vertex of $f$. However, because no vertex of $f$ is inside
	the circle $C(p_i,p_j,p_k)$, the point $p_l$ is not in the interior
	of $C(p_i,p_j,p_k)$. This means that $e=p_ip_j$ cannot be an illegal edge 
	in $T_e$, which is a contradiction.
\end{proof}

\begin{lemma}
\label{lem:size}
	Let $G$ be a triangulation of $P$ and let $E'$ 
	be a subset of $E(G)$ with $k$ edges. 
	Remove from $G$ the edges of $E'$ and then
	remove also all edges that have bounded,
	non-triangular faces on both sides.
	In total, we remove $O(k)$ edges.
\end{lemma}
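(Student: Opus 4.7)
The plan is to use a potential-function (amortized) argument. For any PSLG $H$ obtained from $G$ by a sequence of edge deletions, I would define
\[
\Phi(H) \;=\; \sum_{\substack{f \text{ bounded}\\ f \text{ non-triangular}}} b_H(f),
\]
where $b_H(f)$ counts the edges on the boundary of $f$ with multiplicity (an edge with $f$ on both sides contributes $2$). Since every bounded face of the input triangulation $G$ is a triangle, $\Phi(G)=0$.

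First I would do a single-deletion case analysis. When the deleted edge borders two distinct bounded faces of sizes $b_1,b_2$, the merged face has size $b_1+b_2-2$; so merging tri+tri contributes $+4$ to $\Phi$ (a new non-triangular face of size $4$ is created), merging tri+non-tri contributes $+1$, and merging non-tri+non-tri contributes $-2$. Absorbing a bounded face into the unbounded face gives $\Phi$-change $\le 0$, and deleting a bridge edge (same face on both sides) decreases $b_H$ of that face by $2$, contributing $\le -2$. In every case, a single deletion raises $\Phi$ by at most $4$, so after step~$1$ (the removal of the $k$ edges of $E'$) we have $\Phi\le 4k$.

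Next I would observe that every iterative deletion in step~$2$ has bounded non-triangular faces on both sides, and therefore necessarily falls into one of the cases with $\Phi$-change $\le -2$ (a non-tri+non-tri merge, or a bridge of a non-triangular face). Since $\Phi\ge 0$ throughout, step~$2$ performs at most $\Phi/2 \le 2k$ iterations, giving a total of at most $3k=O(k)$ removed edges.

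The main subtlety I anticipate is that earlier deletions can create isolated interior vertices, making a face non-triangular even when its boundary consists of just three edges. This is easily handled by including such faces in the sum defining $\Phi$: the arithmetic of the case analysis uses only the merger formula $b_1+b_2-2$ and the fact that a triangle has $b_H=3$, not any lower bound $b_H\ge 4$ on non-triangular faces, so every estimate goes through unchanged. The one conceptually nontrivial step is simply the exhaustive deletion-by-deletion case analysis that pins down the constant $4$, whose worst case is the creation of a fresh quadrilateral by merging two triangles.
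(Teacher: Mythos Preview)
Your potential-function argument is correct and is genuinely different from the paper's proof. The paper does not use an amortized potential; instead it tracks two separate quantities after the removal of $E'$: the number of non-triangular faces and the total number of vertices lying on non-triangular faces, both of which it bounds by $O(k)$. It then splits the step-2 deletions into two phases handled by two different arguments: merges of two \emph{distinct} non-triangular faces are bounded by the face count (each merge reduces the count by one), while deletions of bridges (same face on both sides) are bounded via planarity of the subgraph on the $O(k)$ vertices of non-triangular faces. Your single potential $\Phi=\sum_f b_H(f)$ absorbs both phases uniformly, works for any interleaving of merges and bridge deletions, and even yields an explicit constant ($3k$) rather than just $O(k)$. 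The paper's route has the minor side benefit that the vertex-count bound it establishes is reused later in the appendix; as a self-contained proof of the lemma, however, your argument is at least as clean. Your anticipation of the isolated-vertex subtlety is also on point: because every bounded face in a straight-line graph has $b_H\ge 3$, and a bridge adds at least $2$ to that, the merged or post-bridge face always has $b_H\ge 4$ or retains an isolated vertex, so it never drops out of the sum and the $-2$ estimate in step~2 is safe.
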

\begin{proof}
	The removal of each edge of $E'$ can produce one additional
	non-triangular face and can increase the total number of 
	vertices on non-triangular faces by $O(1)$.
	Thus, after the removal of all the edges of $E'$ we
	have $O(k)$ non-triangular faces that have all together 
	$O(k)$ vertices.
	
	Now, we remove the edges that have bounded,
	non-triangular faces on both sides.
	See Figure~\ref{fig:DT1}, where the PSLG
	on the right is obtained from the PSLG on the left.
	If the edge has two distinct faces on its sides,
	then its removal merges two non-triangular faces,
	but it does not change the number of vertices incident
	to those faces.
	Therefore, this can happen at most $O(k)$ times
	because, if we started with $\ell$ non-triangular faces,
	this operations can be made at most $\ell-1$ times.
	After this, we have $O(k)$ faces with all together
	$O(k)$ vertices.
	
	Now, we remove the edges that have the same face on both sides.
	The removal of each of those edges does not change the 
	number of faces nor vertices
	incident to non-triangular faces, but
	decreases the number of edges incident to non-triangular faces.
	Since the non-triangular faces form a planar
	graph with $O(k)$ vertices, they have $O(k)$ edges and
	therefore these deletions can happen $O(k)$ times.
\end{proof}
		
We will use the following data structures that follow from
Chan~\cite{Chan20a} by using the usual lift to the 
paraboloid $z=x^2+y^2$.
For the second part, we just use point-plane duality in $\RR^3$.
\begin{lemma}
\label{lem:datastructure}
	We can maintain a dynamic set $P$ of $n$ points in the plane 
	under the following operations:
	\begin{itemize}
		\item insertion or deletion of a point takes $O(\log^4 n)$ amortized time;
		\item for a query circle $C$, we can detect whether
			some point of $P$ lies inside the circle $C$ in $O(\log^2 n)$ time.
	\end{itemize}
	We can maintain a dynamic set $\mathcal{C}$ of $n$ 
	circles in the plane under the following operations:
	\begin{itemize}
		\item insertion or deletion of a circle takes $O(\log^4 n)$ amortized time;
		\item for a query point $p$, we can detect whether
			some circle of $\mathcal{C}$ contains $p$ in $O(\log^2 n)$ time.
	\end{itemize}
\end{lemma}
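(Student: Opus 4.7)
The plan is to reduce both data structure problems to a single underlying dynamic three-dimensional problem, namely maintaining a point set in $\RR^3$ under insertions, deletions, and halfspace-emptiness queries (``does the current set contain a point below a given plane?''). Chan~\cite{Chan20a} provides a fully-dynamic 3D convex hull data structure that supports precisely such queries in $O(\log^2 n)$ time with $O(\log^4 n)$ amortized update time; once both parts are shown to fall into this framework, the lemma will follow by simply invoking that black box.

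For the first part, I would use the standard paraboloid lift $p=(a,b)\mapsto p^*=(a,b,a^2+b^2)$. A planar disk bounded by the circle $x^2+y^2+\alpha x+\beta y+\gamma = 0$ lifts to the set of points on the paraboloid that lie strictly below the plane $z+\alpha x+\beta y+\gamma=0$. Thus, under the lift, the dynamic point set $P$ becomes a dynamic set of points on the paraboloid in $\RR^3$, and the circle-emptiness query for a circle $C$ becomes the question: does any lifted point lie strictly below the lifted plane $C^*$? This is exactly a halfspace-emptiness query, and the answer depends only on whether the extreme point of the lower hull in the direction of $C^*$ lies below $C^*$. Plugging in Chan's data structure gives the stated bounds.

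For the second part, I would apply the same lift to the circles: each circle $C\in\mathcal{C}$ becomes a plane $C^*$ in $\RR^3$, and a query point $p$ becomes the lifted point $p^*$ on the paraboloid; the query ``is $p$ inside some circle of $\mathcal{C}$?'' translates to ``does some plane $C^*$ pass above $p^*$?''. Now apply standard point--plane duality in $\RR^3$, which sends the plane $z=Ax+By+C$ to the point $(A,B,-C)$ and vice versa, while swapping above/below. Under this duality, the dynamic set of planes becomes a dynamic set of points, and the query point $p^*$ becomes a query plane. The resulting problem is identical in form to the one solved in the first part, so the same data structure of Chan yields the claimed bounds.

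The main obstacle is essentially just bookkeeping: verifying that the algebraic lift and the $\RR^3$ duality send ``strictly inside'' to ``strictly below'' (and not, e.g., weakly), and that both transformations preserve the insertion/deletion structure one-to-one so that each update on the original object corresponds to a single update in the lifted data structure. All algorithmic weight is carried by the cited result of~\cite{Chan20a}; no additional machinery is needed.
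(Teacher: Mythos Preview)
Your proposal is correct and follows exactly the approach the paper indicates: the paraboloid lift $z=x^2+y^2$ reduces the first part to dynamic halfspace-emptiness in $\RR^3$, and point--plane duality reduces the second part to the first, both handled by Chan~\cite{Chan20a}. The paper states only these two reductions in a single sentence and leaves the details implicit, so your write-up is simply a fleshed-out version of the same argument.
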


We next show our main result. 
%In fact, we do not assume that the prediction
%is a triangulation, but an arbitrary PSLG.

\begin{theorem}
        Let $P$ be a set of $n$ points in the plane and $G$ a PSLG with vertex set $P$.
        The Delaunay triangulation $\DT(P)$ can be constructed in $O(n + \Dwrong \log^4 n)$ time, where $D$ is redefined as  the (unknown) number of edges in $\DT(P)$ that are not in $G$.  %$G$ that are not in $\DT(P)$.
\end{theorem}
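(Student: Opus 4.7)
The plan is a two-phase reduction: first extract from $G$ a subgraph $G'$ whose edges are certifiably Delaunay via \cref{lem:subgraphDT}, and then fill in the $O(\Dwrong)$ missing Delaunay edges using the dynamic point/circle structures of \cref{lem:datastructure}.

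First I would complete $G$ to a full triangulation $G_0$ of $P$ in $O(n)$ time, for example by triangulating each face of $G$ with Chazelle's linear-time polygon triangulation. Since $G_0 \supseteq G$ and any triangulation of $P$ has exactly $3n - h - 3$ edges (where $h = |\CH(P)|$), we have $|G_0| = |\DT(P)|$, and hence the number of edges of $G_0$ not in $\DT(P)$ is at most $\Dwrong$. I then test, for every triangle $\triangle$ of $G_0$, whether any point of $V(N_{G_0}(\triangle))$ lies inside $C(\triangle)$; this is a purely local test whose total cost over $G_0$ is $O(n)$ by a standard degree-sum argument on the planar dual. A triangle of $G_0$ that also appears in $\DT(P)$ automatically passes (its circumcircle is $P$-empty), so the set of ``bad'' triangles is contained in the set of non-Delaunay triangles of $G_0$ and therefore has size $O(\Dwrong)$.

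Next, I would delete from $G_0$ every edge incident to a bad triangle and then apply the cleanup of \cref{lem:size} so that in the resulting PSLG $G'$ each surviving edge either lies on $\partial\CH(P)$ or bounds a triangular face. \cref{lem:size} guarantees that only $O(\Dwrong)$ edges are removed in total, so the bounded non-triangular faces of $G'$ have combined combinatorial complexity $O(\Dwrong)$. \cref{lem:subgraphDT} now gives $E(G') \subseteq E(\DT(P))$, and the edges of $\DT(P)$ still to be produced all lie inside those non-triangular faces, numbering $O(\Dwrong)$. In the second phase I build the dynamic structure of \cref{lem:datastructure} over $P$ and process each non-triangular face $f$ of $G'$ in a gift-wrapping style: given a boundary edge $pp'$ of $f$ already known to be Delaunay, I locate the third vertex $p''$ of the Delaunay triangle incident to $pp'$ inside $f$ by enumerating candidate vertices of $f$ angularly around $p$ and invoking the empty-circle query to discard any candidate whose $C(p,p',p'')$ contains a point of $P$; the newly discovered edges $pp''$ and $p'p''$ are pushed onto the frontier and the peeling continues. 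Each newly discovered Delaunay edge is charged $O(\log^4 n)$ amortized work for the associated insertion and a constant number of circle queries, giving a second-phase cost of $O(\Dwrong \log^4 n)$ and an overall bound of $O(n + \Dwrong \log^4 n)$.

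The main obstacle I foresee is arranging the gift-wrapping subroutine so that only $O(1)$ queries are charged per discovered Delaunay triangle: although the empty-circle query itself is cheap, a naive search could inspect every vertex of a large non-triangular face to find each new apex. The fix is to pre-sort the candidate vertices of each face angularly around each boundary vertex (essentially for free from the rotational adjacency lists of $G_0$) and to implement the search for $p''$ as a combined angular advance along the frontier with a small number of empty-circle queries acting as an oracle, so that the queries made while peeling $f$ sum to $O(\text{number of new edges inside }f)$. Once this is in place the amortization matches the claimed bound, and combining the $O(n)$ preprocessing and edge-classification cost with the $O(\Dwrong \log^4 n)$ recovery cost gives the theorem.
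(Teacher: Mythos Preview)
Your proposal has two genuine gaps, one in each phase.

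In Phase~1, a single pass of the local test does \emph{not} guarantee that $G'$ satisfies the hypothesis of \cref{lem:subgraphDT}. That lemma requires, for every surviving triangle $\triangle$, that $V(N_{G'}(\triangle))$ avoid $C(\triangle)$ --- but you only checked $V(N_{G_0}(\triangle))$. After you delete the edges of the bad triangles, a surviving ``good'' triangle $\triangle$ may become adjacent to a large merged face whose vertex set now contains a point inside $C(\triangle)$; such a $\triangle$ is non-Delaunay yet slipped through your filter, so your claim $E(G')\subseteq E(\DT(P))$ is unjustified. This is precisely why the paper's proof is \emph{iterative}: whenever a deletion enlarges a non-triangular face, the already-checked triangles bordering that face must be re-tested, and this cascade is what forces the dynamic data structures. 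Crucially, the paper keeps those structures at size $O(\Dwrong)$ (only the points $Q$ lying in non-triangular faces, and only the circumcircles of triangles in $\tau$ bordering such faces), so the $O(\Dwrong)$ total updates and queries cost $O(\Dwrong\log^4 \Dwrong)$.

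In Phase~2, you propose to ``build the dynamic structure of \cref{lem:datastructure} over $P$'', but that is $n$ insertions at $O(\log^4 n)$ amortized each, i.e.\ $O(n\log^4 n)$, already exceeding the target bound; and any $o(n\log n)$ static structure for empty-circle queries over $P$ essentially presupposes $\DT(P)$. Your gift-wrapping sketch also does not explain how to find the Delaunay apex with $O(1)$ detection-only queries per new triangle (angular order around $p$ does not give the correct search order for the apex). The paper sidesteps all of this: once $G-E'$ is certified by \cref{lem:subgraphDT}, the non-triangular faces have total complexity $O(\Dwrong)$, and Chew's constrained Delaunay algorithm fills them in $O(\Dwrong\log \Dwrong)$ time --- no second use of the dynamic structures is needed.
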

\begin{proof}
	The first step is identifying in $O(n+ \Dwrong \log^4 \Dwrong)$ time a subset $E'$ 
	of edges of $G$ with $O(\Dwrong)$ edges such that 
	all edges of $G-E'$ are from the Delaunay triangulation $\DT(P)$. 
	The subset $E'$ may contain also some edges of $\DT(P)$; 
	thus, we compute a superset of the ``wrong'' edges. 
	The criteria to identify $E'$ is that $G-E'$ should satisfy 
	the assumptions of Lemma~\ref{lem:subgraphDT}.
    There are two basic ideas that are being used for this: 
	\begin{itemize}
	\item Test circumcircles of triangular faces to see whether they contain 
		some point of a neighbor face. The same triangle may have to be checked 
		multiple times because its neighbor faces may change, and we employ 
		data structures to do this efficiently.
	\item If an edge is on the boundary of two non-triangular faces,
		we have no triangle to test the edge. In this case we just add
		the edge to $E'$, but this can happen $O(\Dwrong)$ times because of 
		Lemma~\ref{lem:size}.
	\end{itemize}
	
	We now explain the details.  
	For any subgraph $G'$ of $G$, let $F_{\ge 4}(G')$ be 
	the faces of $G'$ that are non-triangular and bounded.
	Thus, the complement of $\CH(P)$ is never in $F_{\ge 4}(G')$.
	If $G$ is a triangulation, then $F_{\ge 4}(G)$ is the empty set.
	At the start, $F_{\ge 4}(G)$ has $O(\Dwrong)$ vertices; even if $G$
	would be a subgraph of $\DT(P)$, we still have to add $\Dwrong$ edges to $G$
	to obtain $\DT(P)$, and each non-triangular face with $\ell>3$ vertices
	will need $\ell-3>0$ edges to get triangulated.
	We do not maintain $F_{\ge 4}(G)$ explicitly,
	but it is useful to have a notation for it.
	
	We start with $E'=\emptyset$ and the PSLG $G'=G$; we maintain
	the invariant that $G=G'+E'$. Thus, $E'$ is the set of edges that
	are being removed.		

	The triangles of $G'$ are marked as \emph{to-be-checked} or \emph{already-checked}.
	Initially we mark all triangles of $G$ as \emph{to-be-checked}.
	(If $\CH(P)$ is a triangle, the outer face is marked as 
	\emph{already-checked} and remains so permanently.)

	Let $Q$ be the set of points that lie in the faces of $F_{\ge 4}(G')$. 	
	We maintain $Q$ in the first data structure	of Lemma~\ref{lem:datastructure}. 
	At the start $Q$ has $O(\Dwrong)$ elements.
	We also maintain the set $\tau$ of triangles that share an edge
	with a face of $F_{\ge 4}(G')$ and are marked as \emph{already-checked}. 
	For each triangle of $\tau$,
	its circumcircle is stored in the second data structure of 
	Lemma~\ref{lem:datastructure}. 
	At the start $\tau$ has no elements because all triangular faces are
	marked as \emph{to-be-checked}.
	We maintain the invariant that no point of $Q$ is contained
	in the interior of $C(\triangle)$ for any $\triangle \in \tau$.
	
	We iterate over the triangles $\triangle$ of $G'$ that are marked 
	as \emph{to-be-checked}, and for each of them proceed as follows. 
	We want to decide whether $C(\triangle)$ contains some point of $V(N_{G'}(\triangle))$.
	If some edge $e$ of $\triangle$ is shared with another
	triangle $\triangle'$ of $G'$, we test whether the vertices of $\triangle'$
	lie in interior of the circumcircle $C(\triangle)$.
	We also query the data structure storing $Q$ to see whether
	$C(\triangle)$ contains some point of $Q$. This covers 
	the case when some face $f$ adjacent to $\triangle$ is non triangular
	because $V(f)\subseteq Q$. 	
	With this we have tested whether any point from $V(N_{G'}(\triangle))$
	is contained in $C(\triangle)$; we also tested some additional
	points, namely $Q\setminus V(N_{G'}(\triangle))$.

	If we detected some point in the interior of $C(\triangle)$,
	we delete the three edges of $E(\triangle)$ immediately after 
	its the detection and it is part of the iteration of $\triangle$.
	Note that when we decide to delete $E(\triangle)$,
	we know for sure that at least one of the edges of $\triangle$ is not 
	in $DT(P)$. Thus, in this way we delete at most $3\Dwrong$ edges.

	If the edges of $\triangle$ are not to be deleted, then we 
	mark $\triangle$ as \emph{already-checked}. In this case, if $\triangle$
	is neighbor to some face of $F_{\ge 4}(G')$, we 
	add it to $\tau$ and its circumcircle $C(\triangle)$ is added to
	the corresponding data structure.
	
	Let us consider now what has to be done to delete $E(\triangle)$
	for a triangle $\triangle$. We remove from $G'$ the edges of $E(\triangle)$ 
	that are not on the boundary of $\CH(P)$ and add them $E'$.
	The triangle $\triangle$ now becomes part of a new face, 
	let us call it $f$, that belongs to $F_{\ge 4}(G')$. 
	We restore the invariants doing the following:
	\begin{itemize}
	\item We add to $Q$ the vertices $Q_{\rm new}$ of $f$ that were 
		not in $Q$ before; there are at most $O(1)$ such 
		new points added to $Q$ because, besides the vertices of $\triangle$, 
		only the neighbor triangles of $\triangle$ may contribute one 
		new point to $Q$.
	\item The face $f$ may have neighbor triangles $\tau_{\rm new}$
		marked as \emph{already-checked} that have to be added to $\tau$. 
		Again, there are $O(1)$ such new triangular faces because 
		only the triangular faces
		adjacent to $\triangle$ may have a new triangular face for $\tau$.
		Therefore, we can identify these new triangular faces $\tau_{\rm new}$
		for $\tau$ in $O(1)$ time.
		We also add the circumcircles of the triangles in
		$\tau_{\rm new}$ to the corresponding data structure.
	\item We have to check again each triangle $\triangle$ of $\tau_{\rm new}$
		because their neighbor faces changed and $Q$ contains new points 
		that may be inside $C(\triangle)$.
		For each triangle $\triangle$ of $\tau_{\rm new}$, we query the 
		data structure storing $Q$ to see whether $\triangle$ contains some point
		of $Q$ inside. If it does, we mark $\triangle$ as \emph{to-be-checked};
		it will be tested again and deleted, unless it stops being
		a triangle in $G'$ because of some other deletion.		
	\item The triangles of $\tau$ that are adjacent to $f$ have to be 
		checked again because $V(f)$ contains new points, namely $Q_{\rm new}$. 
		See Figure~\ref{fig:DT2} for an example.
		We do this using the data structure storing $C(\triangle')$ for $\triangle'\in \tau$.
		We query with each point of $Q_{\rm new}$ the data structure 
		multiple times to find 
		\[ 
			\tau' = \{ \triangle'\in \tau\mid \text{$C(\triangle')$ contains 
					some point of $Q_{\rm new}$ in the interior}\}.
		\]
		We delete the triangles of $\tau'$ from $\tau$; this is actually
		done already during the computation of $\tau'$.
		We mark the triangles of $\tau'$ as \emph{to-be-checked}. 
		Each triangle of $\tau'$ will be tested again later and deleted, 
		unless it stops being a triangle in $G'$ because of some other deletion.		
	\end{itemize}
	
	\begin{figure}\centering
	\includegraphics[scale=1.1,page=5]{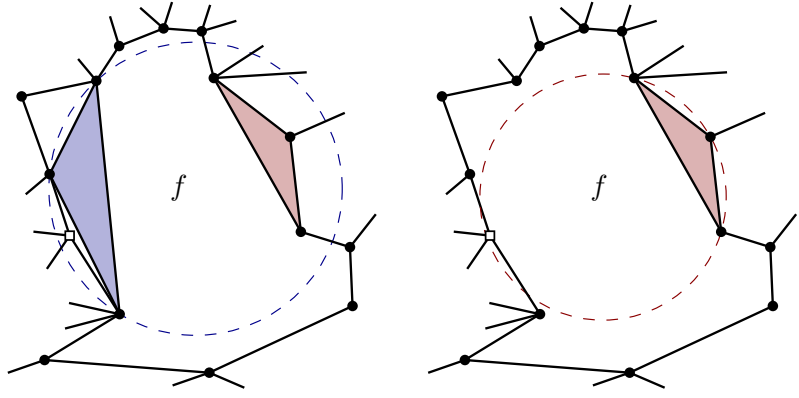}
	\caption{When we check the blue-shaded triangle $\triangle$, we decide to delete $E(\triangle)$ 
		because $C(\triangle)$ contains some point of $V(f)\subseteq Q$.
		The point marked with an empty square becomes a new point of $Q$ 
		that is in the interior of $C(\triangle')$, where $\triangle'$ is red-shaded. 
		Because of this, $\triangle'$ is marked (perhaps again) as \emph{to-be-checked}.}
	\label{fig:DT2}
	\end{figure}

	This finishes for now the description of the computation of $G'$ and $E'$. 
	For each triangle we delete, at least one of its edges is not $\DT(P)$.
	Therefore, we delete at most $3\Dwrong$ edges.
	For each triangle we delete, we add $O(1)$ new points to $Q$ and
	add $O(1)$ triangles to $\tau$. Thus, we make $O(\Dwrong)$ editions in
	the data structures of Lemma~\ref{lem:datastructure} storing $Q$ and 
	the circumcircles of triangles in $\tau$.
	In $Q$ we make no deletions, while in $\tau$ we make
	a deletion when we know that the triangle will be deleted. 
	For each triangle we delete, we mark some triangles 
	as \emph{to-be-checked}, but this happens only when we know
	that it will be deleted. Thus, we mark back at most $O(\Dwrong)$
	triangles from \emph{already-checked} to \emph{to-be-checked}.
	This shows that we process in total $O(n+\Dwrong)$ triangular	faces.

	Queries to the circumcircles $\{ C(\triangle')\mid \triangle'\in \tau\}$ are only made 
	when we delete a triangular face, and for each such deletion 
	we perform $O(1)$ queries with $Q_{\rm new}$; therefore there are 
	$O(\Dwrong)$ such queries to the circumcircles of $\tau$. 
	Queries to $Q$ are are made with circles $C(\triangle)$ for a triangle $\triangle$
	because $\triangle$ is adjacent to a face of $F_{\ge 4}(G')$. 
	Each of those triangles either was adjacent to a face of $F_{\ge 4}(G)$,
	the starting graph $G$, or it is one of the $O(1)$ triangles in $\tau_{\rm new}$ 
	that has to be considered because of a deletion of another triangle. 
	The first case happens $O(\Dwrong)$ times because all faces of $F_{\ge 4}(G)$ 
	have at most $O(\Dwrong)$ edges together. The second case happens $O(\Dwrong)$ times
	because each deletion of a triangle triggers $O(1)$ new queries, and we delete 
	$O(\Dwrong)$ triangles. In total, we are making $O(\Dwrong)$ queries to the data
	structure storing $Q$.
	We conclude, that the queries to the data structures storing $Q$ and 
	$\{ C(\triangle')\mid \triangle'\in \tau\}$ take $O(\Dwrong \log^4 \Dwrong)$ time all together.
	
	We summarize the current state: we have computed in $O(n+\Dwrong \log^4 \Dwrong)$ 
	time a set $E'$ of $O(\Dwrong)$ edges of $G$ such that, for all triangular
	faces $\triangle$ of $G'=G-E'$, the circumcircle $C(\triangle)$ contains no points 
	of $Q=V(F_{\ge 4}(G'))$ and no points of triangles adjacent to $\triangle$.
	Now, we remove from $G'$ all the edges 
	that have on both sides non-triangular faces, and add them to $E'$. 
	Since this does not change $Q=\bigcup_{f\in F_{\ge 4}(G')} V(f)$ nor 
	$\tau$, the condition 
	we had to finish is not altered: the circumcircle of 
	each triangle does not contain in its interior any vertex
	of $V(F_{\ge 4}(G'))$ nor its neighbor triangles.
	Because of Lemma~\ref{lem:size} we have deleted additionally 
	$O(3\Dwrong)=O(\Dwrong)$ edges, which are added to $E'$.
	
	We have computed a set $E'$ of $O(\Dwrong)$ edges from the input
	PSLG such that $G-E'$ satisfies the assumptions
	of Lemma~\ref{lem:subgraphDT}.
	It follows that all edges of $G-E'$ are edges $DT(P)$.
	We then compute the constrained Delaunay triangulation for 
	the faces of $G-E'$ using the algorithm of Chew~\cite{Chew89}; 
	those faces may contain some isolated points 
	or connected components inside. 
	First we spend $O(n)$ time to detect which
	faces of $G-E'$ are not triangles. 
	Since $G-E'$ has $O(\Dwrong)$ non-triangular faces
	with a total of $O(\Dwrong)$ vertices, 
	the computation of the constrained Delaunay triangulation
	for all non-triangular faces together takes $O(\Dwrong \log \Dwrong)$ time.
\end{proof}

%TC: Corollary not necessary
%Since each triangulation of a fixed point set has the same number of
%edges, we have the following consequence.
%
%\begin{corollary}
%	Assume that we have a set $P$ of $n$ points in general position in the plane
%	and we are given a triangulation $G$ of $P$.
%	Let $\Dwrong$ be the \textup{(}unknown\textup{)} number of edges of $G$ that 
%	are not in the Delaunay triangulation of $P$.
%	In $O(n + \Dwrong \log^4 n)$ time we can compute the Delaunay triangulation of $P$.
%\end{corollary}

\end{appendix}

\end{document}